\documentclass[a4paper,11pt]{article} 
  
\usepackage[english]{babel}
\usepackage[a4paper]{geometry}
\usepackage{amsmath}
\usepackage{amsthm}
\usepackage{amssymb}
\usepackage{bbm}
\usepackage{xcolor}
\usepackage{enumerate}
\usepackage{dsfont}
\usepackage{cancel}
\usepackage{ulem} 

\usepackage{hyperref}         

\usepackage{tikz}
\usetikzlibrary{fit}
\usetikzlibrary{shapes.geometric}
\usetikzlibrary{decorations.pathmorphing}

\tikzset{
point/.style={circle,fill=black,inner sep=1pt},
vertex/.style={circle,fill=black,inner sep=1.5pt},   
bvertex/.style={circle,fill=black,inner sep=2.8pt},
Bvertex/.style={circle,fill=black,inner sep=4pt}, 
specialEP/.style={rectangle,fill=white,draw,inner sep=3pt},  
whitevex/.style={circle,fill=white,draw, inner sep=2pt},
linelabel/.style={sloped,above,very near start, inner sep=1pt,execute at begin node=$\scriptstyle,execute at end node=$},
baseline=(current  bounding  box.center),doubled/.style={double distance= 1pt,line width=1.5pt},
th/.style={line width=0.5 pt, gray},  
med/.style={line width=1 pt}  
}

\xdefinecolor{myred}{rgb}{0.7,0,0.2} 
\xdefinecolor{mypurple}{rgb}{0.6,0.2,0.4} 
\xdefinecolor{myblue}{rgb}{0.259,0.2,0.6}   
\xdefinecolor{myorange}{rgb}{1,0.6,0.2}   
\definecolor{orange}{rgb}{1,0.5,0}
\xdefinecolor{purpleish}{cmyk}{0.75,0.75,0,0}

\def\bR{\mathbb{R}}

\def\bN{\mathbb{N}}

\def\bZ{\mathbb{Z}}

\def\cV{\mathcal{V}}

\def\cF{\mathcal{F}}

\def\cE{\mathcal{E}}
\def\cK{\mathcal{K}}
\def\cH{\mathcal{H}}
\def\cR{\mathcal{R}}

\def\eps{\varepsilon}
\def\ph{\varphi}

\def\indic{\hbox{\raise-2pt \hbox{\indbf 1}}}

\let\==\equiv

\let\io=\infty
\let\0=\noindent

\def\*{{\hfill\break\null\hfill\break}}

\def\tende#1{\,\vtop{\ialign{##\crcr\rightarrowfill\crcr
             \noalign{\kern-1pt\nointerlineskip}
             \hskip3.pt${\scriptstyle #1}$\hskip3.pt\crcr}}\,}
\def\otto{\,{\kern-1.truept\leftarrow\kern-5.truept\to\kern-1.truept}\,}

\newtheorem{theorem}{Theorem}[section]  

\newtheorem{proposition}[theorem]{Proposition}
\newtheorem{lemma}[theorem]{Lemma}

\numberwithin{equation}{section}

\def\be{\begin{equation}}
\def\ee{\end{equation}}

     \let\g=\gamma     \let\d=\delta     
       \let\th=\vartheta      
                          
\let\s=\sigma          \let\ph=\varphi   
   \let\o=\omega     
        \let\L=\Lambda

\definecolor{lightblue}{rgb}{0, 0.33, 0.71}

\def\aa{\mathfrak{a}}

\def \blue#1 {\textcolor{blue}{#1}}
\def \red#1 {\textcolor{red}{#1}}
\def \blue#1{\textcolor{blue}{#1}}

\DeclareFontFamily{U}{mathx}{\hyphenchar\font45}
\DeclareFontShape{U}{mathx}{m}{n}{
      <5> <6> <7> <8> <9> <10>
      <10.95> <12> <14.4> <17.28> <20.74> <24.88>
      mathx10
      }{}
\DeclareSymbolFont{mathx}{U}{mathx}{m}{n}
\DeclareFontSubstitution{U}{mathx}{m}{n}
\DeclareMathAccent{\widecheck}{0}{mathx}{"71}

\def\bskip{\\[-0.6cm]}

\title{Lee-Huang-Yang Energy for Dilute Bose Gases: \\ an Upper Bound for General Potentials} 

\date{\today}

\author{Giulia Basti\footnote{Department of Mathematics ``Guido Castelnuovo'', La Sapienza, Piazzale Aldo Moro, 5, 00185 Roma}, \, Morris Brooks\footnote{Institute of Mathematics, University of Zurich, Winterthurerstrasse 190, 8057 Zurich}, \, Serena Cenatiempo\footnote{Gran Sasso Science Institute, Viale Francesco Crispi 7, 67100 L'Aquila}, \, Alessandro Olgiati\footnote{Dipartimento di Matematica, Politecnico di Milano, Piazza Leonardo da Vinci 32, 20133 Milano}, \\ Benjamin Schlein$^\dagger$}

\begin{document}

\maketitle 

\begin{abstract}
We establish an upper bound for the ground state energy density of a dilute Bose gas, interacting through a general short-range, repulsive potential. To reach this goal, we extend the approach recently introduced in \cite{BBCOS} for hard spheres. 
\end{abstract}

\section{Introduction}

Based on the fundamental work of Bogoliubov \cite{B}, Lee-Huang-Yang \cite{LHY} considered a Bose gas at density $\rho > 0$, interacting through a two-body potential with scattering length $\frak{a} > 0$, and predicted that its ground state energy per unit volume has the form 
\begin{equation}\label{eq:LHY} 
e (\rho) = 4\pi \frak{a} \rho^2 \Big( 1+ \frac{128}{15 \sqrt{\pi}} (\rho \frak{a}^3)^{1/2} + \dots \Big) \end{equation} 
up to lower order terms, in the dilute limit $\rho \frak{a}^3 \ll 1$.

The first step towards a mathematical understanding of (\ref{eq:LHY}) is due by Dyson, who obtained a rigorous upper bound for $e(\rho)$, correct at leading order. The matching lower bound was derived by Lieb-Yngvason \cite{LY}. In the last years, the focus shifted to the second-order contribution on the r.h.s. of (\ref{eq:LHY}), known as the Lee-Huang-Yang term. In the so-called Gross-Pitaevskii regime, second order estimates for the ground state energy and for the low-energy excitation spectrum have been established in \cite{BBCS1,BBCS2} and, more recently, in \cite{HST,B}. In the thermodynamic limit, a lower bound capturing the Lee-Huang-Yang term has been shown for repulsive, quickly decaying and radial interactions, first in \cite{FS1}, under the additional assumption that the potential is integrable, and then in \cite{FS2}, in the general case. The first rigorous upper bound resolving the Lee-Huang-Yang term was proved in \cite{YY}, for sufficiently regular potentials. A simpler upper bound was later derived in \cite{BCS} for a larger class of repulsive, compactly supported and radial interactions; also this result, however, required $V \in L^3 (\bR^3)$ and excluded potentials with a hard core. For $V \in L^2 (\bR^3)$, a more precise upper bound, resolving also the third order term in the asymptotic expansion of $e(\rho)$, first predicted in the physics literature by Wu \cite{Wu}, Hugenholtz-Pines \cite{HP} and Sawada \cite{Sa}, was recently established in \cite{BOSS}. The corresponding lower bound is still open (but has been proven in the Gross-Pitaevskii regime in \cite{COSS}). New approaches have been developed in \cite{HHNST,HHST,FJGMOT} to extend (\ref{eq:LHY}) to an estimate for the free energy density, at low but positive temperatures.

Very recently, the restriction to integrable interactions appearing in \cite{YY,BCS} was lifted in \cite{BBCOS}, where an upper bound matching the Lee-Huang-Yang formula (\ref{eq:LHY}) was shown for interacting hard spheres. The goal of the present paper is to extend the analysis of \cite{BBCOS}, to derive an upper bound at Lee-Huang-Yang's order for general, repulsive, short-range interactions. 

We are going to consider a system of $N$ bosons moving in the box $\Lambda = [-L/2 ; L/2]^3$, with periodic boundary conditions and interacting through a repulsive potential $V : \mathbb{R}^3 \longrightarrow [0,\infty]$. We assume $V$ to be measurable, even and compactly supported, ie. $V(-x) = V(x)$ for all $x \in \bR^3$ and there exists $r_0 > 0$ with $\text{supp } V \subset B_{r_0} (0)$. Notice that our assumptions allow for potentials with a hard-core, ie. with $V(x) = \infty$, for all $|x| \leq r$, for some $r > 0$. The Hamilton operator has the form 
\begin{equation}\label{eq:HN0} H_N = \sum_{j=1}^N -\Delta_{x_j} + \sum^N_{i<j} V (x_i - x_j) \end{equation}
For a precise definition of $H_N$, we follow \cite{FS2} and introduce the quadratic form
\begin{align*}
    \mathcal{Q}_N [\Psi]:=\int_{\Lambda^N} \Big(\sum_{j=1}^N |\nabla_{x_j}\Psi|^2+\sum_{i<j}^N V(x_i-x_j)|\Psi|^2\Big) dx_1 \dots dx_N,
\end{align*}
acting on the domain 
\[ D(Q_N) = \Big\{ \Psi\in H_s^1(\Lambda^N) : \big(\sum_{i<j}^N V(x_i-x_j)\big)^{1/2}\Psi \in L^2(\Lambda^N) \Big\} \]
where $H_s^1 (\Lambda^N)$ is the permutation symmetric subspace of $H^1 (\Lambda^N)$. On the Hilbert space 
\[ \cH_{\mathcal{Q}_N}:=\overline{ \{ \Psi\in L^2_s (\Lambda^N): \mathcal{Q}_N [\Psi]<\infty\}},\] the non-negative and closable form $\mathcal{Q}_N$ is densely defined; therefore the self-adjoint operator $H_N$ can be defined as its Friedrichs extension.

To introduce the scattering length of $V$, we consider the energy functional 
\begin{equation}\label{eq:Ef}
    \mathcal{E}[g]:=\int \left(2|\nabla g (x)|^2+V(x) |g (x)|^2 \right)dx .
\end{equation}
We denote by $f$ the unique minimizer of (\ref{eq:Ef}), on the space of all real-valued $g$, such that $g-1\in \dot{H}_1(\mathbb R^3)$; see \cite{RT}. The scattering length $\frak{a} \geq 0$ of $V$ is defined by 
\begin{equation}\label{eq:defa}
    8\pi \mathfrak{a}:= \mathcal{E} [f] = \inf_g \mathcal{E}[g].
\end{equation} 

We denote by $E_{N,L}$ the ground state energy of (\ref{eq:HN0}). We are interested in the ground state energy per unit volume, in the thermodynamic limit $N, L \to \infty$, at fixed density $\rho = N / L^3$, which is defined by 
\begin{equation}\label{eq:erho} 
e (\rho) = \lim_{N,L \to \infty : N/ L^3 = \rho} \frac{E_{N,L}}{L^3}  \, . 
\end{equation} 
The existence of the limit (\ref{eq:erho}) is well-known, see \cite{R}. Our main theorem provides an upper bound for (\ref{eq:erho}), in the dilute limit. 
\begin{theorem}\label{thm:main}
Let $V : \bR^3 \to [0;\infty]$ be measurable, even, with $\text{supp } V \subset B_{r_0} (0)$, for some $r_0 > 0$. Let $\frak{a} > 0$ be the scattering length of $V$, defined as in (\ref{eq:defa}). Moreover, let $e(\rho)$ denote the ground state energy density of the interacting Bose gas, defined in (\ref{eq:erho}). Then there exist $C > 0$ and a sufficiently small $\delta > 0$ such that 
\begin{equation}\label{eq:main} e(\rho) \leq 4\pi \frak{a} \rho^2 \Big( 1+ \frac{128}{15 \sqrt{\pi}} (\rho \frak{a}^3)^{1/2} + C (\rho \frak{a}^3)^{1/2+\delta} \Big) \end{equation} 
for all $\rho \frak{a}^3 > 0$ small enough. 
\end{theorem}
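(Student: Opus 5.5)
The bound will follow from a trial state, with the construction of \cite{BBCOS} for hard spheres adapted to a general $V$. First, standard localization (Dirichlet/periodic box arguments as in \cite{BBCOS,BCS}) reduces the problem to a box of side $\ell \gg \rho^{-1/2}\frak{a}^{-1/2}$, chosen as a small power of $(\rho\frak{a}^3)^{-1}$ times the healing length. A trial state $\Psi_N$ on a torus of that size with
\[ \langle \Psi_N, H_N \Psi_N\rangle \le 4\pi\frak{a}\rho^2\ell^3\Big(1+\tfrac{128}{15\sqrt\pi}(\rho\frak{a}^3)^{1/2}+C(\rho\frak{a}^3)^{1/2+\delta}\Big) \]
is then glued into the large box $\Lambda$ using boundary cutoffs, at a cost of relative order $(\rho\frak{a}^3)^{1/2+\delta}$. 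Because $V$ may be nonintegrable or have a hard core, $\Psi_N$ must be built so that $\mathcal{Q}_N[\Psi_N]<\infty$. Following \cite{BBCOS}, I take a Jastrow-type factor $\prod_{i<j} f_\ell(x_i-x_j)$, where $f_\ell$ is the zero-energy scattering solution $f$ (the minimizer of (\ref{eq:Ef})) truncated to equal $1$ beyond a length scale $\ell_0$ with $\frak{a}\ll \ell_0 \ll \rho^{-1/3}$. This factor is multiplied by a state $\Phi$ carrying the Bogoliubov correlations at momenta $|p|\lesssim \ell_0^{-1}$. Concretely, $\Phi$ is a generalized Bogoliubov transformation applied to the condensate, with kernel built from $\widehat{(1-f_\ell)}$ and Bogoliubov coefficients corrected at low momenta. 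It is projected onto $N$ particles, or handled grand-canonically and then fixed.

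The energy is computed by conjugating $H_N$ with the Jastrow factor. Writing $\Psi_N = F\Phi$ with $F=\prod f_\ell$, integration by parts gives
\[ \langle F\Phi, H_N F\Phi\rangle = \sum_j \|F\nabla_j\Phi\|^2 + \langle \Phi, F^2\big(\textstyle\sum_{i<j}\tfrac{2\Delta f_\ell + V f_\ell}{f_\ell}\big)\Phi\rangle + \text{(three-body cross terms)}. \]
The scattering equation $-2\Delta f + Vf=0$ inside $B_{\ell_0}$ removes the singular $V$. The effective two-body interaction becomes the smooth, integrable boundary term supported near $|x|=\ell_0$, whose integral is $8\pi\frak{a}$ up to $O(\frak{a}/\ell_0)$ corrections. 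This is the key point: all dependence on $V$ enters only through $f$ and $\frak{a}$, so nonintegrable and hard-core potentials are handled exactly as hard spheres were in \cite{BBCOS}. The three-body terms $\nabla f_\ell(x_i-x_j)\cdot\nabla f_\ell(x_i-x_k)$ and the $F^2$ weights (via a cluster/Wick-type expansion for dilute Jastrow factors) are bounded by $C\rho^3\frak{a}^2\ell_0^{\,\cdot}$-type quantities. For an appropriate choice of $\ell_0$, these are $o(\rho^2\frak{a}(\rho\frak{a}^3)^{1/2})$. The remaining expectation in $\Phi$ is a Bogoliubov-type computation with a regular effective potential, as in \cite{BCS}. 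It gives $4\pi\frak{a}\rho^2$ plus the LHY integral, once one checks that the high momenta already accounted for by $F$ are not double-counted. This check uses the fact that the effective potential has Fourier transform $8\pi\frak{a}$ at low momenta.

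The main obstacle is controlling the expansion of $F^2$ against the non-product state $\Phi$. Unlike in \cite{BCS}, $\Phi$ contains many excitations, so the Jastrow weights do not factorize. One needs estimates showing that the $n$-point functions of $\Phi$ are close to $\rho^n$ at distances $\lesssim\ell_0$, with errors small in $(\rho\frak{a}^3)^\delta$. I would try to import the bounds of \cite{BBCOS} verbatim. They depend only on $0\le f_\ell\le 1$, $1-f_\ell\le C\frak{a}/|x|$, $|\nabla f_\ell|\le C\frak{a}/|x|^2$ away from the support of $V$, and $\int|\nabla f_\ell|^2\le C\frak{a}$. Each of these properties holds for general repulsive compactly supported $V$ by standard scattering theory. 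Inside $\text{supp}\,V$ the pointwise gradient bound fails, so I would replace it by the energy bound $\int_{B_{r_0}}(|\nabla f|^2+Vf^2)\le C\frak{a}$. The most delicate point is verifying that the error terms in \cite{BBCOS} only use such integrated bounds. That gives (\ref{eq:main}).
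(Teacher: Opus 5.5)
Your architecture (a Jastrow factor at short distances times a Bogoliubov state on larger scales, as in \cite{BBCOS}) is the paper's, but the step meant to handle general $V$ is false. The argument of \cite{BBCOS} does not rest only on $0\le f_\ell\le 1$, $1-f_\ell\lesssim\frak{a}/|x|$, $|\nabla f_\ell|\lesssim \frak{a}/x^2$ off $\text{supp}\,V$, and an energy bound. The two places where ``import verbatim'' fails are exactly the new content of the paper.

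First, for hard spheres the trial state vanishes on the cores, so \cite{BBCOS} never estimates a potential energy. For general $V$ this is the main new step (Prop.~\ref{prop:pot}). Applying (\ref{eq:id}) twice gives $a_ya_x\Psi=\sum_{j=1}^5\phi_j(x;y)$, with every term carrying a factor $f_\ell(x-y)$, so only the integrable $Vf_\ell^2$ appears. One then extracts $\frac{\rho_0^2}{2}\int Vf_\ell^2+\rho_0\|\sigma\|_2^2\int Vf_\ell^2+\rho_0\int Vf_\ell^2\,(\sigma*\sigma+\sigma*(\gamma-1))$ up to $O(\rho^{5/2+\eps/2})$, which combines with the kinetic bound into integrals of $|\nabla f_\ell|^2+\frac12 Vf_\ell^2$.

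Second, the Bogoliubov kernel is built from $\widehat V_{\mathrm{eff}}$, with $V_{\mathrm{eff}}=2\Delta f$. For hard spheres this decays like $|k|^{-1}$, but in general one only has $\|\nabla^m\widehat V_{\mathrm{eff}}\|_\infty<\infty$. The position-space bounds on $s$ (pointwise bounds on $s,\nabla s$, tail integrals, and the decay (\ref{eq:decay}) used to localize every error term) then need the third-order expansion of $\widehat s_k$ and the decomposition $s-s_1=D+2D*D+\frac{11}{2}D*D*D+E$ of Appendix~\ref{app:kernels}. None of this is controlled by your list of properties of $f_\ell$; for non-radial $V$, even that list needs Lemma~\ref{Lem:general_potential_eff}.

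Your alternative, integrating by parts to trade $V$ for the smooth interaction $W=f_\ell(-2\Delta f_\ell+Vf_\ell)$ living where the cutoff varies, could replace the first point. Note the sign, and that dividing by $f_\ell$ is meaningless on a hard core, so you need the weak Euler--Lagrange equation tested against $f\varphi$. But it leaves the decisive leading-order cancellation hidden in ``checking that high momenta are not double-counted''.

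With your cutoff $\ell_0$ (the paper's $\ell$), $\widehat W(0)-8\pi\frak{a}$ is of order $\frak{a}^2/\ell_0$. Since $\frak{a}/\ell_0\gg(\rho\frak{a}^3)^{1/3}$ when $\ell_0\ll\rho^{-1/3}$, this is far larger than the LHY term and must be cancelled exactly by the Bogoliubov part. That imposes two requirements:
\begin{enumerate}
\item The kernel must continue the scattering solution beyond the Jastrow scale, i.e.\ be close to $\rho_0(f-f_\ell)/f_\ell$, as the paper's $\tilde\sigma$ is. A kernel built from $\widehat{1-f_\ell}$ double-counts the Jastrow correlation.
\item You must keep the pair weight $f_\ell^2(x_i-x_j)$ in $\sum_j\|F\nabla_j\Phi\|^2$. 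At the two-body level the energy is $\int 2f_\ell^2|\nabla g|^2+Wg^2$, whose infimum is exactly $8\pi\frak{a}$. Dropping the weight yields the scattering length of $W$, which differs by a term of order $\frak{a}^3/\ell_0^2$ and exceeds the LHY precision unless $\ell_0\gg\frak{a}(\rho\frak{a}^3)^{-1/4}$.
\end{enumerate}
So the $F^2$ weights are not all $O(\rho^3)$ errors, and the control of $F^2$ against $\Phi$ that you defer is where the proof actually lives.

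Finally, your localization step is unnecessary: the paper works on the torus of side $L$ and uses equivalence of ensembles. It is also mis-quantified: Dirichlet boundaries in boxes of side $L_0$ cost a relative $(\rho\frak{a})^{-1/2}/L_0$. Boxes only a small power larger than the healing length therefore cannot give an error $(\rho\frak{a}^3)^{1/2+\delta}$.
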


Our bound (\ref{eq:main}) is consistent with the Lee-Huang-Yang formula (\ref{eq:LHY}). The corresponding lower bound was established in \cite{FS2}, for measurable, spherical symmetric potentials $V : \bR^3 \to [0;\infty]$ decaying sufficiently fast at infinity. Combining Theorem \ref{thm:main} with the results of \cite{FS2}, we obtain a rigorous proof of the Lee-Huang-Yang formula, for all spherical symmetric, repulsive interactions with compact support.

\medskip

{\it Acknowledgements.} G.B., S.C., and A.O. acknowledge funding from the European Research Council under the ERC Starting Grant MaTCh (Grant Agreement No. 101117299). G.B., S.C., and A.O. also thank GNFM–INdAM.
A.O. acknowledges financial support from the MUR Grant “Dipartimento di Eccellenza 2023-2027” of Dipartimento di Matematica, Politecnico di Milano, and from the Politecnico di Milano ``Seed Fund Grant''. M.B. acknowledges funding from the UZH Postdoc Grant FK-25-103.

\section{The trial state}

To prove Theorem \ref{thm:main}, we are going to construct an appropriate trial state. To this end, it is convenient to work on the bosonic Fock space
\[ \cF = \bigoplus_{n=0}^\infty L^2 (\Lambda)^{\otimes_s n} \, . \]
On $\cF$, we introduce the usual creation and annihilation operators $a^* (f), a(f)$, defined for an arbitrary $f \in L^2 (\Lambda)$ by 
\[ \begin{split} (a (f) \Psi)^{(n)}  (x_1, \dots , x_n) &= \sqrt{n+1} \int_{\Lambda}  dx \bar{f} (x) \Psi^{(n+1)} (x, x_1, \dots , x_n) \\ 
(a^* (f) \Psi)^{(n)} (x_1, \dots, x_n) &= \frac{1}{\sqrt{n}} \sum_{j=1}^n f (x_j) \Psi^{(n-1)} (x_1, \dots , x_{j-1}, x_{j+1}, \dots , x_n)  \end{split} \]
and satisfying the canonical commutation relations 
\begin{equation*}\label{eq:CCR} 
\big[ a (f) , a^* (g) \big] = \langle f , g \rangle , \qquad \big[ a (f) , a(g) \big] = \big[ a^* (f) , a^* (g) \big] = 0  \,.
\end{equation*} 
For $p \in \Lambda^* = 2\pi \bZ^3/L$, we define the momentum-space creation and annihilation operators 
\[ \hat{a}_p^* = a^* (f_p), \quad \hat{a}_p = a (f_p) \]
with the normalized plane wave $f_p (x) = e^{ip\cdot x} / L^{3/2}$. In position space, we will use  operator-valued distributions $a_x^* , a_x$, defined for $x \in \Lambda$, so that 
\[ a^* (f) = \int f (x) a^*_x \, dx , \qquad a (f) = \int \bar{f} (x) \, a_x dx \]
for every $f \in L^2 (\Lambda)$. 

On $\cF$, we define the Hamilton operator $\cH = \cK + \cV$, where the kinetic energy operator $\cK$ can be written in momentum space as 
\begin{equation}\label{eq:cK} \cK = \sum_{p \in \Lambda^*} p^2 \hat{a}_p^* \hat{a}_p \end{equation}
while the potential energy operator $\cV$ is conveniently expressed in position space as 
\begin{equation}\label{eq:cV} \cV = \frac{1}{2} \int dx dy \, V(x-y) a_x^* a_y^* a_y a_x  \,. \end{equation} 
Our goal is to construct a normalized trial state $\Psi \in \cF$, with approximately $N = \rho L^3$ particles and with energy per unit volume $\langle \Psi, \cH \Psi \rangle / L^3$ sufficiently close to the r.h.s. of (\ref{eq:main}), which will then follow by a standard equivalence of ensembles argument. To define such a trial state, we have to generate, first of all, a Bose-Einstein condensate in the zero-momentum state, with density $0 < \rho_0 \leq \rho$. To reach this goal, we will use the unitary Weyl operator 
\begin{equation}\label{eq:weyl} W(\rho_0) = e^{a^* (\sqrt{\rho_0}) - a (\sqrt{\rho_0})} = e^{\sqrt{\rho_0} \int dx (a_x^* - a_x)} \end{equation} 
producing the coherent state 
\begin{equation} \label{eq:coh} W(\rho_0) \Omega = e^{-\rho_0/2} \Big\{ 1, \sqrt{\rho_0}, \dots, \frac{\rho^{k/2}_0}{\sqrt{k!}}, \dots \Big\} \in \cF \,.\end{equation}
Energetically, choosing $\rho_0 = \rho$, the pure condensate (\ref{eq:coh}) is very far from the true ground state energy, because it completely lacks correlations. In fact, if the potential $V$ has a hard-core, the energy of (\ref{eq:coh}) is even infinite. Hence, we have to modify (\ref{eq:weyl}), to introduce the correct correlation structure, making sure, in particular, that the trial state vanishes on hard-cores. To this end we proceed as in \cite{BBCOS}. We introduce two length scales $\ell = \frak{a} (\rho \frak{a}^3)^{-\delta}$ and $\ell_0 = (\rho \frak{a})^{-1/2} (\rho \frak{a}^3)^{-\eps}$, for $\delta , \eps > 0$ small enough, to be fixed later on. In the dilute limit $\rho \frak{a}^3 \ll 1$, $\ell$ is a bit larger than the range of the interaction while $\ell_0$ is just a bit larger than the healing length $(\rho \frak{a})^{-1/2}$ of the gas, ie. $\frak{a} \leq r_0 \ll \ell \ll (\rho \frak{a})^{-1/2} \ll \ell_0$. We will modify the coherent state (\ref{eq:coh}) adding a Jastrow factor acting on short length scales $|x| \leq \ell$ and a Bogoliubov transformation generating correlations on larger scales, $\ell \leq |x| \leq \ell_0$. The Jastrow factor is crucial for the trial state to vanish on hard-cores. The Bogoliubov transformation, on the other hand, is crucial to describe correlations up to (and just beyond) the healing length, as needed to reach Lee-Huang-Yang precision. 

To define the Jastrow factor on short length scales, we need some properties of the minimizer of the energy functional (\ref{eq:Ef}), defining the scattering length of the interaction potential $V$.
\begin{lemma}
\label{Lem:general_potential_eff}
Assume that $V : \bR^3 \to [0;\infty]$ is measurable, even, with $\text{supp } V \subset B_{r_0} (0)$, for some $r_0 > 0$. Let $f$ denote the unique minimizer of (\ref{eq:Ef}), on the space of all $f$ such that $f-1\in \dot{H}_1(\mathbb R^3)$. We consider the distribution 
\begin{equation}\label{eq:Veff} V_\mathrm{eff} = 2\Delta f \, . \end{equation}
Then, $V_\mathrm{eff}$ is non-negative, with support in $B_{r_0}(0)$. Its Fourier transform $\widehat{V}_\mathrm{eff} \in L^\infty (\bR^3)$ is such that $\widehat{V}_\mathrm{eff} (0) = 8\pi \frak{a}$ and, for every $m \in \bN$, $\| \nabla^m \widehat{V}_\mathrm{eff} \|_\infty < \infty$. Furthermore, for $|x|> r_0$, we have 
    \begin{align}
    \label{Eq:formula_f_integ}
        f(x)=1-\int \frac{V_\mathrm{eff}(y)}{8\pi |x-y|} dy\,.
    \end{align}
    In particular, it follows that 
    \begin{equation} \label{eq:1-f} 
    0 \leq 1- f (x) \lesssim \frac{\frak{a}}{|x|} \end{equation} 
    for all $x \in \bR^3$ and 
    \begin{equation}\label{eq:nablaf} |\nabla f (x)| \lesssim \frac{\frak{a}}{x^2} \end{equation} 
for all $|x| > 2 r_0$. \end{lemma}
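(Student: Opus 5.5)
We would derive everything from the variational characterization of $f$ and the resulting Euler–Lagrange structure. First we establish the standard qualitative facts about the minimizer: $0\le f\le 1$, and $f$ is harmonic outside $B_{r_0}(0)$. The bound $0 \le f \le 1$ follows by replacing $g$ with $\min(\max(g,0),1)$, which is admissible and does not increase $\mathcal{E}$, together with uniqueness. Harmonicity outside $B_{r_0}(0)$ follows by varying $f$ with test functions supported in $\{|x|>r_0\}$, where $V$ vanishes.

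Next we show that $V_{\rm eff}=2\Delta f$ is a non-negative distribution. For $\varphi\in C_c^\infty$ with $\varphi\ge 0$ and small $t>0$, the competitor $g_t=\min(f+t\varphi,1)$ gives $\mathcal{E}[g_t]\ge \mathcal{E}[f]$. Since $V\ge0$, we will instead compare with $g_t=f-t\varphi f$, or more simply with $\min(f,1-t\varphi)$. Either choice does not increase the $V$-term, and expanding the kinetic term should yield $\int \nabla f\cdot\nabla\varphi\le 0$, i.e. $\Delta f\ge0$. The cleanest route is probably to use that $f$ minimizes $\int|\nabla g|^2$ among $g\le f$ with $g-1\in\dot H^1$ and $V$-finite, so that $f$ is subharmonic. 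Concretely, $f$ is the least superharmonic-type obstacle solution, and one checks $\int\nabla f\cdot\nabla \varphi\le0$ for $\varphi\ge0$ using $g=f-t\varphi$, which decreases the potential energy since $0\le f-t\varphi\le f$ where relevant. A non-negative distribution is a measure, and it is supported in $\overline{B_{r_0}}$ by harmonicity outside. The delicate point here is handling $V=\infty$ regions and the positivity cutoff; this is the main obstacle, and we would treat it via truncation and lattice operations on $\dot H^1$.

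Once $V_{\rm eff}$ is a finite positive compactly supported measure, the remaining properties follow. Since $1-f\in \dot H^1$ tends to zero at infinity and $-\Delta(1-f)=V_{\rm eff}/2$, Liouville gives $1-f=\frac{1}{8\pi|\cdot|}*V_{\rm eff}$ everywhere, and in particular for $|x|>r_0$. This is formula (\ref{Eq:formula_f_integ}). Then $\widehat V_{\rm eff}(0)=V_{\rm eff}(\mathbb R^3)$. Integrating by parts, $\mathcal{E}[f]=\int 2|\nabla f|^2+Vf^2$, and the Euler–Lagrange equation tested against $f$ gives $\mathcal{E}[f]=2\int \nabla f\cdot\nabla(f-1)+\ldots$ To avoid this computation, we instead use the asymptotics: $1-f(x)=\frac{V_{\rm eff}(\mathbb R^3)}{8\pi|x|}+O(|x|^{-2})$. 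We would then identify $\mathcal{E}[f]$ with $8\pi$ times the coefficient of $1/|x|$ by the standard flux computation, which requires testing with $f$ on large balls. This gives $\widehat V_{\rm eff}(0)=8\pi\frak a$.

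Finally, positivity gives $|\widehat V_{\rm eff}|\le \widehat V_{\rm eff}(0)$, and compact support gives $|\nabla^m\widehat V_{\rm eff}|\le r_0^m\, 8\pi\frak a$. The convolution formula with $V_{\rm eff}\ge0$ gives $1-f\ge0$, and for $|x|>2r_0$ we have $|x-y|\ge|x|/2$, so $1-f\lesssim \frak a/|x|$ and $|\nabla f|\lesssim\frak a/|x|^2$. For $|x|\le 2r_0$, the bound (\ref{eq:1-f}) needs $r_0\lesssim\frak a$ or is simply $1-f\le1$. Here $\lesssim$ is allowed to depend on $V$, so $1\le 2r_0/|x|\cdot$const$\cdot\frak a^{-1}\frak a$ gives it.
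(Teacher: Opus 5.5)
Your overall route is the paper's: support of $V_{\mathrm{eff}}$ from two-sided variations supported outside $B_{r_0}(0)$, and positivity from a one-sided variation. The Newton-potential formula follows by uniqueness of decaying solutions, $\widehat{V}_{\mathrm{eff}}(0)=8\pi\mathfrak{a}$ from a flux computation at infinity, and the remaining bounds from positivity and compact support.

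The one step where what you wrote is wrong is positivity: neither $f(1-t\varphi)$ nor $\min(f,1-t\varphi)$ yields $\int\nabla f\cdot\nabla\varphi\le 0$. The first, together with $f(1+t\varphi)$ (also admissible, since $Vf^2(1\pm t\varphi)^2\le C\,Vf^2$), only gives the identity $2\int\nabla f\cdot\nabla(\varphi f)+\int Vf^2\varphi=0$. That is, $\Delta(f^2)=2|\nabla f|^2+Vf^2\ge 0$, which is subharmonicity of $f^2$, not of $f$. The second only gives $\int_{\{f\ge 1-t\varphi\}}|\nabla f|^2\le t^2\|\nabla\varphi\|_2^2$, which carries no sign information on $\Delta f$.

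What works is your last suggestion with the cut-off written out. For $\varphi\ge 0$, $g_t=(f-t\varphi)_+$ is admissible with $0\le g_t\le f$, so $Vg_t^2\le Vf^2$ even on hard cores. This is also why the class in your obstacle reformulation must be $0\le g\le f$, not $g\le f$. Then $\mathcal{E}[g_t]\ge\mathcal{E}[f]$ gives $\int_{\{f>t\varphi\}}\nabla f\cdot\nabla\varphi\le\frac{t}{2}\|\nabla\varphi\|_2^2$. Since $\{f>t\varphi\}\uparrow\{f>0\}$ as $t\downarrow 0$ and $\nabla f=0$ a.e. on $\{f=0\}$, dominated convergence gives $\int\nabla f\cdot\nabla\varphi\le 0$. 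The paper states this step in one line via a one-sided variation; in the hard-core case this cut-off is what makes it rigorous.

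The identity you get from $f(1\pm t\varphi)$ is not wasted. It is exactly what makes your flux computation ``by testing with $f$ on large balls'' legitimate when $V$ has a hard core, where $-2\Delta f+Vf=0$ cannot be tested directly because $Vf$ is not defined. Taking $\varphi=\chi_R$, equal to $1$ on $\text{supp}\,V$, gives $\int(2\chi_R|\nabla f|^2+Vf^2)=-2\int f\,\nabla f\cdot\nabla\chi_R$. The left side tends to $\mathcal{E}[f]=8\pi\mathfrak{a}$. By the Newton-potential formula, $\nabla f(x)=V_{\mathrm{eff}}(\mathbb{R}^3)\,x/(8\pi|x|^3)+O(|x|^{-3})$, so the right side tends to $V_{\mathrm{eff}}(\mathbb{R}^3)=\widehat{V}_{\mathrm{eff}}(0)$. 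The paper does the same with a fixed cut-off $\varphi$ and an integration by parts on $B_R(0)$. With these two points made explicit, your argument is complete and coincides with the paper's.
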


{\it Remark.} Due to the compact support of $V_\mathrm{eff}$, both the Fourier transform
\begin{align*}
   \widehat{V}_\mathrm{eff}(k) = \int e^{-iky}V_\mathrm{eff}(y) dy := \int e^{-iky} \varphi(y)V_\mathrm{eff}(y) dx
\end{align*}
as well as the expression 
\begin{align}
\label{Eq:extension_to_bounded_func}
    \int \frac{V_\mathrm{eff}(y)}{8\pi |x-y|} dy :=  \int \frac{\varphi(y)V_\mathrm{eff}(y)}{8\pi |x-y|} dy
\end{align}
are well-defined, for $|x| > r_0$, if we choose $\varphi$ a smooth and compactly supported test function with $\varphi=1$ on $B_{r'}(0)$ and $\varphi=0$ on $\mathbb R^3\setminus B_{r''}(0)$ for some $r_0 < r' < r'' < |x|$. Notably, these definitions are independent of the concrete choice of $\varphi$.

{\it Remark.} If we additionally assumed that $V$ is radial, we would find $f(x) = 1 - \frak{a} / |x|$, for $|x| > r_0$. This would immediately imply the bounds (\ref{eq:1-f}), (\ref{eq:nablaf}). 

\begin{proof}[Proof of Lemma \ref{Lem:general_potential_eff}]
From the variational definition, using test functions of the form $f\pm \epsilon g$ and sending $\epsilon\rightarrow 0$, it is immediately clear that $V_\mathrm{eff}$ is supported on $B_{r_0}(0)$, i.e.
\begin{align}
\label{Eq:Scattering_I}
    \langle V_\mathrm{eff}, g\rangle=0,
\end{align}
if $g(x)=0$ for all $|x|\leq r_0$. Similarly, using test functions of the form $f+ \epsilon g$, $\epsilon > 0$, we conclude that $V_\mathrm{eff}$ is non-negative, i.e.
\begin{align*}
     \langle V_\mathrm{eff}, g\rangle\geq 0
\end{align*}
in case $g\geq 0$. To prove (\ref{Eq:formula_f_integ}), let $m_\epsilon(y):=\epsilon^{-3}m(\epsilon^{-1} y)$, for $\epsilon > 0$ and for a smooth function $m$, supported in $B_1 (0)$ and with $\int m(y) \mathrm{d}y=1$. For every $\epsilon > 0$, $m_\epsilon * V_\text{eff}$ is a smooth function, supported in $B_{r_0+\epsilon} (0)$. Choosing $\varphi$ as in \eqref{Eq:extension_to_bounded_func} and $\epsilon > 0$ so small that $r_0 + \epsilon < r'$, we conclude that 
\begin{align*}
    \int \frac{V_\mathrm{eff}(y)}{8\pi |x-y|} dy=\lim_{\epsilon\rightarrow 0}\int \frac{(m_\epsilon* V_\mathrm{eff})(y)}{8\pi |x-y|} dy \,.
\end{align*}
Consequently, to verify \eqref{Eq:formula_f_integ}, it is enough to show that, for all $\epsilon>0$ and $x \in \bR^3$,
\begin{align}
\label{Eq:formula_f_integ_in_proof}
    m_\epsilon*(f-1) (x) =-\int \frac{(m_\epsilon* V_\mathrm{eff})(y)}{8\pi |x-y|} dy \,.
\end{align}
By the definition of $V_\mathrm{eff}$, it is clear that $2\Delta m_\epsilon*(f-1) = m_\epsilon * V_\mathrm{eff}$. Furthermore, the smooth function on the right hand side of \eqref{Eq:formula_f_integ_in_proof} satisfies 
\begin{align*}
    2\Delta \left(-\int \frac{(m_\epsilon* V_\mathrm{eff})(y)}{8\pi |x-y|} dy\right)= (m_\epsilon * V_\mathrm{eff}) (x)
\end{align*}
as well. This concludes the proof of \eqref{Eq:formula_f_integ_in_proof}, as both the function on the left hand side as well as the one on the right hand side of \eqref{Eq:formula_f_integ_in_proof} vanish at infinity.  

In order to verify that $\widehat{V}_\text{eff} (0) = 8\pi \frak{a}$ we observe, first of all, that 
\begin{equation}\label{eq:Veff0} \widehat{V}_\text{eff} (0) = \lim_{R \to \infty} 2 \int_{B_R (0)} \Delta f \, dx = \lim_{R \to \infty} 2 \int_{\partial B_R (0)} \nabla f  \cdot n \, d\sigma \, . \end{equation} 
On the other hand, using the variational definition of $\frak{a}$ and observing that $f \pm \epsilon \varphi f$ is an admissible test function, for a smooth and compactly supported $\ph$, with $\ph = 1$ on the support of $V$, we obtain 
\begin{align*}
    0&=\int \left[2 \nabla f \cdot  \nabla (\varphi f)+V f (\varphi f)\right] dx \\ &=\int \left[2 \nabla f \cdot  \nabla (\varphi f)+V f^2\right] dx \, = 8\pi \frak{a} + \int 2 \, \nabla f \cdot \nabla (\ph f - f)  \, dx 
\end{align*}
implying 
\[ \int 2 \, \nabla f \cdot \nabla (f-\ph f) \, dx = 8\pi \frak{a} \, . \]
Since furthermore $-\Delta f=0$ on the support of $1-\varphi$, we conclude that 
\[ \begin{split} \int 2 \, \nabla f \cdot \nabla (f-\ph f) \, dx &= \lim_{R \to \infty} \int_{B_R (0)} 2 \, \nabla f \cdot \nabla (f-\ph f) \, dx \\ &= \lim_{R \to \infty}
\int_{\partial B_R (0)} 2\nabla f \cdot n  \, (f - \ph f)  \, d\sigma \\ &= \widehat{V}_\text{eff} (0) + \lim_{R \to \infty} \int_{\partial B_R (0)}  2\nabla f \cdot n  \, (f-1) \, d\sigma \end{split} \]
where in the last line we used (\ref{eq:Veff0}) and the fact that $\ph =0$ on $\partial B_R (0)$, for $R$ large enough. To show that $\widehat{V}_\text{eff} (0) = 8\pi \frak{a}$, we just observe that, from \eqref{Eq:formula_f_integ}, $|f (x)-1| \lesssim 1/R$, $|\nabla f (x)| \lesssim 1/R^2$ for $x \in \partial B_R (0)$ and $R$ large enough, implying that the integral on the r.h.s. vanishes, as $R \to \infty$. From $\widehat{V}_\text{eff} (0) = 8\pi \frak{a}$ we also obtain, since $V_\mathrm{eff} \geq 0$, that $\| V_\mathrm{eff} \|_1 = \sup \langle V_\mathrm{eff} , g \rangle = 8\pi \frak{a}$, where the supremum is taken over all test functions $g$, with $\| g \|_\infty \leq 1$. Similarly, as a consequence of \eqref{Eq:Scattering_I}, the $L^1$-norm of $x_{i_1}\dots x_{i_m}V_\mathrm{eff}(x)$ is finite as well, and therefore 
\begin{align*}
    \|\partial_{k_{i_1}}\dots \partial_{k_{i_m}} \widehat{V}_\mathrm{eff}\|_\infty < \infty \,.
\end{align*}

Finally, to show (\ref{eq:1-f}), (\ref{eq:nablaf}), we use \eqref{Eq:formula_f_integ} and we remark that, for $|x| > 2 r_0$ and $|y| \leq r_0$, $|x-y| \geq |x|/2$. Together with the identity $\widehat{V}_\text{eff} (0) = 8\pi \frak{a}$, we immediately find the desired bounds, for $|x| \geq 2 r_0$. Since $0 \leq 1- f(x) \leq 1$ everywhere (by the variational characterization of $f$), (\ref{eq:1-f}) can be extended to all $x \in \bR^3$.  
\end{proof}

Next, we truncate the minimizer $f$ of the energy functional (\ref{eq:Ef}) at the length scale $\ell = \frak{a} (\rho \frak{a}^3)^{-\delta}$, defining 
\begin{equation}\label{eq:fell} f_\ell (x) = \chi_\ell (x) f(x) + (1- \chi_\ell (x)) = 1- \chi_\ell (x) (1 -f(x)) \end{equation}
where $\chi_\ell (x) = \chi (x/\ell)$ and $\chi \in C^\infty_c (\bR^3)$, with $\chi (x) = 1$ for $|x| \leq 2$, $\chi (x) = 0$, for $|x| > 4$. 
\begin{lemma}  \label{lm:omega}
We have $0\leq f_\ell (x) \leq 1$ for all $x \in \bR^3$. Let $\omega_\ell (x) = 1- f_\ell (x)$. Then 
\[ \omega_\ell (x) \lesssim \frac{\frak{a}}{|x|} \chi_\ell (x) \]
for all $x \in \bR^3$ and 
\[ |\nabla f_\ell (x)| = |\nabla \omega_\ell (x)| \lesssim \frac{\frak{a}}{x^2} \]
for all $|x| > 2 r_0$. Moreover, 
\begin{equation}\label{eq:en-fl} \Big| \int \big[ |\nabla f_\ell (x)|^2 + \frac{1}{2} V (x) |f_\ell (x)|^2 \big] dx - 4 \pi \frak{a} \Big| \lesssim \frac{\frak{a}^2}{\ell} \,.\end{equation}
\end{lemma}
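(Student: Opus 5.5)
The first claims follow from Lemma \ref{Lem:general_potential_eff} and the structure of (\ref{eq:fell}). We choose $\chi$ with $0\le\chi\le1$, which is possible and presumably intended. Since $0\le 1-f\le 1$, we get $\omega_\ell = \chi_\ell (1-f)\in[0,1]$, hence $0\le f_\ell\le 1$. The bound (\ref{eq:1-f}) then gives $\omega_\ell(x)\lesssim \frak{a}\chi_\ell(x)/|x|$. For the gradient, when $|x|>2r_0$ we write
\[
\nabla\omega_\ell=\chi_\ell\nabla(1-f) + (1-f)\nabla\chi_\ell .
\]
The first term is $\lesssim \frak{a}/x^2$ by (\ref{eq:nablaf}). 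The second term is supported in $2\ell\le|x|\le 4\ell$, where $|\nabla\chi_\ell|\lesssim 1/\ell\lesssim 1/|x|$ and $1-f\lesssim \frak{a}/|x|$, so it is also $\lesssim \frak{a}/x^2$.

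For (\ref{eq:en-fl}), the idea is that $f_\ell=f$ on $|x|\le 2\ell$, a region containing $\operatorname{supp}V$ since $\ell\gg r_0$. We split
\[
\int \big[|\nabla f_\ell|^2+\tfrac12 V f_\ell^2\big] = \int_{|x|\le 2\ell}\big[|\nabla f|^2+\tfrac12 Vf^2\big] + \int_{|x|>2\ell}|\nabla f_\ell|^2 .
\]
The second integral is bounded by $\int_{|x|>2\ell} \frak{a}^2/x^4\,dx\lesssim \frak{a}^2/\ell$, using the gradient bound just proved. For the first integral, (\ref{eq:defa}) gives $\int[|\nabla f|^2+\frac12Vf^2]=4\pi\frak{a}$ over all of $\bR^3$. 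It therefore differs from $4\pi\frak{a}$ by $\int_{|x|>2\ell}|\nabla f|^2\lesssim \frak{a}^2/\ell$, again by (\ref{eq:nablaf}). Combining the two pieces yields (\ref{eq:en-fl}).

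The only delicate point is whether $\int|\nabla f|^2+\frac12 Vf^2$ over $\bR^3$ is finite and equals $\frac12\mathcal E[f]=4\pi\frak a$ in the hard-core setting. This holds because $f$ is the minimizer and $\mathcal E[f]=8\pi\frak a<\infty$, so each term is integrable and we may restrict to regions without further justification. No integration by parts against $V_{\rm eff}$ is needed, which avoids distributional subtleties. The remaining minor checks are that $2\ell>2r_0$, which holds in the dilute regime, and that the implicit constants depend only on $\chi$.
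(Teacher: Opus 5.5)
Your proof is correct and is essentially the paper's argument. The paper expands $\nabla f_\ell=\chi_\ell\nabla f-(1-f)\nabla\chi_\ell$ and bounds the three resulting error terms, all supported in $|x|\ge 2\ell$, using the decay bounds (\ref{eq:1-f}) and (\ref{eq:nablaf}); you instead split the integral at $|x|=2\ell$ and apply the same bounds, which is only a cosmetic reorganization (your added assumption $0\le\chi\le1$, needed for $0\le f_\ell\le 1$, fairly makes explicit what the paper leaves implicit).
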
 
\begin{proof}
The bounds for $f_\ell, \omega_\ell, \nabla f_\ell$ follow from Lemma \ref{Lem:general_potential_eff}. To show (\ref{eq:en-fl}), we recall the definition (\ref{eq:defa}) and we observe that, since $f_\ell (x) = f(x)$ on the support of $V$ and since $\nabla f_\ell (x) = \chi_\ell (x)\nabla f (x)  - (1-f(x)) \nabla \chi_\ell (x)$, 
\[ \begin{split} \Big| \int  \big[ |\nabla f_\ell (x)|^2 &+ \frac{1}{2} V (x) | f_\ell (x)|^2\big] dx  - 4\pi \frak{a} \Big| \\ \leq \; &\int |\nabla f (x)|^2 (1- \chi_\ell^2 (x)) dx + \int |\nabla \chi_\ell (x)|^2 (1- f(x))^2 \\ &+2 \int |\nabla f (x)| |\nabla \chi_\ell (x)| (1- f(x)) dx \,. \end{split} \]
From Lemma \ref{Lem:general_potential_eff}, we have $|1-f(x)| \lesssim \frak{a}/|x|$ and $|\nabla f (x)| \lesssim \frak{a}/x^2$ on the supports of $1-\chi_\ell^2$ and of $\nabla \chi_\ell$. With $|\nabla \chi_\ell (x)| \lesssim \mathbbm{1} (2\ell \leq |x| \leq 4\ell) / \ell$, we obtain (\ref{eq:en-fl}). 
\end{proof}

With $f_\ell$ as in (\ref{eq:fell}), we define $J : \cF \to \cF$, requiring that  
\begin{equation}\label{eq:defJ} (J \Phi)^{(n)} (x_1, \dots , x_n) = \prod_{i<j}^n f_\ell (x_i - x_j) \Phi^{(n)} (x_1, \dots , x_n) \,. \end{equation}
For $x \in \Lambda$, we also define $J(x) : \cF \to \cF$ by 
\[ (J(x) \Phi)^{(n)} (x_1, \dots , x_n) = \prod_{j=1}^n f_\ell (x - x_j) \Phi^{(n)} (x_1, \dots , x_n) \,.\]
Then, we have the identities 
\[ a_x J = J(x) J a_x , \qquad J a_x^* = a_x^* J(x) J \]
and
\[ a_y J(x) = f_\ell (x-y) J(x) a_y, \qquad J(x) a_y^* = f_\ell (x-y) a_y^* J(x) \,.\]
Clearly, $0 \leq J(x), J \leq 1$ for all $x \in \bR^3$. From
\[ 1 - \prod_{j=1}^n f_\ell (x- x_j) \leq \sum_{j=1}^n (1 - f_\ell (x-x_j)) = \sum_{j=1}^n \omega_\ell (x-x_j) \]
we also find 
\begin{equation}\label{eq:1-J} 1 - J(x) \leq d\Gamma (\omega_{\ell,x}) = \int dy \, \omega_\ell (x-y) a_y^* a_y \,. \end{equation}

Acting with (\ref{eq:defJ}) on the condensate (\ref{eq:coh}) we generate a Jastrow factor, capturing correlations up to the length scale $\ell$ and making sure that the trial state vanishes on hard-cores (if $V (x) = \infty$ for all $|x| \leq r$, then clearly the minimizer $f$ of (\ref{eq:Ef}) satisfies $f(x) = 0$ for all $|x| \leq r$). To create correlations on larger length scales, we are going to use a Bogoliubov transformation. Proceeding as in \cite{BBCOS}, we define the coefficients \begin{equation}\label{eq:hatsk} \begin{split} 
\widehat{s}_p &= - \frac{p^2 + \rho_0 \widehat{V}_\text{eff} (p) -  \sqrt{|p|^4 + 2p^2 \rho_0 \widehat{V}_\text{eff} (p)}}{\sqrt{\rho_0^2 \widehat{V}_\text{eff} (p)^2 - \Big( p^2 + \rho_0 \widehat{V}_\text{eff} (p) -  \sqrt{|p|^4 + 2p^2 \rho_0 \widehat{V}_\text{eff} (p)}\Big)^2}}
\end{split} 
\end{equation}
for every $p \in \Lambda^* \backslash \{ 0 \}$, with $V_\mathrm{eff}$ as in (\ref{eq:Veff}). We set also $\widehat{s} (0) = 0$ and we denote by $s(x)$ the periodic function on $\Lambda$, with Fourier coefficients (\ref{eq:hatsk}). Then, we define  
\begin{equation}\label{eq:tlsigma} \tilde{\sigma} (x) = \frac{\chi_{\ell_0} (2x) (1- \chi_\ell (2x))}{f_\ell (x)} \big[ \rho_0 \omega_\ell (x) + s (x) \big] \end{equation} 
and 
\begin{equation}
\label{eq:def-sigma} \sigma (x) = \tilde{\sigma} (x) - \ell_0^{-3} \Big[ \int \tilde{\sigma} (z) dz \Big]  \ph (x/\ell_0) 
\end{equation} 
so that $\widehat{\sigma}_0 = 0$. We set now $\widehat{\eta}_p = \mathrm{sinh}^{-1} (\widehat{\sigma}_p)$, with $\widehat{\sigma}_p$ denoting the Fourier coefficients of (\ref{eq:def-sigma}), and we consider the Bogoliubov transformation 
\begin{equation}\label{eq:def-bog} T = \exp \Big[ \frac{1}{2} \sum_{p \in \Lambda^* \backslash \{ 0 \}} \widehat{\eta}_p \big( \hat{a}_p^* \hat{a}_{-p}^* - \hat{a}_p \hat{a}_{-p} \big) \Big] \,. \end{equation} 
The action of $T$ on creation and annihilation operators is explicitly given by 
\[ T^* \hat{a}_p T = \widehat{\gamma}_p \hat{a}_p + \widehat{\sigma}_p \hat{a}_{-p}^* \]
for every $p \in \Lambda^*$, with $\widehat{\gamma}_p = \cosh \eta_p = \cosh (\sinh^{-1} (\widehat{\sigma}_p))$. In the next lemma, which is an adaptation of \cite[Lemma 2.2]{BBCOS}, we collect some important properties of the kernels $\eta, \sigma, \gamma$. We briefly discuss its proof (focusing, in particular, on the differences w.r.t. the proof of \cite[Lemma 2.2]{BBCOS}) in Appendix \ref{app:kernels}. 
\begin{lemma} \label{lm:eta}
The coefficients $\widehat{s}_k$, defined in \eqref{eq:hatsk}, satisfy 
\begin{equation} \label{eq:sfourier} |\widehat{s}_k| \leq C \min \Big\{  \frac{\rho^{1/4}}{|k|^{1/2}} , \frac{\rho}{k^2} \Big\} \end{equation} 
for all $k \in \Lambda^*_+ = \Lambda^* \backslash \{ 0 \}$. For the function $s : \Lambda \to \bR$ with Fourier coefficients $\widehat{s}_k$ we find the pointwise bounds 
\begin{equation*} | s (x)| \leq C \frac{\rho}{|x|} \min \Big\{ 1 , \frac{1}{(\rho^{1/2} |x|)^{3/2}} \Big\} , \qquad |\nabla s (x)| \leq C \frac{\rho |\hspace{-.05cm} \log \rho |}{x^2} \end{equation*} 
for all $|x| \geq 15r_0$. Moreover 
\begin{equation}\label{eq:est-combi}
 \int_{|x|\geq \ell_0} |s(x)|^2 dx  \leq C\ell_0^{-2}\rho^{\frac{1}{2}}  , \;   \int_{2\frak{a} \leq |x|\leq  2\ell} 
 |s(x)|^2 dx  \leq C\ell\rho^{2}, \;  \int_{|x|\geq \ell_0} |\nabla s(x)|^2 dx  \leq  C\ell_0^{-4}\rho^{\frac{1}{2}}\, . \end{equation} 
 %
 %
%
Let now $\zeta\in \{\tilde\sigma, \sigma,\gamma-\mathbbm{1},\mathbbm{1} - \gamma^{-1} , \gamma^{-1}*\sigma\}$. Then we have 
\begin{equation} \label{eq:zeta-combi1} 
 \|\zeta\|^2_2 \leq C \rho^{\frac{3}{2}}, \qquad \| \zeta \|_\infty \leq C\, \frac{\rho}{\ell} , \qquad  \|\zeta\|_1  \leq C \big(\rho^{\frac{1}{2}}\ell_0\big)^3 
 \end{equation} 
 and also 
 \begin{equation}\label{eq:zeta-combi2}
   \|\nabla \zeta\|^2_2 \leq C \rho^{2} , \qquad  \|\nabla \zeta\|_\io  \leq C \, \frac{\rho}{\ell^2}
   \end{equation} 
%
and, for any $m \in \bN$, the pointwise decay estimate 
\begin{equation}
            \label{eq:decay}
 |\zeta(x)|  \leq C \, \frac{\rho}{\ell}\Big( \frac{\rho^{\frac{1}{4}}\ell_0^{\frac{3}{2}}}{|x|}\Big)^m \leq C \, \frac{\rho}{\ell} \, \frac{1}{\big( \rho^{1/2+3\eps/2} |x| \big)^{m}} \,.
\end{equation}
For $\zeta \in \{ \tilde{\sigma}, \sigma \}$, we get stronger estimates. In particular
\begin{equation}\label{eq:s2-restricted}
\int_{|x|\geq \ell_0} \big[ |\tilde{\sigma} (x)|^2 + |\sigma (x)|^2 \big] dx  \leq C\ell_0^{-2}\rho^{\frac{1}{2}}  , \qquad  \int_{|x|\leq  2\ell} 
 \big[ |\tilde{\sigma} (x)|^2 + | \sigma (x)|^2 \big] dx \leq C\ell\rho^{2}
 \end{equation} 
and  
\begin{equation} 
            \label{Th:est_3}
             \|\sigma\|_1  \leq 2\|\tilde \sigma\|_1\leq C \big( \rho^{\frac{1}{2}}\ell_0\big)^\frac{1}{2}\,.
             \end{equation}
Additionally, we get the pointwise bounds 
\begin{equation}\label{eq:snabla-point} |\nabla \tilde{\sigma} (x)| , |\nabla \sigma (x) | \leq C \frac{\rho | \hspace{-.03cm} \log \rho|}{x^2} \end{equation} 
for all $x \in \Lambda$, which imply  
\begin{equation}\label{eq:snabla-p}\| \nabla \sigma \|_p, \| \nabla \tilde{\sigma} \|_p  \leq C \rho |\hspace{-.03cm} \log \rho  | \left\{ \begin{array}{ll}  \ell^{3/p-2} \quad \text{if } p > 3/2 \\ \ell_0^{3/p-2} \quad \text{if } p < 3/2 \end{array} \right.\end{equation} 
and
\begin{equation}\label{eq:nu-nabla-p} \| \nabla (\gamma^{-1}*\sigma) \|_p  \leq C (\rho^{1/2} \ell_0 )^3 \rho |\hspace{-.03cm} \log \rho | \left\{ \begin{array}{ll}  \ell^{3/p-2} \quad &\text{if } p > 3/2 \\  \ell_0^{3/p-2} \quad &\text{if } p < 3/2 \end{array} \right.\end{equation} 
for all $1 \leq p \leq \infty$.
\end{lemma}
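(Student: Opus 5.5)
The plan is to follow the proof of \cite[Lemma 2.2]{BBCOS} step by step. For hard spheres one has $\widehat{V}_\mathrm{eff}(p) = 8\pi\frak{a}\,\frac{\sin(\frak{a}|p|)}{\frak{a}|p|}$. In our setting we only use the properties of $\widehat{V}_\mathrm{eff}$ proved in Lemma \ref{Lem:general_potential_eff}: it is non-negative as a distribution, supported in $B_{r_0}(0)$, satisfies $\widehat{V}_\mathrm{eff}(0)=8\pi\frak{a}$, and all derivatives of $\widehat{V}_\mathrm{eff}$ are bounded. These replace the explicit formula throughout. Since $V_\mathrm{eff}\geq 0$, we also have $|\widehat{V}_\mathrm{eff}(p)|\leq \widehat{V}_\mathrm{eff}(0)$.

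\textbf{Momentum-space bounds.} Write $A_p = p^2+\rho_0\widehat{V}_\mathrm{eff}(p)$ and $B_p = \rho_0\widehat{V}_\mathrm{eff}(p)$. A short computation shows that $\widehat{s}_p$ is the standard Bogoliubov coefficient, so that $\widehat{s}_p^2 = \frac{1}{2}\big(A_p/\sqrt{A_p^2-B_p^2}-1\big)$. For $|p|\lesssim \rho^{1/2}$ this gives $|\widehat{s}_p|\lesssim (\rho/p^2)^{1/4}$. For large $|p|$ it gives $|\widehat{s}_p|\lesssim \rho|\widehat{V}_\mathrm{eff}(p)|/p^2 \lesssim \rho/p^2$, which is \eqref{eq:sfourier}.

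One point needs care. $\widehat{V}_\mathrm{eff}(p)$ may become negative for $|p|\gtrsim r_0^{-1}$, and then the square root $\sqrt{p^4+2p^2\rho_0\widehat{V}_\mathrm{eff}}$ must remain real. This holds because $p^2 \gg \rho\frak{a}$ in that regime, so $p^4 + 2p^2\rho_0\widehat{V}_\mathrm{eff}(p) \geq p^4/2$. By the same argument the denominator in \eqref{eq:hatsk} is comparable to $\rho_0|\widehat{V}_\mathrm{eff}(p)|$.

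\textbf{Position-space bounds.} We obtain the pointwise bounds on $s$ and $\nabla s$ by integrating by parts in the Fourier series, treating $\widehat{s}$ as a function of $p$ (up to Riemann-sum errors controlled as in \cite{BBCOS}). Near $p=0$, $\widehat{s}$ has the singular behaviour $(\rho/p^2)^{1/4}$. At large $p$ it behaves like $\rho\widehat{V}_\mathrm{eff}(p)/(2p^2)$ plus terms decaying faster. The large-$p$ part yields $\rho\int \frac{V_\mathrm{eff}(y)}{8\pi|x-y|}\,dy$ plus remainders, and for $|x|\geq 15 r_0$ this behaves like $\rho\frak{a}/|x|$ by the same argument as for \eqref{eq:1-f}, using \eqref{Eq:formula_f_integ}. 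The derivative bounds $\|\nabla^m\widehat{V}_\mathrm{eff}\|_\infty<\infty$ are exactly what is needed to integrate by parts in $p$ the extra times required. This produces the decay $(\rho^{1/2}|x|)^{-3/2}$ and the $|\log\rho|$ factor in $\nabla s$, which comes from a logarithmically divergent momentum integral cut off at $|p|\sim r_0^{-1}$. The $L^2$ bounds in \eqref{eq:est-combi} then follow by integrating these pointwise estimates, or by Plancherel.

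\textbf{Bounds on $\zeta$.} For $\tilde\sigma$ we use $f_\ell \geq 1/2$ on the support of $1-\chi_\ell(2x)$, which holds since $|x|\geq \ell\gg\frak{a}$ there. We combine Lemma \ref{lm:omega} for $\rho_0\omega_\ell$ with the bounds on $s$. Because of the cutoff $\chi_\ell(2x)$, only the region $\ell\lesssim|x|\lesssim\ell_0$ contributes. This gives $\|\tilde\sigma\|_\infty\lesssim\rho/\ell$ and the $L^1$, $L^2$ and gradient bounds, together with \eqref{eq:s2-restricted}, \eqref{Th:est_3}, \eqref{eq:snabla-point} and \eqref{eq:snabla-p}. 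The smooth correction in \eqref{eq:def-sigma} is of lower order.

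For $\gamma-1$, $1-\gamma^{-1}$ and $\gamma^{-1}*\sigma$, we expand $\cosh(\sinh^{-1}\widehat\sigma)$ as a power series in $\widehat\sigma$, that is, as a series of convolution powers of $\sigma$. Each power series is controlled by $\|\sigma\|_1\ll 1$, using \eqref{Th:est_3} and $\eps$ small. The pointwise decay \eqref{eq:decay} follows because $\sigma$ is supported in $|x|\lesssim\ell_0$: the $j$-th convolution power is supported in $|x|\lesssim j\ell_0$, and the coefficients of the series decay fast enough to give the stated bound.

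\textbf{Main obstacle.} The hardest step is the pointwise asymptotics of $s(x)$, in particular the $1/|x|$ behaviour with the correct constant, which must combine with $\rho_0\omega_\ell$ in $\tilde\sigma$. For hard spheres this relied on the explicit $\sin$ formula. Here I would isolate the leading large-$p$ term $\rho_0\widehat{V}_\mathrm{eff}(p)/(2p^2)$ and treat it exactly through its position-space kernel \eqref{Eq:formula_f_integ}. The remainder decays like $\rho^2/|p|^4$ for large $p$ and can be handled by repeated integration by parts, using the bounded derivatives of $\widehat{V}_\mathrm{eff}$.
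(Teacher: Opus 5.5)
Your proposal has two genuine gaps.

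\textbf{High momenta.} Since $\widehat{V}_{\mathrm{eff}}$ need not decay, the remainder $\widehat R_p=\widehat s_p+\rho_0\widehat{V}_{\mathrm{eff}}(p)/(2p^2)\simeq\rho_0^2\widehat{V}_{\mathrm{eff}}(p)^2/(2p^4)$ has all discrete derivatives $\delta_j^m\widehat R_p$ of size only $O(\rho^2|p|^{-4})$ for $|p|\geq 1$. Integrating by parts in $p$ produces powers of $|x|^{-1}$, but gains nothing in $|p|$, because $\nabla^m\widehat{V}_{\mathrm{eff}}$ is merely bounded. The resulting bounds $|x|^m|\nabla^n R(x)|\lesssim L^{-3}\sum_p|p|^n|\delta_j^m\widehat R_p|$ diverge for every $n\geq1$, logarithmically already for $n=1$.

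So your mechanism gives no pointwise bound on $\nabla s$ for $|x|\geq15r_0$. Nor does it control $\Delta s$ and $\nabla\Delta s$ on $\{|x|\geq r\}$. The paper needs these, through \eqref{eq:int-s}, for the bounds on $\gamma-1$, $1-\gamma^{-1}$, $\gamma^{-1}*\sigma$, in particular \eqref{eq:decay}.

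The paper handles this in two steps.
\begin{enumerate}
\item It expands $\widehat s_k$ to \emph{third} order, \eqref{eq:sfourier_II}, with remainder $O(\rho^4|k|^{-8})$ uniformly in $\delta_j^m$.
\item It realises each order through a compactly supported kernel. The function $D'=V_{\mathrm{eff}}*h$, with $h=-\rho_0\chi_{r_0}/(8\pi|\cdot|)$, is supported in $B_{5r_0}(0)$, and $\widehat{D}'_k+\rho_0\widehat{V}_{\mathrm{eff}}(k)/(2k^2)$ decays faster than any power. Hence $s-s_1=D+2D*D+\frac{11}{2}D*D*D+E$, where $D$ is the high-momentum part of $D'$. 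The terms $D$, $D*D$, $D*D*D$ decay rapidly for $|x|\geq15r_0$, and $|k|^3\widehat E_k$ is summable.
\end{enumerate}
All $1/|x|$ behaviour then sits in the low-momentum part $s_1$, treated as in \cite{BBCOS}. The exact constant you single out as the main obstacle is not needed, since the lemma only asserts upper bounds.

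There is also a smaller point. Integrating $|\nabla s|\lesssim\rho|\log\rho|/x^2$ over $|x|\geq\ell_0$ only gives $\rho^2|\log\rho|^2\ell_0^{-1}\gg\ell_0^{-4}\rho^{1/2}$. The last bound in \eqref{eq:est-combi} needs the faster decay of $\nabla s$ beyond the healing length, i.e.\ the first line of \eqref{eq:int-s}.

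\textbf{The kernels $\gamma-1$, $1-\gamma^{-1}$, $\gamma^{-1}*\sigma$.} Estimate \eqref{Th:est_3} does not give $\|\sigma\|_1\ll1$, since $(\rho^{1/2}\ell_0)^{1/2}=(\rho\mathfrak{a}^3)^{-\eps/2}\to\infty$. This is sharp: for $\rho^{-1/2}\ll|x|\leq\ell_0$, $s(x)$ has a fixed sign and size $\rho^{1/4}|x|^{-5/2}$, so $\int_{|x|\leq\ell_0}|s|\sim\rho^{1/4}\ell_0^{1/2}$.

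Correspondingly, $\widehat\sigma_k$ is large at low momenta. For $\ell_0^{-1}\ll|k|\ll\rho^{1/2}$ it is close to $\widehat s_k$, which has size $(\rho/k^2)^{1/4}\gg1$. This lies outside the radius of convergence $1$ of the power series of $\sqrt{1+z^2}$, $1/\sqrt{1+z^2}$ and $z/\sqrt{1+z^2}$. So your expansion in convolution powers of $\sigma$ diverges, and the support argument cannot produce \eqref{eq:decay}.

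Note that the decay length in \eqref{eq:decay}, $\rho^{1/4}\ell_0^{3/2}=\ell_0(\rho^{1/2}\ell_0)^{1/2}$, is exactly $\ell_0$ times the bound on $\|\sigma\|_1$. That is what a Fourier-space argument as in \cite{BBCOS} yields:
\begin{enumerate}
\item bound $|x|^m|\zeta(x)|$ by $\|\delta_j^m\widehat\zeta\|_1$;
\item write $\widehat\zeta_k=F(\widehat\sigma_k)$, where $F$ has bounded derivatives on all of $\mathbb{R}$, so no smallness is required;
\item use $|\delta_j^n\widehat\sigma_k|\lesssim\ell_0^n\|\sigma\|_1$, which follows from the support of $\sigma$;
\item obtain summability in $k$ from $L^2$ bounds on higher derivatives of $\sigma$ on $\{|x|\geq\ell\}$.
\end{enumerate}
The last step is precisely where \eqref{eq:int-s}, and hence the third-order high-momentum expansion, enter.
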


We are going to consider the trial state 
\begin{equation}\label{eq:trial} \Psi = \frac{1}{Z} J W(\rho_0) T \Omega \in \cF \end{equation}
with $J$ as in (\ref{eq:defJ}), the Weyl operator $W(\rho_0)$ as in (\ref{eq:weyl}), the Bogoliubov transformation $T$ from (\ref{eq:def-bog}) and with the normalization constant $Z = \| J W(\rho_0) T \Omega \|$. We choose $0 < \rho_0 < \rho$ such that 
\begin{equation}\label{eq:choice} \rho+ C _1 \rho^{7/4-11\eps-\delta} \leq \rho_0 + \| \sigma \|^2 \leq \rho + C_2 \rho^{7/4-11\eps-\delta} \end{equation} 
for fixed constants $C_1 < C_2$. This is possible because  $\| \sigma \|_2^2 \simeq \rho^{3/2} \gg \rho^{7/4-11\eps-\delta}$.

As observed in \cite[Lemma 2.3]{BBCOS}, we have the important identity 
\begin{equation}\label{eq:id} a_x \Psi = \sqrt{\rho_0}J(x)\Psi+J(x)\int dy\,(\gamma^{-1} * \sigma)(x-y)a^*_y J(y)\Psi.
\end{equation}

The proof of Theorem \ref{thm:main} is based on the following two propositions, estimating number of particles and energy of the trial state (\ref{eq:trial}). 
\begin{proposition} \label{prop:N_on_psi}
	We have 
	\begin{equation} \label{eq:cN_ub}
		\big| \langle \Psi,\mathcal{N}\Psi\rangle - \rho L^3 \big| \leq C \rho^{7/4-11\eps-\delta} L^3
	\end{equation}    
	if $\eps , \delta > 0$ are small enough.  	
\end{proposition}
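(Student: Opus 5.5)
We compute $\langle \Psi, \mathcal{N}\Psi\rangle = \int dx\, \|a_x\Psi\|^2$ using the identity (\ref{eq:id}). Inserting it gives
\[ \|a_x \Psi\|^2 = \rho_0 \|J(x)\Psi\|^2 + 2\sqrt{\rho_0}\,\mathrm{Re}\,\Big\langle J(x)\Psi, J(x)\int dy\,\nu(x-y) a_y^* J(y)\Psi\Big\rangle + \Big\| J(x)\int dy\, \nu(x-y)a_y^*J(y)\Psi\Big\|^2, \]
with $\nu = \gamma^{-1}*\sigma$. By translation invariance of $\Psi$ (all ingredients commute with translations), each term is independent of $x$. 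So it suffices to show that the right-hand side equals $\rho_0 + \|\sigma\|_2^2 + O(\rho^{7/4-11\eps-\delta})$. The choice (\ref{eq:choice}) then gives (\ref{eq:cN_ub}) with constants adjusted; in particular, $C_1,C_2$ can absorb the error, since (\ref{eq:choice}) fixes $\rho_0+\|\sigma\|^2-\rho$ to the same order.

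\textbf{Step 1 (condensate term).} We write $\rho_0\|J(x)\Psi\|^2 = \rho_0 - \rho_0\langle \Psi,(1-J(x)^2)\Psi\rangle$. By (\ref{eq:1-J}), $0\le 1-J(x)^2 \le 2(1-J(x)) \le 2\,d\Gamma(\omega_{\ell,x})$. Iterating (\ref{eq:id}) once more, and using $\|\omega_\ell\|_1 \lesssim \frak{a}\ell^2$ from Lemma \ref{lm:omega}, the correction is bounded by $\rho_0\cdot \rho\,\|\omega_\ell\|_1 \lesssim \rho^2\ell^2$. Since $\ell = \rho^{-\delta}$ in units $\frak{a}=1$, this is $\rho^{2-2\delta}$, which is fine for small $\delta$.

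\textbf{Step 2 (cross and quadratic terms).} In the cross term, we move $a_y^*$ to the left, obtaining $\langle a_y J(x)^2\Psi, J(y)\Psi\rangle$. We commute $a_y$ through $J(x)^2$ at the cost of a factor $f_\ell(x-y)^2$, and then apply (\ref{eq:id}) again to $a_y\Psi$. The leading contribution is $2\rho_0\int dy\, \nu(x-y) f_\ell^2(x-y)\langle J(x)\Psi,J(x)J(y)^2\Psi\rangle \approx 2\rho_0\int \nu f_\ell^2$, and this essentially vanishes. Indeed, $\widehat\sigma_0=0$, so $\int\sigma=0$ and $\int\nu = \widehat\gamma_0^{-1}\widehat\sigma_0 = 0$, while $\sigma$ is supported away from $|x|\le\ell$ where $f_\ell\ne 1$, up to the smooth tail of $\gamma^{-1}$. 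The remaining error is controlled through the deviation $1-f_\ell^2$ and $\|\nu\|_\infty,\|\nu\|_1$ from (\ref{eq:zeta-combi1}). The subleading piece, coming from the second term of (\ref{eq:id}), is cubic in $\nu$, giving roughly $\sqrt{\rho_0}\,\|\nu\|_2^2\,\sqrt{\rho}\,\|\nu\|_1$-type bounds.

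For the quadratic term we use the CCR. Writing $a_y a_z^* = a_z^*a_y + \delta(y-z)$ yields the leading contribution $\int dy\,|\nu(x-y)|^2\langle J(y)\Psi,J(x)^2J(y)\Psi\rangle \approx \|\nu\|_2^2$, plus a normal-ordered remainder. In that remainder we apply (\ref{eq:id}) to both $a_y$ and $a_z$, producing a term $\rho_0 |\int\nu|^2\approx 0$ and higher-order pieces. We then replace $\|\nu\|_2^2$ by $\|\sigma\|_2^2$; the difference is $\|\gamma^{-1}*\sigma\|^2-\|\sigma\|^2 = O(\|\sigma\|_\infty\|\sigma\|_2^2\cdot\|\widehat\sigma\|_\infty)$, which should be $O(\rho^2)$-ish.

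\textbf{Main obstacle.} The hard step is controlling the errors coming from the Jastrow factors $J(x),J(y)$ in these expectations. These require a priori bounds on $\langle\Psi, d\Gamma(\omega_{\ell,x})\Psi\rangle$ and on the two-body density of $\Psi$ at short distances. They also require the normalization $Z$ to be handled without circularity. As in \cite{BBCOS}, I would first prove a rough bound $\langle\Psi,\mathcal N\Psi\rangle\lesssim \rho L^3$ and a bound on $\int\int \omega_\ell(x-y)\langle\Psi, a_x^*a_y^*a_ya_x\Psi\rangle \lesssim \rho^2\ell^2 L^3$ by repeated use of (\ref{eq:id}). These feed back into the precise computation above. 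The final exponent $7/4-11\eps-\delta$ is dictated by the worst such term, e.g.\ products like $\rho^{1/2}\cdot\rho^{3/2}\cdot\|\nu\|_1\cdot\ell^{-1}$ using (\ref{Th:est_3}) and (\ref{eq:zeta-combi1}).
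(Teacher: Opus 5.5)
\textbf{Verdict: genuine gap.} The error is at leading order, in the quadratic term.

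\textbf{The quadratic term.} The $\delta$-part of the CCR there gives $\|\nu\|_2^2=\|\gamma^{-1}*\sigma\|_2^2$. You then assert two things: that replacing this by $\|\sigma\|_2^2$ costs only $O(\rho^2)$, and that the normal-ordered remainder yields $\rho_0|\int\nu|^2\approx 0$ plus negligible pieces. Both assertions are false.

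Since $\widehat\gamma_p^2=1+\widehat\sigma_p^2$, one has $\|\sigma\|_2^2-\|\nu\|_2^2=L^{-3}\sum_p\widehat\sigma_p^4/(1+\widehat\sigma_p^2)$. Moreover $\widehat\sigma_p$ is of order one for $|p|$ of order $\rho^{1/2}$, so $\|\widehat\sigma\|_\infty$ is not small. Hence this difference is of order $\rho^{3/2}$. That is as large as $\|\sigma\|_2^2$ itself and far above the admissible error $\rho^{7/4-11\eps-\delta}$.

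The missing amount is exactly what the normal-ordered remainder contributes. After removing the $J$'s and $f_\ell$'s, the term $\int dy\,dz\,\nu(x-y)\nu(x-z)\langle a_z\Psi,a_y\Psi\rangle$ equals $(\nu*\nu*\sigma*\sigma)(0)$ up to small errors. Only the sum $(\nu*\nu)(0)+(\nu*\nu*\sigma*\sigma)(0)=(\nu*\nu*\gamma*\gamma)(0)=\|\sigma\|_2^2$ is correct; compare the $\text{A}_1+\text{A}_2$ mechanism in the proof of Lemma \ref{lm:phi3}.

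Applying (\ref{eq:id}) once more to $a_y,a_z$, as you propose, does not recover this. Each further iteration only produces the next term of $\sum_{k\geq 1}(\nu^{*2k})(0)=L^{-3}\sum_p\widehat\nu_p^2/(1-\widehat\nu_p^2)=\|\sigma\|_2^2$. Since $\widehat\nu_p=\tanh\widehat\eta_p$ is close to $1$ at low momenta, these terms decay only slowly in $k$, and no finite truncation reaches the required precision.

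\textbf{The cross term.} It has a related problem. Its subleading piece is not cubic in $\nu$: it carries one field $a_z$ weighted by $\nu(x-y)\nu(y-z)$. A direct bound then gives only $\rho^{1/2}\|\nu\|_1^2\|a_z\Psi\|\sim\rho^{1-O(\eps)}$. Making it small requires exploiting $\int\nu=0$ again, i.e.\ controlling $\langle\Psi,b_z\Psi\rangle$.

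\textbf{The missing idea.} Undo the Bogoliubov transformation instead of iterating the Jastrow identity; the paper does not use (\ref{eq:id}) for this proposition at all. Write $a_x=\sqrt{\rho_0}+b_x$ and $b_x=c(\gamma_x)+c^*(\sigma_x)$. Then $\langle\Psi,\mathcal N\Psi\rangle=\rho_0L^3+2\sqrt{\rho_0}\,\mathrm{Re}\int dx\,\langle\Psi,b_x\Psi\rangle+\int dx\,\langle\Psi,b_x^*b_x\Psi\rangle$. The commutator $[c(\sigma_x),c^*(\sigma_x)]=\|\sigma\|_2^2$ produces the full $\|\sigma\|_2^2L^3$ at once. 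The remaining terms are handled as follows.
\begin{itemize}
\item The linear term is $O(\rho^{2-9\eps-2\delta}L^3)$ by (\ref{eq:1bb}).
\item The diagonal terms $c^*(\gamma_x)c(\gamma_x)$ and $c^*(\sigma_x)c(\sigma_x)$ are $O(\rho^{2-22\eps-2\delta}L^3)$ by Lemma \ref{lm:c's}.
\item The off-diagonal terms $c^*(\gamma_x)c^*(\sigma_x)$ and $c(\sigma_x)c(\gamma_x)$ follow from Cauchy--Schwarz together with $\|c^*(\sigma_x)\Psi\|^2=\|c(\sigma_x)\Psi\|^2+\|\sigma\|_2^2$. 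This gives $(\rho^{2-22\eps-2\delta})^{1/2}(\rho^{3/2})^{1/2}L^3=\rho^{7/4-11\eps-\delta}L^3$.
\end{itemize}
This Cauchy--Schwarz step, not a product of the kind you suggest, is the source of the exponent in (\ref{eq:cN_ub}); (\ref{eq:choice}) is then tuned to the same order. Finally, the a priori bounds you planned to derive first are exactly Lemmas \ref{lm:a-bds} and \ref{lm:c's}, which are already available.
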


\begin{proposition} \label{prop:Psi-energy} 
Let $0 < \delta < \eps/2$, with $\eps > 0$ small enough. Then there exist $C>0$ such that
	\begin{equation} \label{eq:Hpsi0}
		L^{-3} \langle \Psi,\mathcal{H} \Psi\rangle \leq 4 \pi \frak{a} \rho^2 \Big(1 + \frac{128}{15\sqrt{\pi}} (\rho \frak{a}^3)^{1/2} + C (\rho \frak{a}^3)^{1/2+\delta} \Big) 
	\end{equation}
	for all $\rho \frak{a}^3 > 0$ small enough. 
\end{proposition}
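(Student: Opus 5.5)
We plan to follow the strategy of \cite[Proposition 3.1]{BBCOS} and compute the energy of $\Psi$ by means of the identity (\ref{eq:id}). This avoids ever commuting $\cK$ or $\cV$ through $J$. We write $\langle \Psi, \cK\Psi\rangle = \int dx\, \|\nabla_x a_x \Psi\|^2$ and $\langle \Psi, \cV \Psi\rangle = \frac12\int dx dy\, V(x-y)\|a_y a_x\Psi\|^2$, and in both we insert (\ref{eq:id}). For the potential energy we apply (\ref{eq:id}) twice. Commuting $a_y$ through $J(x)$ produces the factor $f_\ell(x-y)$, so that every term contains $V(x-y) f_\ell(x-y)^2 = V f^2$, which is finite even for hard-core potentials. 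Here we use that $J(x)J(y)$ and $f_\ell(x-y)$ annihilate the region where $V=\infty$.

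For the kinetic energy, $\nabla_x$ acts on three factors. Acting on $J(x)$ it produces $\sum_j \nabla f_\ell(x-x_j)/f_\ell(x-x_j)$ times $J(x)$. Acting on $\gamma^{-1}*\sigma$ it produces the Bogoliubov correlation. We also need a contribution from $a_x$ itself for the non-condensed part.

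The leading terms, of order $\rho_0^2 L^3$, come from $\rho_0\int|\nabla f_\ell|^2$ combined with the pair structure and $\frac12 V f_\ell^2$. By (\ref{eq:en-fl}) of Lemma \ref{lm:omega} they give $4\pi\frak a \rho_0^2 L^3$ up to $\rho^2\frak a^2/\ell$. Since $\ell = \frak a(\rho\frak a^3)^{-\delta}$, this error is of order $\rho^{5/2+\delta}$ (with $\frak a=1$) provided $\delta<1/2$.

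The next step is to identify the quadratic Bogoliubov expression. Collecting the terms in which $\nabla_x$ hits $\gamma^{-1}*\sigma$, the cross terms between $\sqrt{\rho_0}$ and the $\sigma$-part, and the corresponding potential terms, we expect to get, after Fourier transform,
\[ \sum_{p\neq0} \Big[ p^2 \widehat\sigma_p^2 + \rho_0\widehat V_{\rm eff}(p)\big(\widehat\sigma_p^2+\widehat\gamma_p\widehat\sigma_p\big)\Big]. \]
In deriving this, the combination $2\Delta f_\ell$ replaces the singular $V$ by $V_{\rm eff}$, up to the cutoff effects at $\ell$ and $\ell_0$. The choice (\ref{eq:hatsk}) of $\widehat s_p$ minimizes this expression; this is the reason for the definition of $\tilde\sigma$ through $\rho_0\omega_\ell + s$. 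Together with the shift from $\rho_0$ to $\rho$ fixed by (\ref{eq:choice}), i.e. $\rho_0=\rho-\|\sigma\|^2+O(\rho^{7/4-11\eps-\delta})$, the standard computation of \cite{BBCOS}, using $\widehat V_{\rm eff}(0)=8\pi\frak a$ and the smoothness of $\widehat V_{\rm eff}$ from Lemma \ref{Lem:general_potential_eff}, yields the Lee-Huang-Yang constant $\frac{128}{15\sqrt\pi}$. Replacing $\widehat V_{\rm eff}(p)$ by $8\pi\frak a$ at $|p|\lesssim\rho^{1/2}$ costs only $O(\rho^{5/2+})$ by the bound on $\nabla^2\widehat V_{\rm eff}$. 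The normalization $Z$ cancels because all of the computation is expressed through expectations in $\Psi$.

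The main obstacle is the control of the error terms. Examples are the terms where $J(x)$, $J(y)$ are replaced by $1$, using (\ref{eq:1-J}), and the three-body terms in which $\nabla f_\ell(x-x_j)$ acts on a particle other than the correlated partner. For these we need a priori bounds on $\langle\Psi,\cN\Psi\rangle$ (Proposition \ref{prop:N_on_psi}) and on higher moments $\langle\Psi, d\Gamma(\omega_{\ell,x})^k\Psi\rangle$. We will adapt the arguments of \cite{BBCOS}, using Lemma \ref{lm:eta} and Lemma \ref{lm:omega}. For a general potential, however, we only have $|\nabla f_\ell|\lesssim \frak a/x^2$ for $|x|>2r_0$, and no explicit formula for $|x|\le 2r_0$. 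Terms like $\int |\nabla f_\ell|^2 (1-f_\ell)$ near the support of $V$ must therefore be bounded using only the finiteness of $\int(|\nabla f|^2+Vf^2)$ and $0\le f\le1$, rather than the hard-sphere identity $f=1-\frak a/|x|$. These errors are of size $\rho^{5/2}(\rho^{c\eps}+\ldots)$, and the constraint $\delta<\eps/2$ is what makes them all $\lesssim \rho^{5/2+\delta}$.
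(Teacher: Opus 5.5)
\paragraph{Verdict.} Your overall architecture matches the paper's: the kinetic energy is taken from \cite[Prop.~2.5]{BBCOS}, and the potential energy is obtained by applying (\ref{eq:id}) twice, so that every term carries $Vf_\ell^2$. However, the treatment of the leading order contains a genuine error.

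\paragraph{The truncation error at scale $\ell$ is not negligible.} You evaluate the $\rho_0^2$ contribution with (\ref{eq:en-fl}) at scale $\ell$ and claim the error $\rho^2\mathfrak{a}^2/\ell$ is $O(\rho^{5/2+\delta})$. It is not. Since $\ell=\mathfrak{a}(\rho\mathfrak{a}^3)^{-\delta}$, one has $\rho^2\mathfrak{a}^2/\ell=\mathfrak{a}\rho^2(\rho\mathfrak{a}^3)^{\delta}$. For $\delta<1/2$ this is much larger than the Lee-Huang-Yang term $\mathfrak{a}\rho^2(\rho\mathfrak{a}^3)^{1/2}$. The error is also sharp: forcing the scattering solution to equal $1$ beyond radius $\sim\ell$ raises its energy by $\sim\mathfrak{a}^2/\ell$.

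You seem to have conflated this with the terms $\mathcal{R}_1,\mathcal{R}_3$. There (\ref{eq:en-fl}) at scale $\ell$ is applied with prefactor $\rho_0\|\sigma\|_2^2\sim\rho^{5/2}$, which does give $\rho^{5/2+\delta}$. For the $\rho_0^2$ term the paper never isolates the Jastrow energy. It groups $\rho_0^2\int|\nabla f_\ell|^2$ together with the cross terms $2\rho_0\int|\nabla f_\ell|^2\sigma+2\rho_0\int f_\ell\nabla f_\ell\cdot\nabla\sigma$ and with $\int|\nabla(f_\ell\sigma)|^2$ into $\|\nabla\theta\|_2^2$, where $\theta=-\rho_0\omega_\ell+f_\ell\tilde\sigma$ as in (\ref{def:nu}).

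This grouping is the real purpose of the term $\rho_0\omega_\ell$ in (\ref{eq:tlsigma}). Combined with $s\approx-\rho_0\omega$ at distances $\ll\rho^{-1/2}$, it makes $\theta\approx-\rho_0\omega$ all the way out to $\ell_0$. Consequently, in (\ref{eq:nabla-nu}) the bound (\ref{eq:en-fl}) is used at scale $\ell_0$, with error $\rho^2\mathfrak{a}^2/\ell_0\lesssim\rho^{5/2+\eps}$.

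\paragraph{The same defect in your quadratic expression.} The kernel $\sigma$ is essentially $s$ with the region $|x|\lesssim\ell$ cut out. Hence $\widehat\sigma_p\approx-\rho_0\widehat V_{\rm eff}(p)/2p^2$ for $\rho^{1/2}\ll|p|\ll\ell^{-1}$. In that range the summand of $\sum_p[p^2\widehat\sigma_p^2+\rho_0\widehat V_{\rm eff}(p)(\widehat\sigma_p^2+\widehat\gamma_p\widehat\sigma_p)]$ behaves like $-\rho_0^2\widehat V_{\rm eff}(p)^2/4p^2$. The sum therefore contains a negative piece of order $\rho^2\mathfrak{a}^2/\ell$ per unit volume, which is exactly what cancels the truncation error you discarded.

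The renormalizing term $\rho_0^2\widehat V_{\rm eff}(p)^2/4p^2$ is missing from your expression. In the paper it comes from $\|\nabla(\theta+\rho_0\omega\chi_{\ell_0}(2\cdot))\|^2\approx\|\nabla(\rho_0\omega_{\ell_0}+s)\|^2$ (see $\mathcal{E}_4,\mathcal{E}_5$). Without it, your expression does not produce $\frac{128}{15\sqrt\pi}$. So each of your two claims fails on its own; only their combination, organized as in (\ref{eq:Hpsi})--(\ref{eq:E1-Z}), is correct.

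\paragraph{Where the general-potential difficulty actually lies.} Once the grouping is done, the new issue in $\mathcal{E}_1$ is that $\nabla Z$ is no longer bounded, because $\widehat V_{\rm eff}$ need not decay. The paper handles this by splitting $Z=Z_1+Z_2$ and using the $L^p$ bounds (\ref{eq:bds-Z}). Your ``main obstacle'' paragraph does not address this point. The bounds near $\text{supp}\,V$ that you worry about are not where the argument changes: the kinetic bound is taken over from \cite{BBCOS} unchanged.
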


\begin{proof}[Proof of Theorem \ref{thm:main}]
Given Prop. \ref{prop:N_on_psi} and Prop. \ref{prop:Psi-energy}, the proof of Theorem \ref{thm:main} is based on equivalence of ensembles. The argument is identical to the one used in \cite{BBCOS}. 
\end{proof}

\section{Number of particles in the trial state}

In this section, we show Prop. \ref{prop:N_on_psi}. To this end, we recall from \cite{BBCOS} that (\ref{eq:trial}) satisfies some important a-priori bounds on the local number of particles and also on the local number of excitations of the Bose-Einstein condensate, as measured by the fields 
\begin{equation}\label{eq:bb} \begin{split} b^*_x &= a^*_x - \sqrt{\rho_0} = W (\rho_0) a^*_x W(\rho_0)^* \\  b_x &= a_x - \sqrt{\rho_0} = W(\rho_0) a_x W(\rho_0)^*\,. \end{split} \end{equation} 
The following lemma is shown in \cite[Lemma 4.3, Lemma 4.4]{BBCOS}.
\begin{lemma} \label{lm:a-bds} 
Let $R = C \rho^{-1/2-3\eps}$, $k \in \bN \backslash \{ 0 \}$. Then we have 
\begin{equation}\label{eq:kas}  \int dx_1 \dots dx_k \, \prod_{j=2}^k \chi (|x_1 - x_j| \leq R)  \langle \Psi, a_{x_1}^* \dots a_{x_k}^* a_{x_k} \dots a_{x_1} \Psi \rangle \leq C \rho^{-(k-3)/2 - (12 k -9) \eps} L^3 \end{equation} 
and 
\begin{equation}\label{eq:kbs}  \int dx_1 \dots dx_k \, \prod_{j=2}^k \chi (|x_1 - x_j| \leq R)  \langle \Psi, b_{x_1}^* \dots b_{x_k}^* b_{x_k} \dots b_{x_1} \Psi \rangle \leq C \rho^{3/2 - (10k -9) \eps} L^3 \end{equation} 
if $\eps, \delta > 0$ and $\rho > 0$ are small enough (if $k=1$, we remove $\prod_{j=2}^k \chi (|x_1 - x_j| \leq R)$). Moreover, if we have only one annihilation (or creation) operator, we find \begin{equation}\label{eq:1bb} \Big| \sqrt{\rho_0} \int dx\langle \Psi, a_x \Psi \rangle - \rho_0 L^3 \Big| = \Big| \sqrt{\rho_0} \int dx \langle \Psi, b_x \Psi \rangle \Big| \lesssim \rho^{2-9\eps-2\delta} L^3 \end{equation} 
if $\eps, \delta > 0$ and $\rho >0$ are small enough. 
\end{lemma}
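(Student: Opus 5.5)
The proof follows \cite[Lemma 4.3, Lemma 4.4]{BBCOS}. The point to check is that the hard-sphere assumption entered there only through properties we already have for general $V$. These are: the pointwise bounds $0\le f_\ell\le 1$ and $\omega_\ell(x)\lesssim \frak{a}\chi_\ell(x)/|x|$ from Lemma \ref{lm:omega}, the kernel estimates of Lemma \ref{lm:eta}, and the identity (\ref{eq:id}). The interaction $V$ itself never appears in (\ref{eq:kas})--(\ref{eq:1bb}). Write $\nu=\gamma^{-1}*\sigma$. The basic tool is (\ref{eq:id}),
\[ a_x\Psi = J(x)\Big[\sqrt{\rho_0}\,\Psi + \int dy\,\nu(x-y)\,a_y^*J(y)\Psi\Big], \]
together with the facts that all $J(x)$ are commuting multiplication operators with $0\le J(x)\le 1$, and $a_yJ(x)=f_\ell(x-y)J(x)a_y$ with $0\le f_\ell\le1$.

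For (\ref{eq:kas}) I will argue by induction on $k$. I apply (\ref{eq:id}) to the innermost annihilation operator $a_{x_1}\Psi$ and then move the remaining $a_{x_2},\dots,a_{x_k}$ to the right.
\begin{itemize}
\item Passing $a_{x_j}$ through $J(x_1)$ or $J(y)$ only produces factors $f_\ell\in[0,1]$. On each $n$-particle sector these are pointwise multiplications, so they can be dropped in norm estimates.
\item Passing $a_{x_j}$ through $a_y^*$ produces $\delta(x_j-y)$, i.e.\ a term with one fewer operator, weighted by $\nu(x_1-x_j)$.
\end{itemize}
This expresses $a_{x_k}\cdots a_{x_1}\Psi$ as a sum of three kinds of terms: $\sqrt{\rho_0}$ times strings with $k-1$ annihilators, contraction terms carrying factors $\nu(x_1-x_j)$, and terms $\int\nu(x_1-y)a_y^*(\cdots)$. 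The last kind I bound by Cauchy--Schwarz, using $\|\nu\|_2^2\le C\rho^{3/2}$ and $\|\nu\|_\infty\le C\rho/\ell$, after commuting $a_y^*$ back to normal order.

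Integrating over the region $|x_1-x_j|\le R$ with $R=C\rho^{-1/2-3\eps}$ gives each integration variable a volume factor $R^3\sim\rho^{-3/2-9\eps}$ wherever no kernel is available to integrate against. Combined with $\rho_0\le\rho$ and $\|\nu\|_1\le C(\rho^{1/2}\ell_0)^3= C\rho^{-9\eps}$, every additional particle costs at most $\rho R^3\cdot\rho^{-O(\eps)}\sim\rho^{-1/2-O(\eps)}$. This is how the exponent $-(k-3)/2-(12k-9)\eps$ arises. The base case $k=1$ is $\langle\Psi,\mathcal N\Psi\rangle\lesssim\rho L^3$, obtained the same way from (\ref{eq:id}) and $\|\nu\|_2^2\lesssim\rho^{3/2}$.

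For (\ref{eq:kbs}) I use the modified identity
\[ b_x\Psi=-\sqrt{\rho_0}\,(1-J(x))\Psi+J(x)\int dy\,\nu(x-y)a_y^*J(y)\Psi. \]
I bound $1-J(x)\le d\Gamma(\omega_{\ell,x})$ by (\ref{eq:1-J}); then $\|\omega_\ell\|_1\lesssim\frak{a}\ell^2$ and $\|\omega_\ell\|_2^2\lesssim\frak{a}^2\ell$ make the first term small. The second term is controlled by the $a$-bounds (\ref{eq:kas}) just proven, each factor $\nu$ providing $\rho^{3/2}$-smallness in place of a factor $\rho$. For (\ref{eq:1bb}) I write $\int\langle\Psi,b_x\Psi\rangle$ through the same identity:
\begin{itemize}
\item The $(1-J)$ term gives at most $\sqrt{\rho_0}\int dx\,dy\,\omega_\ell(x-y)\langle\Psi,a_y^*a_y\Psi\rangle\lesssim\rho^{3/2}\frak{a}\ell^2L^3$, which is of order $\rho^{2-2\delta}$ in units $\frak{a}=1$ after multiplying by $\sqrt{\rho_0}$.
\item The term linear in $a_y^*$ becomes $\int\nu(x-y)\langle a_y J(x)\Psi, J(y)\Psi\rangle$. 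I apply (\ref{eq:id}) once more. The leading contribution is $\sqrt{\rho_0}\int\nu=\sqrt{\rho_0}\,\widehat\nu_0=0$, since $\widehat\sigma_0=0$ by (\ref{eq:def-sigma}). The remainders involve $1-J$ or an extra $\nu$ and are bounded as above, using (\ref{Th:est_3}) for $\|\sigma\|_1$.
\end{itemize}

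I expect the main obstacle to be the bookkeeping in the induction for (\ref{eq:kas}). The commutators of the $a_{x_j}$ with the $a_y^*$ generated at each step multiply the number of terms. The cutoff $\chi(|x_1-x_j|\le R)$ must be propagated so that each variable contributes either a kernel norm or a factor $R^3$, but never both in the wrong place. The exact accounting of the $\eps$-losses, giving $12k-9$ and $10k-9$, has to be carried over from \cite{BBCOS}; I would reproduce their scheme line by line and check that only the bounds listed above are used.
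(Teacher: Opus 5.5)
Your top-level reduction matches the paper's. The paper gives no argument for this lemma beyond citing \cite[Lemma 4.3, Lemma 4.4]{BBCOS}. Its implicit justification is yours: $\Psi$ depends on $V$ only through $f_\ell$ and the kernels $\sigma,\gamma,\nu$, whose needed properties are re-established in Lemmas~\ref{lm:omega} and~\ref{lm:eta}. However, the mechanism you sketch for what happens inside \cite{BBCOS} does not work, and the obstacle is not bookkeeping. With only the norm bounds you list, the hierarchy does not close.

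Write $I_k$ for the left side of (\ref{eq:kas}). Apply (\ref{eq:id}) to $a_{x_1}$ and normal order the term $\int dy\,\nu(x_1-y)a_y^*(\cdots)$.
\begin{itemize}
\item The contraction gives $\|\nu\|_2^2\lesssim\rho^{3/2}$ times a $(k-1)$-point quantity, which is fine.
\item The cross term $\int dy\,dy'\,\nu(x_1-y)\nu(x_1-y')\langle a_{y'}(\cdots),a_y(\cdots)\rangle$ is again a $k$-point quantity of the type you are estimating. The $J(y)$, $J(y')$, $f_\ell$ factors depend on $y,y'$, so you can drop them only after taking absolute values. This term then carries the coefficient $2\|\nu\|_1^2$. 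If instead you use $\|a(h)\Phi\|^2\le\|h\|_2^2\langle\Phi,\mathcal{N}_B\Phi\rangle$, the coefficient is $\|\nu\|_2^2R^3\sim\rho^{-9\eps}$; $\|\nu\|_\infty$ only costs volume.
\end{itemize}
This coefficient is not small. Indeed $\|\nu\|_1\ge\sup_p|\widehat\nu_p|=\sup_p|\tanh\widehat\eta_p|=1-o(1)$, because $|\widehat\sigma_p|\gg1$ for $\ell_0^{-1}\ll|p|\ll\rho^{1/2}$. (The available upper bound is $\|\nu\|_1\lesssim(\rho^{1/2}\ell_0)^3=\rho^{-3\eps}$ for $\frak{a}=1$, not $\rho^{-9\eps}$.) So you get $I_k\le C\rho\,I_{k-1}+c\,I_k'$ with $c>1$ and $I_k'$ taken over a larger region, which yields nothing.

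Your base case is exactly this at $k=1$: $\int\|a_x\Psi\|^2\le2\rho_0L^3+2\|\nu\|_2^2L^3+2\|\nu\|_1^2\int\|a_y\Psi\|^2$. So it is not ``obtained the same way''. Nor can you use $J\le1$ and $[J,\mathcal{N}]=0$, since $Z^{-2}$ is exponentially large in $L^3$. In the paper, $\langle\Psi,\mathcal{N}\Psi\rangle\simeq\rho L^3$ is Prop.~\ref{prop:N_on_psi}, which is deduced from this lemma, so it cannot be an input; the lemma itself only gives $\rho^{1-3\eps}L^3$.

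The same obstruction hits (\ref{eq:kbs}). The cross terms carry $\|\nu\|_1^2$, not $\|\nu\|_2^2$, so inserting (\ref{eq:kas}) yields only $\rho^{1-O(\eps)}$, not $\rho^{3/2}$. Removing the condensate part via $\int\nu=0$ leaves $\|\nu\|_1^2\int\|b_y\Psi\|^2$, which is the quantity being estimated.

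A contraction needs structure beyond these norms. For example, if you keep signs, $\int dx\,a^*(\nu_x)a(\nu_x)=d\Gamma(\nu*\nu)\le\sup_p\tanh^2\widehat\eta_p\,\mathcal{N}$, with $\sup_p\tanh^2\widehat\eta_p<1$ (by a margin of only $1/\sup_p\widehat\gamma_p^2$). The price is handling the $y$-dependent $J$, $f_\ell$ factors and the localization without absolute values. Closing this loop, not the $\eps$-accounting, is what \cite[Lemmas 4.3, 4.4]{BBCOS} must supply, and your sketch does not provide it. Your treatment of (\ref{eq:1bb}) via $\int\nu=0$ is the right idea, but it rests on (\ref{eq:kas}) and (\ref{eq:kbs}).
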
 

Sometimes, it is useful to estimate the number of excitations generated only by the Jastrow factor (removing excitations created by the Bogoliubov transformation). Defining the new operators
\begin{equation}\label{eq:def-cc}
\begin{split} 
c^*_x &=b^*(\g_x)-b(\s_x)= a^* (\gamma_x) - a (\sigma_x) - \sqrt{\rho_0} = T W (\rho_0) a^*_x W^* (\rho_0) T^* \\
c_x &=b(\g_x)-b^*(\s_x)= a (\gamma_x) - a^* (\sigma_x) - \sqrt{\rho_0} =  T W (\rho_0) a_x W^* (\rho_0) T^* 
\end{split} \end{equation} 
we have the following bounds, established in \cite[Lemma 4.5]{BBCOS}.
\begin{lemma} \label{lm:c's}
We have 
\begin{equation}\label{eq:cc-claim} \int dx \langle \Psi, c_x^* c_x \Psi \rangle \leq C \rho^{2-16\eps-2\delta} L^3\,. \end{equation} 
Moreover, let $R = C \rho^{-1/2-3\eps}$. Then, we have 
\begin{equation}\label{eq:cccc-claim} \int_{|x-y| \leq R} dx dy  \, \langle \Psi,  c_x^* c_y^* c_y c_x  \Psi \rangle \leq C \rho^{2-30\eps-\delta} L^3 \,.
\end{equation} 
More generally, for $\ph, \tau \in \{ \sigma, \gamma^{-1}\ast \sigma, \gamma -\mathbbm{1}, \mathbbm{1} \}$, we find (if $\tau = \mathbbm{1}$, we set $c (\tau_x) = c_x$) 
\begin{equation}\label{eq:cc-gh}
\int dx \, \langle \Psi, c^* (\tau_x) c (\tau_x) \Psi \rangle \leq C \rho^{2-22\eps-2\delta} L^3 \end{equation}
and  
\begin{equation}\label{eq:cccc-gh} 
\int_{|x-y| \leq R} dx dy  \, \langle \Psi,  c^* (\tau_x) c^* (\ph_y) c (\ph_y) c (\tau_x)  \Psi \rangle \leq C \rho^{2-42\eps-\delta} L^3 \,.
\end{equation} 
\end{lemma}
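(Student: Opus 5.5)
The plan is to follow the strategy of \cite[Lemma 4.5]{BBCOS}, checking that every ingredient depends on $V$ only through the bounds of Lemma \ref{lm:omega} and Lemma \ref{lm:eta}, which hold for general potentials. Write $\Phi = W(\rho_0) T\Omega$, so that $\Psi = J\Phi/Z$. The fields $c_x$ are the annihilation operators of the quasi-free state $\Phi$: with the operator ordering arranged consistently with (\ref{eq:def-cc}), $c_x \Phi = 0$. Hence
\[ c_x \Psi = Z^{-1} [c_x , J] \Phi , \]
and the commutator measures only the correlations generated by the Jastrow factor. Using $a_y J = J(y) J a_y$ and $J a_y^* = a_y^* J(y) J$, we get $[a_y, J] = -(1-J(y)) J a_y$ and $[a_y^*, J] = a_y^* (1-J(y)) J$. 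This gives
\[ c_x \Psi = -\int dy\, \gamma(x-y)\, (1-J(y)) J a_y \Phi/Z - \int dy\, \sigma(x-y)\, a_y^* (1-J(y)) J \Phi /Z . \]
In both terms we move $J$ back through $a_y$, writing $J a_y \Phi = J(y)^{-1} a_y J\Phi$ formally. In practice we avoid the inverse by reorganising the expression, as in \cite{BBCOS}, so that $a_y$ acts directly on $\Psi$ and $a_y\Psi$ is then expanded with the identity (\ref{eq:id}).

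The key smallness comes from (\ref{eq:1-J}):
\[ 1-J(y) \leq d\Gamma(\omega_{\ell,y}) = \int dz\, \omega_\ell (y-z) a_z^* a_z , \]
where Lemma \ref{lm:omega} gives $\|\omega_\ell\|_1 \lesssim \frak{a}\ell^2$ and $\|\omega_\ell\|_2^2 \lesssim \frak{a}^2 \ell$. Substituting (\ref{eq:id}), the main contribution to $\int dx\, \|c_x\Psi\|^2$ is of the form
\[ \rho_0 \int dx \Big\| \int dy\, \gamma(x-y) (1-J(y)) J(y) \Psi \Big\|^2 . \]
We bound it by Cauchy--Schwarz in $y$ (using $\|\gamma - \mathbbm{1}\|_1$ from (\ref{eq:zeta-combi1}), with the $\delta$-part of $\gamma$ handled separately) together with
\[ \langle \Psi, (1-J(y))^2 \Psi\rangle \leq \int dz\, \omega_\ell(y-z) \langle \Psi, a_z^* a_z \Psi\rangle + \int dz\, dz'\, \omega_\ell(y-z)\omega_\ell(y-z') \langle \Psi, a_z^*a_{z'}^* a_{z'} a_z\Psi\rangle . \]
The two-particle term is controlled by (\ref{eq:kas}) with $k=2$, since $\omega_\ell$ restricts to $|z-z'|\leq 8\ell \ll R$; the one-particle term by (\ref{eq:kas}) with $k=1$. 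This yields
\[ \rho\cdot \rho\, \frak{a}\ell^2 \cdot \rho^{-\text{small}} \sim \rho^{2-2\delta - O(\eps)} \]
in units $\frak{a}=1$. The terms with $\sigma$ and with the second summand in (\ref{eq:id}) carry extra factors $\|\sigma\|_2$, $\|\gamma^{-1}*\sigma\|_1$ from Lemma \ref{lm:eta}, and are estimated with (\ref{eq:kas}) for $k\leq 3$ and (\ref{eq:kbs}). Collecting exponents should give (\ref{eq:cc-claim}).

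For (\ref{eq:cc-gh}) we repeat the argument with $c(\tau_x) = \int \tau(x-y) c_y\, dy$. The extra convolution costs at most $\|\tau\|_1 \leq C(\rho^{1/2}\ell_0)^3 = C\rho^{-9\eps}$ (Lemma \ref{lm:eta}), applied twice via Young's inequality, explaining the loss $\rho^{-6\eps}$ or similar. For (\ref{eq:cccc-claim}) and (\ref{eq:cccc-gh}) we write
\[ c_y c_x \Psi = Z^{-1}[c_y,[c_x, J]]\Phi + \dots , \]
again using $c_y c_x\Phi = 0$. This produces products of two factors $(1-J)$ or a double commutator, each controlled by $d\Gamma(\omega_\ell)$. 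The resulting moments are bounded with (\ref{eq:kas}) and (\ref{eq:kbs}) for $k$ up to $4$, the localisation $|x-y|\leq R$ supplying exactly the cutoff structure required there.

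I expect the main obstacle to be the bookkeeping of the commutator $[c_x, J]$ without dividing by $J(y)$, which vanishes on hard cores. One has to order operators so that only $J(y)\Psi$, $a_y\Psi$ (via (\ref{eq:id})) and $(1-J(y))$ appear, and then check that the tail behaviour of $\gamma,\sigma$ (the decay (\ref{eq:decay})) suffices to restrict all multi-particle integrals to distances $\leq R$ so that Lemma \ref{lm:a-bds} applies. For general $V$, the only change relative to hard spheres is that $\omega_\ell$ is no longer explicit. However, the proof uses only $\omega_\ell \lesssim \frak{a}|x|^{-1}\chi_\ell$ and $0\leq \omega_\ell \leq 1$, which Lemma \ref{lm:omega} provides, so the argument of \cite{BBCOS} should go through verbatim.
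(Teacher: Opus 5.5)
Your overall route is the paper's: the paper gives no separate argument for this lemma and imports \cite[Lemma 4.5]{BBCOS}, relying on the fact that its inputs (Lemmas \ref{lm:omega}, \ref{lm:eta} and \ref{lm:a-bds}) hold for general $V$. The one estimate you actually carry out, however, does not close as written.

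In the leading term $\rho_0 \int dx \, \| \int dy \, \gamma(x-y) (1-J(y)) \Psi \|^2$ you bound $(1-J(y))^2 \leq d\Gamma(\omega_{\ell,y})^2$. You then treat the resulting two-particle piece with (\ref{eq:kas}) for $k=2$. After integrating in $y$, the kernel becomes $\omega_\ell * \omega_\ell \leq \| \omega_\ell \|_2^2 \lesssim \frak{a}^2 \ell$, supported in $|z-z'| \leq 8\ell$. So, with $\frak{a}=1$, what you obtain is
\[ \rho_0 \int dz \, dz' \, (\omega_\ell * \omega_\ell)(z-z') \, \| a_z a_{z'} \Psi \|^2 \lesssim \rho \, \ell \, \rho^{1/2-15\eps} L^3 = \rho^{3/2 - 15 \eps - \delta} L^3 , \]
which is much larger than $\rho^{2-16\eps-2\delta} L^3$. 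The reason is that (\ref{eq:kas}) only controls the two-body density on the scale $R$, not on the scale $\ell$. The local bound of Lemma \ref{lm:aaaaC} would suffice, but it cannot be used, since the paper derives it from the present lemma.

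The repair is immediate. From $0 \leq 1 - J(y) \leq 1$ you get $(1-J(y))^2 \leq 1 - J(y) \leq d\Gamma(\omega_{\ell,y})$. Then $\| \omega_\ell \|_1 \lesssim \ell^2$, $\| \gamma - \mathbbm{1} \|_1 \lesssim \rho^{-3\eps}$ and (\ref{eq:kas}) with $k=1$ yield $\rho^{2-9\eps-2\delta} L^3$. Two-particle quantities are only affordable in the contributions carrying $\sigma$ or $\nu = \gamma^{-1} * \sigma$. There a factor $\| \sigma \|_2^2$ or $\| \nu \|_2^2 \lesssim \rho^{3/2}$ takes the place of $\rho_0$.

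Two smaller points. First, no inverse of $J(y)$ is needed, and the leading term carries no extra factor $J(y)$. Insert (\ref{eq:id}) directly into $c_x \Psi = a(\gamma_x) \Psi - a^*(\sigma_x) \Psi - \sqrt{\rho_0} \Psi$ and use $\int \gamma = 1$, $\gamma * \nu = \sigma$ and $J(y) a_z^* = f_\ell(y-z) a_z^* J(y)$. This gives the exact identity
\[ c_x \Psi = - \sqrt{\rho_0} \int dy \, \gamma(x-y) (1 - J(y)) \Psi - \int dy \, dz \, \gamma(x-y) \nu(y-z) \, a_z^* \big( 1 - f_\ell(y-z) J(y) J(z) \big) \Psi , \]
which is the clean version of your commutator formula.

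Second, $\| \tau \|_1 \lesssim (\rho^{1/2} \ell_0)^3 = \rho^{-3\eps}$, not $\rho^{-9\eps}$. This accounts for the loss $\rho^{-6\eps}$ from (\ref{eq:cc-claim}) to (\ref{eq:cc-gh}). It also accounts for the loss $\rho^{-12\eps}$ from (\ref{eq:cccc-claim}) to (\ref{eq:cccc-gh}). For the latter, you first use (\ref{eq:decay}) to restrict the convolution variables to distance $\lesssim R$ from $x$ and $y$, then apply (\ref{eq:cccc-claim}) with a larger constant in $R$.
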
 
As an application of the last lemma, we can show stronger estimates for the local number of particles and of excitations, in small regions. The next lemma is taken from \cite[Lemma 4.6]{BBCOS} .
\begin{lemma} \label{lm:aaaaC} 
For $0 < \delta < \eps$, $\eps > 0$ small enough and $\rho > 0$ small enough, we have 
\begin{equation}\label{eq:aaaaC} \int_{|x-y| < C \ell} dx dy \, \langle \psi, b_x^* b_y^* b_y b_x \psi \rangle , \; \int_{|x-y| < C \ell} dx dy \, \langle \psi, a_x^* a_y^* a_y a_x \psi \rangle  \lesssim \rho^{2-36\eps-\delta} L^3\,. \end{equation} 
Moreover, recalling $R = C \rho^{-1/2-3\eps}$, we find  
\begin{equation}\label{eq:aaaCR} \int_{|x-y| \leq C \ell, |x-z| \leq R} dx dy dz \langle \Psi, a_x^* a_y^* a_z^* a_z a_y a_z \Psi \rangle \lesssim \rho^{3/2-45\eps-\delta} L^3\,. \end{equation}
\end{lemma}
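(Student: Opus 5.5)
We prove the bounds for the $b$-fields first and then deduce those for the $a$-fields and the three-body estimate by expanding $a_x = b_x + \sqrt{\rho_0}$.

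\textbf{Step 1: rewrite $b$ in terms of $c$.} We invert (\ref{eq:def-cc}). Since $\widehat{\gamma}_p^2 - \widehat{\sigma}_p^2 = 1$ and the kernels are real and even, we have $b_x = c(\gamma_x) + c^*(\sigma_x)$. Indeed,
\[ c(\gamma_x) + c^*(\sigma_x) = b\big((\gamma*\gamma - \sigma*\sigma)_x\big) + \big[ b^*((\sigma*\gamma)_x) - b^*((\gamma*\sigma)_x) \big] = b_x . \]
Writing $c(\gamma_x) = c_x + c((\gamma-\mathbbm{1})_x)$, we insert this into $\langle \Psi, b_x^* b_y^* b_y b_x \Psi\rangle$. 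We then bring every $c^*(\sigma)$ to the left of every $c(\sigma)$, using $[c(f),c^*(g)] = \langle f,g\rangle$. This produces three kinds of terms.

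\begin{itemize}
\item[(i)] \emph{Normally ordered quartic terms.} These have the form $\langle \Psi, c^*(\tau_x)c^*(\ph_y)c(\ph_y)c(\tau_x)\Psi\rangle$ with $\tau,\ph \in \{\mathbbm{1}, \gamma-\mathbbm{1}, \sigma\}$. Since $C\ell \leq R$, each is bounded by (\ref{eq:cccc-gh}) after Cauchy--Schwarz, or by (\ref{eq:cccc-claim}) when $\tau=\ph=\mathbbm{1}$.
\item[(ii)] \emph{Quadratic terms times one contraction.} A typical term is $\langle \sigma_x,\sigma_y\rangle\langle\Psi, c^*(\tau_x)c(\ph_y)\Psi\rangle$. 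We bound it by $\|\sigma\|_2^2 \leq C\rho^{3/2}$ times $(C\ell)^3$ times (\ref{eq:cc-gh}), which gives a contribution of order $\rho^{7/2-22\eps-3\delta}L^3$.
\item[(iii)] \emph{Pure constants.} These have the form $\int_{|x-y|<C\ell}|\langle\sigma_x,\sigma_y\rangle|^2$ or $\|\sigma\|_2^4$, and contribute at most $C\rho^3\ell^3L^3$.
\end{itemize}

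The contributions from (ii) and (iii) are much smaller than $\rho^{2-36\eps-\delta}L^3$. The bound therefore hinges on (i).

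\textbf{Step 2: the main obstacle.} The main difficulty is the quartic contribution in (i). Lemma \ref{lm:c's} only integrates over $|x-y|\leq R$, and its exponent $2-42\eps$ is worse than the claimed $2-36\eps$. I would therefore not use (\ref{eq:cccc-gh}) crudely for the terms involving $\sigma$ and $\gamma-\mathbbm{1}$. Instead, I would exploit $\|\sigma\|_\infty, \|\gamma-\mathbbm{1}\|_\infty \leq C\rho/\ell$ together with the small volume $\ell^3$. Concretely, for $\tau\neq\mathbbm{1}$ I would write $c(\tau_x) = \int \tau(x-w)c_w\,dw$ and apply Cauchy--Schwarz. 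One factor is then the localized quantity (\ref{eq:cccc-claim}), which carries only $30\eps$. The remaining factor is controlled by $\|\tau\|_1,\|\tau\|_2$ from Lemma \ref{lm:eta} and by (\ref{eq:cc-claim}). The $\eps$-losses from these estimates are what I expect to add up to $36\eps$. The term with $\tau=\ph=\mathbbm{1}$ is directly (\ref{eq:cccc-claim}).

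\textbf{Step 3: the $a$-fields.} We expand $a_x^*a_y^*a_ya_x$ with $a_x = b_x+\sqrt{\rho_0}$.
\begin{itemize}
\item The constant term gives $\rho_0^2(C\ell)^3L^3 \lesssim \rho^{2-3\delta}L^3$, which is admissible since $\delta<\eps$.
\item The quadratic terms give $\rho_0\,\ell^3 \int \langle\Psi,b_x^*b_x\Psi\rangle \lesssim \rho^{5/2-\eps-3\delta}L^3$ by (\ref{eq:kbs}) with $k=1$.
\item The cubic and linear terms are handled by Cauchy--Schwarz against the quartic and quadratic ones.
\end{itemize}

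\textbf{Step 4: the three-body bound (\ref{eq:aaaCR}).} We write $a_z = \sqrt{\rho_0} + b_z$.

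For the $\sqrt{\rho_0}$ part, we use Cauchy--Schwarz in the form $a_z^*a_z \leq 2\rho_0 + 2b_z^*b_z$. The first piece is bounded by $\rho_0 R^3$ times the two-body bound (\ref{eq:aaaaC}). This gives $\rho\cdot\rho^{-3/2-9\eps}\cdot\rho^{2-36\eps-\delta} = \rho^{3/2-45\eps-\delta}$, exactly the claimed rate.

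For the $b_z$ part, we expand $a_x, a_y$ as well. The terms containing at most one $\sqrt{\rho_0}$ factor from $x,y$ are controlled by (\ref{eq:kbs}) with $k=2,3$, using the restriction $|x-y|\leq C\ell$ to gain a factor $\ell^3$ in place of $R^3$ whenever a $\rho_0$ appears. All of these are of lower order.
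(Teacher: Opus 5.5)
The paper does not prove this lemma; it quotes it from \cite[Lemma 4.6]{BBCOS}, so there is no in-paper argument to compare with. Judged on its own, your reduction through the $c$-fields and Lemmas \ref{lm:a-bds}, \ref{lm:c's} is sound. Step 4 reproduces the stated exponent exactly, $\rho_0 R^3\cdot\rho^{2-36\eps-\delta}=\rho^{3/2-45\eps-\delta}$, so this is very likely the intended route.

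The one step you leave as an expectation, Step 2, is the heart of the matter. It closes more simply than your sketch suggests: neither \eqref{eq:cc-claim} nor $\|\tau\|_2$ is needed.
\begin{itemize}
\item For $\tau\neq\mathbbm{1}$, weighted Cauchy--Schwarz gives $\|c(\ph_y)c(\tau_x)\Psi\|^2\le\|\tau\|_1\int dw\,|\tau(x-w)|\,\|c(\ph_y)c_w\Psi\|^2$.
\item After localizing $|x-w|\le R/2$ via \eqref{eq:decay}, the integration over $|x-y|\le C\ell$ costs only $\int_{|x-y|\le C\ell}|\tau(x-w)|\,dx\le C\ell^3\|\tau\|_\infty\lesssim\rho\ell^2$.
\item Treat a nontrivial $\ph$ the same way, using $\|\ph\|_1$ for the $y$-integral. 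Every quartic term then reduces to \eqref{eq:cccc-claim} times at most $C(\rho^{1/2}\ell_0)^9\rho\ell^2\ll1$.
\end{itemize}
So only $\tau=\ph=\mathbbm{1}$ matters, and you even get $\rho^{2-30\eps-\delta}L^3$. If you used only $\|\tau\|_1^2\le C(\rho^{1/2}\ell_0)^6$, one nontrivial kernel would give exactly $36\eps$. Two nontrivial kernels with $L^1$ norms alone would put you back at the $42\eps$ of \eqref{eq:cccc-gh}. The gain $\ell^3\|\tau\|_\infty$ from the small integration volume is the real point.

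Smaller omissions:
\begin{itemize}
\item In Step 1 you only list contractions of type $\langle\sigma_x,\sigma_y\rangle$ and $\|\sigma\|_2^2$. The contraction $[c(\gamma_y),c^*(\sigma_x)]=(\gamma*\sigma)(x-y)$ also appears. It is harmless: $\int_{|x-y|\le C\ell}|(\gamma*\sigma)(x-y)|^2\,dx\,dy\lesssim\ell^3\|\gamma*\sigma\|_\infty^2L^3\lesssim\rho^2\ell L^3$ by Lemma \ref{lm:eta}. The same holds for the quadratic terms it multiplies.
\item In Step 3 the anomalous terms $\rho_0\langle\Psi,b_yb_x\Psi\rangle$ are not of $b^*b$ type. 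It is cleanest to bound $\|a_ya_x\Psi\|^2\le4\big(\|b_yb_x\Psi\|^2+\rho_0\|b_x\Psi\|^2+\rho_0\|b_y\Psi\|^2+\rho_0^2\big)$ directly.
\item In Step 4 you skip the piece $\rho_0^2\|b_z\Psi\|^2$, which carries two factors $\sqrt{\rho_0}$. By \eqref{eq:kbs} with $k=1$ it is trivially $\lesssim\rho^{2-10\eps-3\delta}L^3$.
\item The kernel tails $|x-w|>R$ need the decay argument the paper refers to in the proof of Lemma \ref{lm:phi3}.
\end{itemize}
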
 

We can now prove Prop. \ref{prop:N_on_psi}, estimating the total number of particles in (\ref{eq:trial}). Also this proof is taken from \cite{BBCOS}. 
\begin{proof}[Proof of Prop. \ref{prop:N_on_psi}] 
Writing  
\begin{equation} \label{eq:a_to_b}
a_x = \sqrt{\rho_0} + b_x\,, \qquad a_x^* = \sqrt{\rho_0} + b_x^*
\end{equation} 
and 
\[ b_x = c (\gamma_x) + c^* (\sigma_x), \qquad b_x^* = c^* (\gamma_x) + c (\sigma_x) \]
we obtain 
\[ \begin{split}  \langle\Psi ,\mathcal{N} \Psi\rangle =\; &\rho_0 L^3 +  2\sqrt{\rho_0} \, \text{Re } \int dx \langle \Psi, b_x \Psi \rangle + \int dx \langle \Psi, b_x^* b_x \Psi \rangle 
\\ =\;&(\rho_0 + \| \sigma \|_2^2) L^3 +  2 \sqrt{\rho_0} \, \text{Re } \int dx \langle \Psi, b_x \Psi \rangle \\ &+ \int dx\, \langle\Psi, \Big[ c^*(\gamma_x) c(\gamma_x)+ c^*(\sigma_x) c(\sigma_x)+c^*(\gamma_x) c^*(\sigma_x)+c(\gamma_x) c(\sigma_x)\Big]\Psi\rangle. \end{split} \]
The claim now follows by (\ref{eq:choice}), combining Lemma \ref{lm:a-bds} and Lemma \ref{lm:c's}. 
\end{proof}

\section{Energy of the trial state} 

In this section, we prove Prop. \ref{prop:Psi-energy}. To this end, we are going to estimate separately the kinetic and the potential energy of the trial state (\ref{eq:trial}). The bound for the kinetic energy was shown in  \cite[Prop. 2.5]{BBCOS}.
\begin{proposition}\label{prop:kin}
Let $\Psi$ be defined as in (\ref{eq:trial}) and $\cK$ denote the kinetic energy operator (\ref{eq:cK}). Then, we have 
\begin{equation}\label{eq:Kpsi} \begin{split} \frac{1}{L^3} \langle \Psi, \cK \Psi \rangle \leq \; &\rho_0^2 \int |\nabla f_\ell(x)|^2 dx  + 2\rho_0 \|\s\|^2  \int |\nabla f_\ell(x)|^2 dx  + \int |\nabla f_\ell (x)|^2 |\sigma (x)|^2 dx \\ &+ 2\int f_\ell (x) \nabla f_\ell (x) \sigma (x) \nabla \sigma (x) dx  +  \int |f_\ell(x)|^2 |\nabla \s(x)|^2 dx \\	& + 2\rho_0 \int |\nabla f_\ell(x)|^2 \big[(\g \ast \s)(x) + (\s \ast \s)(x)\big] dx \\
& + 2\rho_0 \int  f_\ell(x) \nabla f_\ell(x) \cdot \nabla \s (x)  dx\, + C \rho^{5/2+\delta}
\end{split}\end{equation} 
for $0 < \delta < \eps$ with $\eps > 0$ small enough.
\end{proposition}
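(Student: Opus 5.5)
The plan is to follow \cite[Prop. 2.5]{BBCOS} closely. The only inputs that depend on the specific potential are the properties of $f_\ell$, $\omega_\ell$ and of the kernels $\sigma,\gamma$. In the present setting these are supplied by Lemma \ref{lm:omega} and Lemma \ref{lm:eta}, in the same form as for hard spheres.

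\textbf{Step 1: reduce to a two-point function.} Write $\langle \Psi, \cK \Psi\rangle = \int dx\, \langle \nabla_x a_x \Psi, \nabla_x a_x \Psi\rangle$. Differentiate the identity (\ref{eq:id}) in $x$; the derivative falls on $J(x)$ and on $(\gamma^{-1}*\sigma)(x-y)$. Since $J(x)=\prod_j f_\ell(x-x_j)$, we have
\[ \nabla_x J(x) = \int dz\, \frac{\nabla f_\ell(x-z)}{f_\ell(x-z)}\, a_z^* a_z \, J(x), \]
and by the commutation rules $a_y J(x) = f_\ell(x-y) J(x) a_y$ the factor $1/f_\ell$ can be absorbed. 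This gives $\nabla_x a_x\Psi$ as a sum of three kinds of terms:
\begin{itemize}
\item a \emph{condensate} term $\sqrt{\rho_0}\int dz\, \nabla f_\ell(x-z)\, a_z^*\,(\cdots)$, obtained by pulling one particle out through (\ref{eq:id}) again;
\item a term carrying $\nabla(\gamma^{-1}*\sigma)$;
\item cross terms.
\end{itemize}

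\textbf{Step 2: expand the square.} Apply (\ref{eq:id}) a second time to the remaining $a_z$. The leading contributions then become integrals of $\nabla f_\ell$ and $\sigma$ against the two-particle density, which is replaced by $\rho_0^2 |f_\ell|^2$ plus corrections. The explicit terms in (\ref{eq:Kpsi}) arise as follows:
\begin{itemize}
\item the pure $|\nabla f_\ell|^2$ term comes with coefficient $\rho_0^2 + 2\rho_0\|\sigma\|^2$, where $\|\sigma\|^2$ accounts for the depletion;
\item the $\sigma$-type terms come from the fields $b$ and $c$ in (\ref{eq:bb}), (\ref{eq:def-cc}), via the Bogoliubov expectations $\langle T\Omega, b^* b\, T\Omega\rangle$, which give $\gamma*\sigma$ and $\sigma*\sigma$;
\item the cross terms $f_\ell\nabla f_\ell\,\sigma\nabla\sigma$ and $\rho_0 f_\ell\nabla f_\ell\cdot\nabla\sigma$ come from the mixed products.
\end{itemize}
Here one uses the design of $\tilde\sigma$ in (\ref{eq:tlsigma}): there is a factor $1/f_\ell$ there, and $f_\ell\,\tilde\sigma$ has the structure $\rho_0\omega_\ell + s$.

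\textbf{Step 3: bound the error terms by $C\rho^{5/2+\delta}L^3$.} Each error is controlled by Cauchy--Schwarz together with the a priori bounds:
\begin{itemize}
\item local particle and excitation bounds, from Lemma \ref{lm:a-bds} and Lemma \ref{lm:aaaaC};
\item excitation bounds with respect to the Jastrow factor, from Lemma \ref{lm:c's};
\item the operator bound $1-J(x)\le d\Gamma(\omega_{\ell,x})$ from (\ref{eq:1-J}), which turns differences $1-J$ into one extra density factor.
\end{itemize}
The kernels are controlled using $\|\nabla f_\ell\|_2^2\lesssim \frak{a}$, $|\nabla f_\ell|\lesssim \frak{a}/x^2$ for $|x|>2r_0$, and the norms (\ref{eq:zeta-combi1})--(\ref{eq:nu-nabla-p}).

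\textbf{Main obstacle: the region near the support of $V$.} The hardest terms are the three-body ones in which $\nabla f_\ell(x-z)$ is integrated against a density with two further nearby particles. For hard spheres one had explicit control of $f$ inside $B_{r_0}$. Now only $\|\nabla f_\ell\|_2$ is bounded there, and the pointwise bound on $\nabla f_\ell$ holds only for $|x|>2r_0$. I would split the $z$-integral at $|x-z|=2r_0$:
\begin{itemize}
\item \emph{Outer region.} Use the pointwise bound $|\nabla f_\ell|\lesssim \frak{a}/x^2$, exactly as in \cite{BBCOS}.
\item \emph{Inner region.} Cauchy--Schwarz into $\int_{|u|\le 2r_0}|\nabla f_\ell(u)|^2du \lesssim \frak{a}$ times (\ref{eq:aaaCR}). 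The $\rho^{-\delta}$ losses must be compensated by $r_0\ll\ell$, which is where the condition $\delta<\eps$ is used.
\end{itemize}
Once these terms are handled, the remaining argument transfers verbatim from \cite{BBCOS}.
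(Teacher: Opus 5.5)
Your proposal takes the same route as the paper, which proves Proposition \ref{prop:kin} simply by invoking \cite[Prop. 2.5]{BBCOS}, with Lemma \ref{lm:omega} and Lemma \ref{lm:eta} supplying the potential-dependent inputs in the same form as for hard spheres. The paper treats that argument as carrying over verbatim and makes no separate near-core split at $|x-z|=2r_0$; also, the hypothesis $\delta<\eps$ is inherited from the a priori bounds (e.g.\ Lemma \ref{lm:aaaaC}, which already assumes $0<\delta<\eps$), not from compensating losses via $r_0\ll\ell$, which holds for every $\delta>0$.
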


In the next proposition, whose proof is deferred to Sect. \ref{sec:pot}, we bound the potential energy of (\ref{eq:trial}). This is the main novelty of this paper. 
\begin{proposition}\label{prop:pot}
Let $\Psi$ be defined as in (\ref{eq:trial}) and $\cV$ denote the potential energy operator (\ref{eq:cV}). Then, we have 
\begin{equation}\label{eq:Vpsi}\begin{split} 
\frac{1}{L^3} \langle \Psi, \cV \Psi \rangle \leq \; &\frac{\rho_0^2}{2} \int dx \, V(x) f_\ell^2 (x) + \rho_0 \| \sigma \|_2^2 \int dx \, V(x) f_\ell^2 (x) \\ &+ \rho_0 \int dx \, V(x) f_\ell^2 (x)  \big( (\sigma * \sigma) (x) + (\sigma * (\gamma-1)) (x) \big) + C \rho^{5/2+ \eps/2} 
\end{split} \end{equation}
for $\eps, \delta > 0$ small enough.
\end{proposition}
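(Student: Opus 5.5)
We start from $\langle \Psi, \cV\Psi\rangle = \frac12\int dxdy\, V(x-y)\langle \Psi, a_x^* a_y^* a_y a_x \Psi\rangle$. The hard-core potential is handled by never letting $V$ act on the bare fields. Instead, we apply the identity (\ref{eq:id}) twice, so that the Jastrow factor supplies an explicit $f_\ell(x-y)$. Applying (\ref{eq:id}) to $a_x\Psi$ gives $\sqrt{\rho_0}J(x)\Psi + J(x)\int dz\,(\gamma^{-1}*\sigma)(x-z)a_z^*J(z)\Psi$. We then commute $a_y$ through $J(x)$ using $a_y J(x) = f_\ell(x-y)J(x)a_y$, and through $a_z^*$, and apply (\ref{eq:id}) again to $a_y\Psi$.

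The result is a representation
\[ a_y a_x \Psi = f_\ell(x-y)\, J(x)J(y)\Big[\rho_0 + \sqrt{\rho_0}\,\big(A_x + A_y\big) + (\gamma^{-1}*\sigma)(x-y)\,\Theta_{xy} + A_xA_y\Big]\Psi , \]
where $A_x = \int dz\,(\gamma^{-1}*\sigma)(x-z)a_z^*J(z)$ up to factors $f_\ell$, and $\Theta_{xy}$ collects the contraction term arising when $a_y$ hits $a_z^*$. Every term then carries the prefactor $f_\ell(x-y)$. Since $f_\ell = f$ on $\text{supp}\,V$ and $Vf^2 \in L^1$ with $\int Vf^2 \le 8\pi\frak{a}$, the weight $V(x-y)f_\ell^2(x-y)$ is integrable even for hard cores. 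This is precisely the role of $V_\mathrm{eff}$ and Lemma \ref{Lem:general_potential_eff} compared with \cite{BBCOS}.

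Next we expand $\|a_ya_x\Psi\|^2$ and identify the main contributions. The term $\rho_0^2\|J(x)J(y)\Psi\|^2 \le \rho_0^2$ gives $\frac{\rho_0^2}{2}\int Vf_\ell^2$; here we use $J\le 1$ and only need an upper bound. The cross terms $\rho_0^{3/2}\langle J J\Psi, A\Psi\rangle$ and the quadratic terms $\rho_0\|A_x\Psi\|^2$ are evaluated by re-expressing $a^*_zJ(z)$ through $b$-fields and then through the $c$-fields of (\ref{eq:def-cc}). Their vacuum-like expectations produce $\rho_0\|\sigma\|_2^2\int Vf_\ell^2$ and $\rho_0\int Vf_\ell^2\big((\sigma*\sigma)+(\sigma*(\gamma-1))\big)$. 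These come from the $T$-contractions $\langle c, c^*\rangle$ combined with the convolution $\gamma^{-1}*\sigma$ evaluated at $x-y$ with $|x-y|\le r_0$. All remaining pieces are expressed through $\int dx\langle \Psi, c^*(\tau_x)c(\tau_x)\Psi\rangle$, the quartic $c$-bounds of Lemma \ref{lm:c's}, and the $b$- and $a$-bounds of Lemmas \ref{lm:a-bds} and \ref{lm:aaaaC}. Cauchy–Schwarz is applied with the finite measure $V(x-y)f_\ell^2(x-y)\,dxdy$.

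The errors are of two kinds. Factors $1-J(x)$ are controlled via (\ref{eq:1-J}) by $d\Gamma(\omega_{\ell,x})$, which involves particles within distance $4\ell$ and is therefore bounded by (\ref{eq:aaaaC}). Replacing $f_\ell$ factors by $1$ is treated the same way. Terms such as $\rho_0\|A_xA_y\Psi\|^2$ are bounded using $\|\gamma^{-1}*\sigma\|_\infty\lesssim\rho/\ell$, $\|\cdot\|_2^2\lesssim\rho^{3/2}$, and the local bounds (\ref{eq:kas}), (\ref{eq:aaaCR}). Each estimate has to come out at order $\rho^{5/2+\eps/2}$.

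The main obstacle is the quadratic term $\rho_0\|A_x\Psi\|^2$ and the associated cross term. Extracting exactly $\rho_0\|\sigma\|^2$ plus the convolution terms requires moving $J(z)$ and $J(x)$ past creation operators; each such move produces $f_\ell(z-w)$ factors and commutators whose errors are only marginally small. I would first try to follow the strategy of \cite{BBCOS}: write $a_z^*J(z)\Psi = \sqrt{\rho_0}J(z)\Psi + b_z^*J(z)\Psi$, bound $(1-J)$-corrections with (\ref{eq:1-J}), and compute the leading part with $c$-operators. The only change is that every pointwise use of $V$ is replaced by the weight $Vf_\ell^2$, so that the $L^1$ norm $\int Vf^2\lesssim \frak{a}$ replaces the $\|V\|_1$ used in \cite{BBCOS}.
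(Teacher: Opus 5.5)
Your skeleton matches the paper's. Two applications of (\ref{eq:id}) give $a_ya_x\Psi=\sum_{j=1}^5\phi_j(x;y)$ with a common factor $f_\ell(x-y)$, and the integrable weight $Vf_\ell^2$ replaces $V$. The $J$'s and $f_\ell$'s are removed via (\ref{eq:1-J}) and Lemmas \ref{lm:a-bds}, \ref{lm:aaaaC}, and the rest is handled with $b$- and $c$-fields and Lemma \ref{lm:c's}. However, two facts on which the estimate turns are missing, and with the tools you list some terms cannot be closed.

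The first is the pointwise smallness of $\nu=\gamma^{-1}*\sigma$ on $\text{supp}\,V$. The contraction term $\phi_2=f_\ell(x-y)\nu(x-y)J(x)J(y)\Psi$ (your $\Theta_{xy}$ piece) contributes $\frac12\int Vf_\ell^2|\nu|^2$. With $\|\nu\|_\infty\lesssim\rho/\ell$ and $\int Vf^2\le 8\pi\frak{a}$ this is only $O(\rho^2\ell^{-2})=O(\rho^{2+2\delta})$ (units $\frak{a}=1$), far above $\rho^{5/2}$. What rescues it is the cutoff $1-\chi_\ell(2x)$ in (\ref{eq:tlsigma}): $\tilde\sigma$ vanishes for $|x|\le\ell$. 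So on $\text{supp}\,V$ only the zero-mode correction of (\ref{eq:def-sigma}) survives, giving $|\sigma(x)|\le\ell_0^{-3}|\varphi(x/\ell_0)|\,|\int\tilde\sigma|\lesssim\rho^{3/2+3\eps/2}$. Together with $\|(\gamma^{-1}-1)*\sigma\|_\infty\le\|\gamma^{-1}-1\|_2\|\sigma\|_2\lesssim\rho^{3/2}$, this gives $|\nu|\lesssim\rho^{3/2}$ there. The same fact is needed to reach the stated right-hand side. The contractions produce $\rho_0\int Vf_\ell^2(\nu+\nu*\sigma*\sigma)=\rho_0\int Vf_\ell^2\,(\sigma+\sigma*(\gamma-1))$. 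The piece $\rho_0\int Vf_\ell^2\sigma$, of order $\rho^{2+\delta}$ by the sup bound alone, must be shown negligible. The same smallness also controls $\langle\phi_2,\phi_5\rangle$.

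The second is the cancellation $\int\nu=0$, i.e. $\widehat\nu_0=\widehat\sigma_0/\widehat\gamma_0=0$, which is the purpose of the subtraction in (\ref{eq:def-sigma}). Once the $J$'s and $f_\ell$'s are stripped, the condensate part of $a_z=\sqrt{\rho_0}+b_z$ drops out only because of it. Consequently the $\rho_0^{3/2}$ cross terms $\langle\phi_1,\phi_{3,4}\rangle$, which you count among the sources of main terms, vanish to leading order. The main terms come instead from $\|\phi_3\|^2+\|\phi_4\|^2$, $\langle\phi_1,\phi_2\rangle$, $\langle\phi_1,\phi_5\rangle$ and $\langle\phi_3,\phi_4\rangle$. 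They combine into the claimed expression only through $\widehat\gamma^2-\widehat\sigma^2=1$, namely $(\nu*\nu)(0)+(\nu*\nu*\sigma*\sigma)(0)=\|\sigma\|_2^2$, $\nu+\nu*\sigma*\sigma=\gamma*\sigma$ and $\nu*\nu*\gamma*\gamma=\sigma*\sigma$.

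The cancellation is equally indispensable for $\|\phi_5\|^2$, which carries no factor $\rho_0$ (unlike your ``$\rho_0\|A_xA_y\Psi\|^2$''). Bounds with absolute values and $a$-fields lose it. For example, the Cauchy--Schwarz step used on the two-body part then gives only $R^3\|\nu\|_2^4\rho^{1/2-15\eps}\sim\rho^{2-24\eps}$ via (\ref{eq:kas}). After passing to $b$-fields, (\ref{eq:kbs}) yields $\rho^{3-20\eps}$ instead. The $\eps/2$ in the statement then comes from Cauchy--Schwarz on $\langle\phi_{3,4},\phi_5\rangle$.

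Finally, there is no potential-energy argument in \cite{BBCOS} with $\|V\|_1$ to transcribe. For hard spheres the trial state vanishes on the cores, so its potential energy is zero; that is why this proposition is the new part of the paper. Only the techniques carry over from the kinetic-energy analysis there.
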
 

With Prop. \ref{prop:kin} with Prop. \ref{prop:pot} we can now conclude the proof of Prop. \ref{prop:Psi-energy}. 
\begin{proof}[Proof of Prop. \ref{prop:Psi-energy}] Combining (\ref{eq:Kpsi}) and (\ref{eq:Vpsi}) and introducing the notation \begin{equation}
\cE_\ell(x) :=   |\nabla f_\ell(x)|^2 +  \frac 1 2 V(x)f^2_\ell(x)
\end{equation} 
we arrive at 
\begin{equation}\begin{split}
\frac{1}{L^3} \langle &\Psi, (\cK + \cV) \Psi \rangle \\ \leq\; & \rho_0^2 \int \cE_\ell(x)\, dx  
+ 2\rho_0 \|\s\|^2  \int  \cE_\ell(x) dx  
+ \int |\nabla f_\ell (x)|^2 |\sigma (x)|^2 dx \\ &+ 2\int f_\ell (x) \nabla f_\ell (x) \sigma (x) \nabla \sigma (x) dx  +  \int |f_\ell(x)|^2 |\nabla \s(x)|^2 dx \\
& + 2\rho_0 \int \cE_\ell(x) \big[(\g-\mathbbm{1}) \ast \s)(x) + (\s \ast \s)(x)\big] dx + 2\rho_0 \int |\nabla f_\ell(x)|^2 \sigma(x) dx \\
& + 2\rho_0 \int  f_\ell(x) \nabla f_\ell(x) \cdot \nabla \s (x)  dx + C \rho^{5/2+ \delta} ,
\end{split}
	\end{equation}
if $0 < \delta \leq \eps /2$ and $\eps > 0$ is small enough. Defining 
\begin{equation}\label{def:nu}
\begin{split} 
\th (x) = &\; - \rho_0 \omega_\ell (x) + f_\ell (x) \tilde{\sigma} (x) = - \rho_0 \chi_\ell (2x)  \omega_\ell (x) + \chi_{\ell_0} (2x) (1 - \chi_\ell (2x)) s (x) 
\end{split} 
\end{equation} 
with $\tilde{\sigma}$ as in (\ref{eq:tlsigma}), we obtain 
\begin{equation}\label{eq:Hpsi} \begin{split} 
\frac{1}{L^3} \langle &\Psi, (\cK + \cV) \Psi \rangle \\ \leq\; &\| \nabla \th\|_2^2 +\frac{\rho_0^2}{2} \int V(x)f^2_\ell(x) dx  + 16 \pi \aa \rho_0 \| s\|_2^2 + 8 \pi \aa \rho_0   ((g-\mathbbm{1}) \ast \sigma)(0) + \sum_{i=1}^4 \mathcal{R}_i 
\end{split}
\end{equation} 
with 
\[\begin{split}
\mathcal{R}_1 =\; &  2\rho_0 \|\s\|_2^2  \int \cE_\ell(x) dx  - 8 \pi \aa \rho_0 \| s\|_2^2 \\
\mathcal{R}_2 = \; & 2\rho_0 \int  \cE_\ell(x) ((\gamma-\mathbbm{1}) \ast \sigma)(x) dx - 8 \pi \aa \rho_0   ((g-\mathbbm{1}) \ast \sigma)(0) \\
\mathcal{R}_3 = \; & 2\rho_0 \int \cE_\ell(x)^2 (\sigma \ast \sigma)(x) dx - 8 \pi \aa \rho_0 \| s\|_2^2
\end{split}\]
and
\[
\mathcal{R}_4 =  \| \nabla \{- \rho_0 \o_\ell +f_\ell \s\}\|_2^2 - \| \nabla \{- \rho_0 \o_\ell +f_\ell \tilde\s\}\|_2^2\,.
\]
The bound $|\cR_4| \le C \rho^{5/2+\delta}$ was established in the proof of \cite[Lemma 3.1]{BBCOS}. The estimates $|\cR_i| \le C \rho^{5/2+\delta}, i=1,\ldots,3,$ follow from \eqref{eq:en-fl}, comparing $((\gamma-1) * \sigma) (x)$ with $((\gamma-1) * \sigma) (0)$ and then with $((g-1) * \sigma) (0)$, comparing $(\sigma * \sigma) (x)$ with $(\sigma * \sigma) (0)$ and  estimating 
\begin{equation}\label{eq:sigma-s}
\big| \| \sigma \|^2 - \| s \|^2 \big| \lesssim \rho^{3/2+\eps} \,.
\end{equation} 
The details can be found below \cite[Eq.~(3.6)]{BBCOS}, with  \eqref{eq:en-fl} replacing \cite[Eq.~(2.6)]{BBCOS}. 

Next, we observe that, with $\omega = 1 - f$ and $f$ the untruncated minimizer of (\ref{eq:Ef}), 
\begin{equation}\label{eq:nabla-nu}
\| \nabla \th\|^2 +\frac{\rho_0^2}{2} \int V(x)f^2_\ell(x) dx \leq  \|\nabla (\th + {\rho_{0}}\omega \chi_{\ell_0} (2\cdot)) \|^2 + 4\pi \frak{a} \rho^2_0 + C \rho^{5/2+\eps} \,.
\end{equation}
This follows from (see \cite[Proof of Lemma 3.2]{BBCOS} for more details) \[ |\langle \nabla (\th + \rho_0 \omega \chi_{\ell_0} (2 \cdot) ) , \nabla (\rho_0 \omega \chi_{\ell_0} (2 \cdot)) \rangle| \lesssim \rho^{5/2+\eps} \]
and from the estimate
\[ \begin{split}  
\rho_0^2 & \| \nabla (\omega \chi_{\ell_0} (2 \cdot)) \|^2  +\frac{\rho_0^2}{2} \int V(x)f^2_\ell(x) dx \\
&\leq \rho_0^2 \int \cE_{\ell_0}(x) dx + C \rho^{5/2+\eps} \leq 4\pi \frak{a} \rho_0^2 \big( 1 + C \frak{a} / \ell_0) + C \rho^{5/2+\eps} \leq 4\pi \frak{a} \rho_0^2 + C \rho^{5/2+\eps}\end{split}  \]
where we used (\ref{eq:en-fl}) and the fact that $f_\ell = f$, on $\text{supp } V$. From (\ref{eq:Hpsi}) we arrive therefore at  
\[ \begin{split} \frac{1}{L^3} \langle \Psi, (\cK + \cV) \Psi \rangle \leq \; &4\pi \frak{a} \rho_0^2 +  
\|\nabla (\th + {\rho_{0}}\omega \chi_{\ell_0} (2\cdot)) \|^2 \\ &+ 16 \pi \aa \rho_0 \| s\|_2^2 + 8 \pi \aa \rho_0   ((g-\mathbbm{1}) \ast \sigma)(0) + C \rho^{5/2+\eps} \,.
\end{split} \]
From this point, we proceed as below \cite[Eq. (3.11)]{BBCOS} to conclude that 
\[\frac{1}{L^3} \langle \Psi, (\cK + \cV) \Psi \rangle \leq 4 \pi \frak{a} \rho^2 \left( 1 + \frac{128}{15 \sqrt{\pi}} (\rho \frak{a}^3)^{1/2} \right) + \sum_{i=1}^5 \mathcal{E}_i + C \rho^{5/2+\eps}\,, \]
with the error terms 
\begin{align*}
\mathcal{E}_1 =\; & \left\|\nabla (\th + {\rho_{0}}\omega \chi_{\ell_0} (2\cdot)) \right\|^2 - \left\|\nabla (s + {\rho_{0}}\omega_{\ell_0}) \right\|^2\\
  \mathcal{E}_2 =\; & 8\pi  \mathfrak{a} {\rho_0} \|s\|^2 -\frac{1}{|\L|}\sum_{k\in \L^*_+} \rho_0\widehat{V}_\mathrm{eff}(k)  \widehat s_k^2 \\
 \mathcal{E}_3 =\; & 8\pi  \mathfrak{a} \rho_0\,   ((g-\mathbbm{1}) \ast s)(0) - \frac{1}{|\L|}\sum_{k\in \L^*_+} \rho_0\widehat{V}_\mathrm{eff}(k)\,  \widehat{(g-\mathbbm{1})}(k) \widehat s_k \\
\mathcal{E}_4 = \; &  \frac{\rho_{0}}{|\L|}\sum_{k\in \L^*_+} \big( 2|k|^2\widehat \omega_{\ell_0}(k) -\widehat{V}_\mathrm{eff}(k) \big) \widehat s_k \\
\mathcal{E}_5 =\; &  \frac{\rho_0^2}{|\L|}\sum_{k\in \L^*_+}\Big(  |k|^2  \widehat \omega^2_{\ell_0}(k) -  \frac{ \widehat{V}^2_\mathrm{eff}(k)}{4 |k|^2} \Big)\,.
\end{align*} To estimate $\mathcal{E}_1$ we first rewrite, as in \cite{BBCOS}, 
\begin{equation}\label{eq:E1-Z}
\th(x) + \rho_0 \omega (x) \chi_{\ell_0} (2x) = (1 -\chi_\ell (2x)) Z(x) - (1- \chi_{\ell_0} (2x)) Z(x)
\end{equation} 
with $Z(x)= \rho_0 \omega_{\ell_0} (x) +s(x)$.  Hence 
\begin{equation}  \label{Eq:Explicit_E_2} 
   \begin{split} 
       | \mathcal{E}_1| 
         \leq \; &\left\langle \nabla Z, \left[(1  -  \chi_\ell (2\cdot) )^2 -1 \right] \nabla Z \right\rangle  \\ &+ 2 \big|  \langle (1-\chi_\ell (2\cdot)) \nabla Z, -\frac{2}{\ell} \nabla \chi (2\cdot / \ell) Z(x) - \nabla \{ (1 - \chi_{\ell_0} (2\cdot)) Z \} \rangle\big|  \\ &+ C\ell^{-2}  \|  \nabla \chi (2\cdot / \ell) Z \|^2 + C \| \nabla \{ (1 - \chi_{\ell_0} (2\cdot)) Z \} \|^2 \,. \end{split}  \end{equation} 
To bound the r.h.s. of \eqref{Eq:Explicit_E_2} we decompose $Z = Z_1 + Z_2$, where $Z_1$ is defined through the Fourier coefficients
\be \label{def:Z1}
\widehat Z_1(k) =  \rho_0 \widehat{\o}_{\ell_0}(k)- \frac{\rho_0 \widehat{V}_{\rm{eff}}(k) }{2|k|^2}\,.   \ee
From Lemma \ref{lm:eta} and expanding (\ref{eq:hatsk}) for $|k| \gg \rho^{1/2}$, we find 
\[ |\widehat{Z}_2 (k)| \lesssim \frac{\rho}{k^2} \min \big\{ 1 , \frac{\rho}{k^2} \big\} \, . \]
This easily implies 
\[ \| Z_2 \|_\infty \lesssim \| \widehat{Z}_2 \|_1 \leq C \rho^{3/2} \]
and 
\[ \| \nabla Z_2 \|_p \lesssim \| k \widehat{Z}_2 \|_{p'} \lesssim \rho^{2-\frac{3}{2p}} \]
for all $3/2 < p < \infty$. On the other hand, recalling the definition (\ref{eq:Veff}) of $V_\text{eff}$, we find 
\[ Z_1 = \rho_0 \o_{\ell_0} - \rho_0 \omega = - \rho_0 \omega (1-\chi_{\ell_0}) \, . \]
Since $1- \chi_{\ell_0}$ is supported on $|x| \geq 2\ell_0$, where $\omega (x) = \frak{a}/|x|$, we obtain
\[ \| Z_1 \|_\infty \lesssim \rho / \ell_0 \lesssim \rho^{3/2} , \qquad \| \nabla Z_1 \|_p \lesssim \rho \ell_0^{-2+3/p} \]
for all $p > 3/2$. We conclude that \begin{equation}\label{eq:bds-Z} \| Z \|_\infty \lesssim \rho^{3/2}, \qquad \| \nabla Z \|_p \lesssim \rho^{2-3/(2p)} \end{equation}
for all $3/2 < p < \infty$ (in contrast with \cite{BBCOS}, we do not have a bound for the $L^\infty$-norm of $\nabla Z$, due to the weaker decay of $\widehat{Z}_2$ at infinity). With (\ref{eq:bds-Z}), we can estimate the r.h.s. of  (\ref{Eq:Explicit_E_2}). In particular, the first term is controlled by 
\[ \langle \nabla Z , [(1  -  \chi_\ell (2\cdot) )^2 -1 ] \nabla Z \rangle \lesssim \int_{|x| \leq 4\ell} |\nabla Z (x)|^2 dx \lesssim \ell^{3/p'} \| \nabla Z \|_{2p}^2  \lesssim  \ell^{3(1-1/p)} \rho^{4-3/(2p)} \]
for any $1 < p < \infty$. Choosing $p$ large enough, we find 
\[ \langle \nabla Z , [(1  -  \chi_\ell (2\cdot) )^2 -1 ] \nabla Z \rangle \lesssim  \rho^{5/2+\eps} \]
if $\eps,\delta > 0$ are small enough. The first term on the last line of (\ref{Eq:Explicit_E_2}) can be estimated by 
\[ \ell^{-2} \| \nabla \chi (2 \cdot / \ell) Z \|^2 \lesssim \ell^{-2} \int_{|x| \leq 4\ell} |Z (x) |^2 dx \lesssim \ell \| Z \|_\infty^2 \lesssim \ell \rho^3 \lesssim \rho^{5/2+\eps} \]
if $\eps > 0$ is small enough. As for the second term on the last line, we have 
\[ \begin{split} \| &\nabla (1- \chi_{\ell_0} (2 \cdot))Z \|^2 \\ &\lesssim \ell_0^{-2} \int_{\ell_0 \leq |x| \leq 4\ell_0} |Z (x)|^2 dx + \int_{|x| \geq \ell_0} |\nabla Z (x)|^2 \\ &\lesssim \rho_0^2 \ell_0^{-2} \int_{\ell_0 \leq |x| \leq 4\ell_0} \frac{\frak{a}^2}{|x|^2} dx + \ell_0^{-2} \int_{|x| \geq \ell_0} |s(x)|^2 dx + \rho_0^2 \int_{|x| \geq \ell_0} \frac{\frak{a}^2}{|x|^4} dx + \int_{|x| \geq \ell_0} |\nabla s (x)|^2\\ &\lesssim \rho^2 \ell_0^{-1} + \ell_0^{-4} \rho^{1/2} \lesssim \rho^{5/2+\eps} \end{split} \]
if $\eps, \delta > 0$ are small enough. In the last step, we used Lemma \ref{lm:eta} to bound the norms of $s, \nabla s$ restricted on $|x| \geq \ell_0$. The terms on the second line of (\ref{Eq:Explicit_E_2}) can be estimated similarly. We conclude that $|\cE_1| \lesssim \rho^{5/2+\eps}$, if $\eps ,\delta > 0$ are small enough. 

The estimates $|\mathcal{E}_i| \leq C \rho^{5/2+\d}$, for $i=2,\ldots,5$,  can be shown as below \cite[Eq. (3.24)]{BBCOS}.

\end{proof}

\section{Potential energy of the trial state}  
\label{sec:pot}

The goal of this section is to prove Proposition \ref{prop:pot}. To this end, we apply the identity (\ref{eq:id}) to write 
\begin{equation*}
\begin{split} 
a_y a_x \Psi =\; &\sqrt{\rho_0} a_y J(x) \Psi + a_y J(x) \int dz \, \nu (x-z) a_z^* J(z) \Psi \\ = \; &\sqrt{\rho_0} f_\ell(x-y) J(x) a_y \Psi + f_\ell(x-y) J(x) \int dz \, \nu (x-z) a_y a_z^* J(z) \Psi \\ = \; &\sqrt{\rho_0} f_\ell(x-y) J(x) a_y \Psi + f_\ell(x-y)  \nu (x-y)  J(x) J(y) \Psi \\ &+ f_\ell(x-y) J(x) \int dz \, \nu (x-z) f_\ell(z-y) a_z^* J(z) a_y \Psi \,.
\end{split} \end{equation*} 
Again with (\ref{eq:id}), we obtain 
\begin{equation}
\label{eq:axayPsi}
\begin{split}  
a_y &a_x \Psi \\ = \; & \rho_0 f_\ell(x-y) J(x) J(y) \Psi + f_\ell(x-y) \nu (x-y) J(x) J(y) \Psi \\ &+ \sqrt{\rho_0} f_\ell(x-y)  \int dz \, \nu (y-z) f_\ell(x-z) f_\ell(y-z) a_z^* J(x) J(y)  J(z) \Psi \\ &+ \sqrt{\rho_0} f_\ell(x-y) \int dz  \, \nu (x-z) f_\ell(y-z) f_\ell(x-z) a_z^* J(x) J(y) J(z)  \Psi \\ &+ f_\ell(x-y)  \int dz dw \, \nu (x-z) \nu (y-w) \\ & \hspace{.3cm} \times  f_\ell(x-z) f_\ell(y-z) f_\ell(x-w) f_\ell(y-w) f_\ell(z-w) a_z^* a_w^* J(x) J(y) J(z) J(w) \Psi 
\\ =: & \sum_{j=1}^5 \phi_j (x;y) \end{split} \end{equation} 
where we observe that $\phi_4 (x;y) = \phi_3 (y;x)$. 

In the following lemmas, we estimate the different contributions of the terms $\{ \phi_j \}_{j=1}^5$ to the expectation of the potential energy in the trial state $\Psi$. 
\begin{lemma} \label{lm:phi1} 
We have
\[ \frac{1}{2L^3} \int dx dy \, V(x-y) \| \phi_1 (x;y) \|^2 \leq \frac{\rho_0^2}{2} \int dx \, V(x) f_\ell^2 (x)\,.  \] 
\end{lemma}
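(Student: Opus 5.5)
We use the explicit form $\phi_1(x;y) = \rho_0 f_\ell(x-y) J(x) J(y) \Psi$ from (\ref{eq:axayPsi}) together with the operator bounds $0 \leq J(x), J(y) \leq 1$. The argument is direct. Since $J(x)$ and $J(y)$ are commuting multiplication operators in each $n$-particle sector, with values in $[0,1]$, the product $J(x)J(y)$ is a multiplication operator with values in $[0,1]$ as well. Hence $\| J(x) J(y) \Psi \|^2 \leq \| \Psi \|^2 = 1$, because $\Psi$ is normalized through the constant $Z$ in (\ref{eq:trial}). This gives the pointwise bound
\[ \| \phi_1 (x;y) \|^2 = \rho_0^2 f_\ell^2 (x-y) \, \| J(x) J(y) \Psi \|^2 \leq \rho_0^2 f_\ell^2 (x-y) \]
for all $x,y \in \Lambda$.

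Next we multiply by $V(x-y) \geq 0$ and integrate. After the change of variables $x \mapsto x - y$ on the torus, the $y$-integration produces a factor $L^3$, so that
\[ \frac{1}{2L^3} \int dx dy \, V(x-y) \| \phi_1 (x;y) \|^2 \leq \frac{\rho_0^2}{2} \int dx \, V(x) f_\ell^2(x)\,. \]
Here $V$ is understood periodically. Because $\text{supp } V \subset B_{r_0}(0)$ and $L \gg r_0$, the integral over $\Lambda$ coincides with the integral over $\bR^3$.

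The only point requiring some care is the hard-core case, where $V = \infty$ on a set. There we use that $f_\ell = f$ on $\text{supp } V$ and that $V f^2$ is integrable, since $\mathcal{E}[f] = 8\pi \frak{a} < \infty$. The product $V f_\ell^2$ is then a well-defined integrable function, and the inequality holds as an inequality between non-negative measurable quantities. No cancellation is involved, so there is no real obstacle: the bound is an immediate consequence of $J(x)J(y) \leq 1$.
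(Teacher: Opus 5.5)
Your proposal is correct and follows the paper's proof: you write $\phi_1(x;y) = \rho_0 f_\ell(x-y) J(x)J(y)\Psi$, use $0 \leq J(x), J(y) \leq 1$ and $\|\Psi\| = 1$ to bound $\|\phi_1(x;y)\|^2 \leq \rho_0^2 f_\ell^2(x-y)$, and then integrate against $V(x-y)$. The paper's one-line argument is exactly this, and your remarks on the hard-core case and on periodizing $V$ are harmless additional care.
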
 
\begin{proof} 
Follows immediately from $0 \leq J(x), J(y) \leq 1$. 
\end{proof} 

\begin{lemma} \label{lm:phi2}
We have 
\[ \frac{1}{2L^3} \int dx dy \, V(x-y)  \| \phi_2 (x;y) \|^2 \leq C \rho^{3}. \] 
\end{lemma}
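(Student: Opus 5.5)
The plan is to use that $\phi_2(x;y) = f_\ell(x-y)\,\nu(x-y)\,J(x)J(y)\Psi$ is a $c$-number multiple of a vector of norm at most one, where $\nu = \gamma^{-1}*\sigma$ as in (\ref{eq:id}). Since $0\le J(x),J(y)\le 1$ and $\|\Psi\|=1$, we have $\|\phi_2(x;y)\|^2 \le f_\ell^2(x-y)\,|\nu(x-y)|^2$. Integrating over $x$ produces a factor $L^3$, which leaves
\[ \frac{1}{2L^3}\int dx\,dy\, V(x-y)\|\phi_2(x;y)\|^2 \le \frac12 \int dx\, V(x) f_\ell^2(x)\, |\nu(x)|^2 \le \frac12 \sup_{|x|\le r_0} |\nu(x)|^2 \int V f_\ell^2 . \]
Since $f_\ell = f$ on $\text{supp } V \subset B_{r_0}(0)$, the variational characterization (\ref{eq:defa}) gives $\int V f_\ell^2 \le \cE[f] = 8\pi\frak{a}$. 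It therefore remains to show $|\nu(x)| \lesssim \rho^{3/2}$ for $|x|\le r_0$.

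The crude bound $\|\nu\|_\infty \lesssim \rho/\ell$ from Lemma \ref{lm:eta} is not sufficient: it only gives $\rho^{2+2\delta}$. So I will exploit the support structure near the origin. I write
\[ \nu = \sigma + \big((\gamma^{-1}-\mathbbm{1})*\sigma\big). \]

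For the first piece, recall from (\ref{eq:tlsigma}) that $\tilde\sigma$ carries the factor $1-\chi_\ell(2x)$. This factor vanishes for $|x|\le \ell$, and in particular on $B_{r_0}(0)$, because $r_0 \ll \ell$. Hence, by (\ref{eq:def-sigma}), for $|x|\le r_0$,
\[ |\sigma(x)| = \ell_0^{-3}\Big|\int\tilde\sigma\Big|\,|\ph(x/\ell_0)| \lesssim \ell_0^{-3}\|\tilde\sigma\|_1 \lesssim \ell_0^{-3}\big(\rho^{1/2}\ell_0\big)^{1/2}, \]
using (\ref{Th:est_3}). With $\ell_0 = \rho^{-1/2-\eps}$ (in units $\frak{a}=1$), this is of order $\rho^{3/2+3\eps}\rho^{-\eps/2}\le \rho^{3/2}$.

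For the second piece, Cauchy–Schwarz and (\ref{eq:zeta-combi1}) applied to $\zeta = \mathbbm{1}-\gamma^{-1}$ and $\zeta = \sigma$ give
\[ \|(\gamma^{-1}-\mathbbm{1})*\sigma\|_\infty \le \|\mathbbm{1}-\gamma^{-1}\|_2\,\|\sigma\|_2 \lesssim \rho^{3/4}\rho^{3/4} = \rho^{3/2}. \]

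Combining the two pieces yields $\sup_{|x|\le r_0}|\nu(x)|^2 \lesssim \rho^3$, and hence the claim. The only step needing care is recognizing that the naive sup-bound on $\nu$ is too weak. The gain comes from the cutoff $1-\chi_\ell(2\cdot)$ built into $\tilde\sigma$, so that on the support of $V$ only the small mean-correction term and the convolution term survive.
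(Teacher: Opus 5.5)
Your proof is correct and follows the paper's argument essentially line by line. You bound $\|\phi_2(x;y)\|^2 \le f_\ell^2(x-y)|\nu(x-y)|^2$ using $0\le J\le 1$, split $\nu = \sigma + (\gamma^{-1}-\mathbbm{1})*\sigma$, control the convolution term by $\|\mathbbm{1}-\gamma^{-1}\|_2\|\sigma\|_2\lesssim\rho^{3/2}$, and use the cutoff $1-\chi_\ell(2\cdot)$ so that on $\text{supp } V$ only the mean-correction term $\ell_0^{-3}\big|\int\tilde\sigma\big|\,|\ph(x/\ell_0)|$ survives — exactly the paper's steps (your explicit justification of $\int Vf_\ell^2\le 8\pi\mathfrak{a}$ and your exponent $\rho^{3/2+5\eps/2}$, slightly better than the paper's stated $\rho^{3/2+3\eps/2}$, are both fine).
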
 
\begin{proof} 
From $0 \leq J(x), J(y) \leq 1$, we immediately find 
\[  \frac{1}{2L^3} \int dx dy \, V(x-y)  \| \phi_2 (x;y) \|^2 \leq  \frac{1}{2} \int dx \, V(x) f_\ell^2 (x) |\nu (x)|^2  \,.\]
Next, we recall $\nu (x) = (\gamma^{-1} * \sigma) (x) = ((\gamma^{-1} -1) * \sigma) (x) + \sigma (x)$, where (from Lemma \ref{lm:eta}) 
\[ \| (\gamma^{-1} -1) * \sigma \|_\infty \leq \| \gamma^{-1} - 1 \|_2 \| \sigma \|_2 \lesssim \rho^{3/2} \]
and, on the support of $V$, 
\[ |\sigma (x)| = \ell_0^{-3} |\ph (x/ \ell_0)| \Big| \int \tilde{\sigma} (z) dz \Big| \lesssim \rho^{3/2+3\eps/2} \,.\]
We conclude that
\[ \frac{1}{2L^3} \int dx dy \, V(x-y)  \| \phi_2 (x;y) \|^2 \lesssim \rho^3\int V(x) f_\ell^2(x)dx \lesssim \rho^3. \]
\end{proof} 

\begin{lemma} \label{lm:phi3}
For $j=3,4$, we find 
\[ \frac{1}{2L^3} \int dx dy \, V(x-y) \| \phi_j (x;y) \|^2 \leq \frac{\rho_0}{2} \| \sigma \|_2^2 \int dx \, V(x) f_\ell^2 (x) + C \rho^{5/2+\eps} \]
if $\eps , \delta > 0$ are small enough. 
\end{lemma}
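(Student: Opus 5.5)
By symmetry, $\phi_4(x;y)=\phi_3(y;x)$ and $V$ is even, so it suffices to treat $j=3$. Write $\nu=\gamma^{-1}*\sigma$ and
\[ \phi_3(x;y)=\sqrt{\rho_0}\,f_\ell(x-y)\int dz\,\nu(y-z)f_\ell(x-z)f_\ell(y-z)\,a_z^*\,\Xi\,\Psi, \qquad \Xi=J(x)J(y)J(z). \]
Note that $\Xi$ depends on $z$, so it cannot simply be pulled out of the integral.

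The plan is to replace the multiplication operators and kernels by simpler objects, expand the norm via the commutation relations, and isolate the main term $\rho_0\|\sigma\|_2^2 V f_\ell^2$. We first use $J(z)a_z^*=a_z^*J(z)$, which holds because $f_\ell(0)=0$ is not relevant here: $J(z)$ multiplies by $\prod_j f_\ell(z-x_j)$, and the created particle at $z$ contributes $f_\ell(0)$. We must therefore be careful. Since $a_z^*$ sits to the left of $J(z)$ in $\phi_3$, this is fine as written. Setting $\Theta_z=J(x)J(y)J(z)\Psi$ and $g_{x,y}(z)=\nu(y-z)f_\ell(x-z)f_\ell(y-z)$, the CCR give
\[ \|\phi_3\|^2=\rho_0 f_\ell^2(x-y)\Big[\int dz\,|g_{x,y}(z)|^2\|\Theta_z\|^2+\int dz\,dw\,\overline{g(z)}g(w)\langle a_w\Theta_z, a_z\Theta_w\rangle\Big]. \]

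For the diagonal term, we bound $\|\Theta_z\|\le\|\Psi\|=1$ and $f_\ell\le 1$, obtaining the bound $\rho_0 f_\ell^2(x-y)\|\nu\|_2^2$. We then compare $\|\nu\|_2^2$ with $\|\sigma\|_2^2$: since $\nu-\sigma=(\gamma^{-1}-1)*\sigma$, Lemma \ref{lm:eta} gives $\big|\|\nu\|^2-\|\sigma\|^2\big|\lesssim \|\gamma^{-1}-1\|_1\|\sigma\|_2^2+\ldots$. This is likely not small enough, because $\|\gamma^{-1}-1\|_1$ is large. Instead we argue in Fourier space: $\hat\nu_p=\hat\sigma_p/\hat\gamma_p$, so $|\hat\nu_p|\le|\hat\sigma_p|$ and hence $\|\nu\|_2\le\|\sigma\|_2$ exactly. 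This yields exactly $\tfrac{\rho_0}{2}\|\sigma\|^2\int Vf_\ell^2$ after integrating against $\frac{1}{2L^3}V(x-y)$.

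The off-diagonal term is the main obstacle. In it, $a_w$ and $a_z$ act on $\Theta$ and produce factors $f_\ell(w-z)$ and $J(w)$ by the commutation identities. We plan to estimate it with Cauchy--Schwarz as
\[ \rho_0\int dz\,dw\,|g(z)||g(w)|\,\|a_w J(x)J(y)J(z)\Psi\|\,\|a_z J(x)J(y)J(w)\Psi\|\le\rho_0\Big(\int dz|g(z)|\Big)\int dw\,|g(w)|\,\|a_w\Psi'\|^2 \]
after moving the bounded operators $J(\cdot)$ past $a_w$ using $a_wJ(z)=f_\ell(z-w)J(z)a_w$ and $\|J\|\le1$. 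Integrating over $x,y$ with $V(x-y)$ and using $\int dx\langle\Psi,a_w^*a_w\Psi\rangle\sim\rho L^3$ would only give $\rho\,\|\nu\|_1^2\cdot\rho$, which is too large because $\|\nu\|_1$ is large. We therefore split $a_w=\sqrt{\rho_0}+b_w$. The $b$-part is controlled by Lemma \ref{lm:a-bds} and Lemma \ref{lm:c's}, using the decay \eqref{eq:decay} to restrict to $|y-w|\le R$, which gives $\rho^{3/2-O(\eps)}$ local densities. For the $\sqrt{\rho_0}$-part, $\langle\Psi,a_z\Psi\rangle$ is handled by \eqref{eq:1bb}; the resulting $\rho_0^2|\hat\nu_0|^2$-type term essentially vanishes because $\hat\sigma_0=0$, so $\int\nu=0$, up to the corrections from $J$ factors. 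Those corrections are controlled via \eqref{eq:1-J} and \eqref{eq:aaaCR}. Bookkeeping the $\eps$-powers so that the total is $\lesssim\rho^{5/2+\eps}$ is where the care is needed. Throughout, $\int V f_\ell^2\lesssim\frak a$ is used to absorb the $V$-integration.
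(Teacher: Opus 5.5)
There is a genuine gap: the off-diagonal (normal-ordered) term is not an error term, so it cannot be bounded by $C\rho^{5/2+\eps}$. Your treatment of the diagonal (contraction) term is itself correct.

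Follow the off-diagonal term through the steps you propose. Remove the $J$-operators and the $f_\ell$-factors, at cost $O(\rho^{5/2+\eps})$ via \eqref{eq:1-J}, Lemma \ref{lm:a-bds} and Lemma \ref{lm:aaaaC}. Drop the $\sqrt{\rho_0}$-parts, which vanish exactly because $\int\nu=0$. The term then reduces to
\[ \frac{\rho_0}{2L^3}\int dx\,dy\,dz\,dz'\,V(x-y)f_\ell^2(x-y)\,\nu(y-z)\,\nu(y-z')\,\langle b_{z'}\Psi, b_z\Psi\rangle . \]
Now express $b$ through the $c$-operators. Lemma \ref{lm:c's} only controls the normal-ordered $c$-contributions. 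Commuting $c(\sigma_{z'})$ past $c^*(\sigma_z)$ leaves $\langle\sigma_{z'},\sigma_z\rangle$, which produces $\frac{\rho_0}{2}(\nu*\nu*\sigma*\sigma)(0)\int V f_\ell^2$.

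In Fourier space, $(\nu*\nu*\sigma*\sigma)(0)=L^{-3}\sum_p\widehat\sigma_p^4/\widehat\gamma_p^2=\|\sigma\|_2^2-\|\nu\|_2^2$. This is positive and of order $\rho^{3/2}$: the momenta $|p|\sim\rho^{1/2}$, where $|\widehat\sigma_p|$ is of order one, already contribute that much. So the off-diagonal term is positive and of order $\rho^{5/2}$, exactly the Lee-Huang-Yang order. Your own bookkeeping reflects this: $\rho_0\|\nu\|_1^2$ times a local excitation density $\rho^{3/2-O(\eps)}$ gives at best $\rho^{5/2-O(\eps)}$, never $\rho^{5/2+\eps}$.

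Consequently, bounding the diagonal term by $\|\nu\|_2^2\le\|\sigma\|_2^2$ uses up the whole admissible main term too early. Along your route the best you can get is $\frac{\rho_0}{2}\big(2\|\sigma\|_2^2-\|\nu\|_2^2\big)\int Vf_\ell^2$. This exceeds the claimed bound by a positive quantity of order $\rho^{5/2}$.

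The missing idea, which is how the paper argues, has three parts:
\begin{enumerate}
\item Keep $\|\nu\|_2^2=(\nu*\nu)(0)$ from the diagonal term, rather than bounding it by $\|\sigma\|_2^2$.
\item Extract $(\nu*\nu*\sigma*\sigma)(0)$ from the off-diagonal term as a main contribution.
\item Combine the two using the exact identity $\widehat\nu_p^2(1+\widehat\sigma_p^2)=\widehat\nu_p^2\widehat\gamma_p^2=\widehat\sigma_p^2$, i.e.\ $(\nu*\nu)(0)+(\nu*\nu*\sigma*\sigma)(0)=\|\sigma\|_2^2$.
\end{enumerate}

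Two minor points. No commutation of $J(z)$ with $a_z^*$ is needed, since $a_z^*$ already stands to the left in $\phi_3$. Neither \eqref{eq:1bb} nor \eqref{eq:aaaCR} is required; the $J$-removal in the off-diagonal term uses the two-body bound \eqref{eq:aaaaC}.
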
 
\begin{proof} 
From the definition of $\phi_3$, we find  
\[ \begin{split}  \frac{1}{2} &\int dx dy \, V(x-y) \| \phi_3 (x;y) \|^2 \\ = \; & \frac{\rho_0}{2} \int dx dy dz dz' \, V(x-y) f_\ell^2 (x-y) \nu (y-z) \nu (y-z') \\ &\hspace{.3cm} \times f_\ell(x-z) f_\ell(y-z) f_\ell(x-z') f_\ell(y-z') \langle J(x) J(y) J(z) \Psi , a_z a_{z'}^* J(x) J(y) J(z') \Psi \rangle \,. \end{split}  \]
 Hence
 \[  \begin{split}  \frac{1}{2L^3} &\int dx dy \, V(x-y) \| \phi_3 (x;y) \|^2 \\ = \; &\frac{\rho_0}{2L^3} \int dx dy dz V(x-y) f_\ell^2 (x-y) \nu^2 (y-z) f_\ell^2 (x-z) f_\ell^2 (y-z) \| J(x) J(y) J(z) \Psi \|^2 \\ &+ \frac{\rho_0}{2L^3} \int dx dy dz dz' \, V(x-y) f_\ell^2 (x-y) \nu (y-z) \nu (y-z')  f_\ell^2 (z-z')  \\ &\hspace{.3cm} \times f_\ell^2(x-z) f_\ell^2(y-z) f_\ell^2 (x-z') f_\ell^2(y-z') \langle a_{z'} \Psi ,  J^2 (x) J^2 (y) J(z) J(z') a_z \Psi \rangle \\ =: \, & \text{A}_1 + \text{A}_2  \,.
 \end{split} \]
Estimating $0 \leq J (x) \leq 1$ and $0 \leq f_\ell(x) \leq 1$ for all $x \in \bR^3$, we obtain 
 \begin{equation}\label{eq:A1-fin} \text{A}_1 \leq \frac{\rho_0}{2} \| \nu \|^2 \int V(x) f_\ell^2 (x) dx \,. \end{equation} 
To bound $\text{A}_2$, we set $R = \rho^{-1/2-3\eps}$ and we observe that the contribution from $|y-z| > R$ is negligible, thanks to the fast decay of $\nu (y-z)$, from Lemma \ref{lm:eta} (in particular, Eq.~(\ref{eq:decay}); since $m \in \bN$ is arbitrary, we can gain any power of $\rho$). More details about this argument can be found in \cite{BBCOS}, following Eq. (5.11). To bound the contribution from $|y-z| \leq R$, we notice instead that
\[ \begin{split} \rho_0 \int_{|y-z| \leq R} &dx dy dz dz' \, V(x-y) f_\ell^2(x-y)| \nu (y-z')|^2  \big\langle a_z \Psi, \big( 1 - J^2 (x) J^2 (y) J(z) J(z') \big) a_z \Psi \big\rangle  \\ &\leq \rho_0 \int_{|y-z| \leq R} dx dy dz dz' dw \, V(x-y) f_\ell^2(x-y) |\nu (y-z')|^2 \\ & \hspace{2cm} \times ( u_\ell (x-w) + u_\ell (y-w) + \omega_\ell (z-w) + \omega_\ell (z' - w))  \| a_z a_w \Psi \|^2  \end{split} \]
where we used the notation $u_\ell = 1-f_\ell^2$ and $\omega_\ell = 1- f_\ell$. Since $u_\ell (s) = 0$ for $|s| > 4\ell$ and since $V$ has compact support, we can estimate the term proportional to $u_\ell (x-w)$ by 
\[ \rho_0  \int_{|z-w| \leq C R} dx dy dz dz' dw \, V(x-y) f_\ell^2(x-y)| \nu (y-z')|^2 u_\ell (x-w) \| a_z a_w \Psi \|^2 \lesssim \rho^{3-15\eps-2\delta} L^3 \]
where we used $\| u_\ell \|_1 \lesssim \ell^2$ and $\|Vf_\ell^2\|_1\lesssim 1$, (\ref{eq:zeta-combi1}) and Lemma \ref{lm:a-bds}. The contributions proportional to $u_\ell (y-w)$, $\omega_\ell (z'-w)$ can be bounded similarly (using also the fast decay of $\nu (y-z')$ to restrict the integral to $|y-z'| \leq R$, producing an additional negligible error term). As for the contribution proportional to $u_\ell (z-w)$, we use Lemma \ref{lm:aaaaC}. We find 
\[ \rho_0 \int_{|y-z| \leq R} dx dy dz dz' dw \,V(x-y)f_\ell^2(x-y) |\nu (y-z')|^2 u_\ell (z-w) \| a_z a_w \Psi \|^2 \lesssim \rho^{3-45\eps-\delta} L^3. \]
By Cauchy-Schwarz, we conclude that 
\[ \begin{split} \Big| \rho_0 &\int dx dy dz dz' \, V(x-y) f_\ell^2 (x-y) \nu (y-z) \nu (y-z')  f_\ell^2 (z-z')  f_\ell^2(x-z) f_\ell^2(y-z)  \\ &\hspace{2cm} \times f_\ell^2 (x-z') f_\ell^2(y-z') \langle a_{z'} \Psi ,  \big( 1- J^2 (x) J^2 (y) J(z) J(z')  \big) a_z \Psi \rangle \Big| \lesssim \rho^{5/2+\eps} L^3\end{split} \]
if $\eps, \delta > 0$ are small enough. Next, we observe that
 \begin{equation}\label{eq:phi31-f} 
 \begin{split} \rho_0 &\int dx dy dz dz' \, V(x-y) f_\ell^2(x-y)|\nu (y-z)| |\nu (y-z')| \\ &\hspace{1cm} \times \big( 1 - f_\ell^2 (z-z')  f_\ell^2(x-z) f_\ell^2(y-z) f_\ell^2 (x-z') f_\ell^2(y-z') \big) \| a_z \Psi \|^2 \\ &\lesssim 
 \rho_0 \int dx dy dz dz' \, V(x-y)f_\ell^2(x-y) |\nu (y-z)| |\nu (y-z')|  \\ &\hspace{1cm} \times \big( u_\ell (z-z') + u_\ell (x-z) + u_\ell (y-z) + u_\ell (x-z') + u_\ell (y-z') \big) \| a_z \Psi \|^2 .\end{split} \end{equation}
 The term proportional to $u_\ell (z-z')$ can be bounded by 
 \[\begin{split}
     \rho_0 \int dx &dy dz dz' \, V(x-y)f_\ell^2(x-y) |\nu (y-z)| |\nu (y-z')|  u_\ell (z-z') \| a_z \Psi \|^2\\
     \lesssim\;& \rho_0\|\nu\|_\infty \int dx dy dz dz' \, V(x-y)f_\ell^2(x-y) |\nu (y-z')|  u_\ell (z-z') \| a_z \Psi \|^2 
     \lesssim \rho^{3-6\eps- \delta} L^3
 \end{split}  \]
 using Lemma \ref{lm:a-bds}, $\|Vf_\ell^2\|_1\lesssim 1$,  and $\| \nu \|_\infty \lesssim \rho / \ell$, $\| \nu \|_1 \lesssim 1$ from (\ref{eq:zeta-combi1}). The other contributions can be bounded similarly. 
We conclude that 
 \[ \begin{split} \Big| &\rho_0 \int dx dy dz dz' \, V(x-y) f_\ell^2 (x-y) \nu (y-z) \nu (y-z') \\ & \times  \big(  f_\ell^2 (z-z')  f_\ell^2(x-z) f_\ell^2(y-z) f_\ell^2 (x-z') f_\ell^2(y-z') - 1 \big) \langle a_{z'} \Psi , a_z \Psi \rangle \Big| \lesssim \rho^{5/2+\eps} L^3\end{split} \]
if $\eps,\delta > 0$ are small enough. Hence, we arrive at 
\[ \text{A}_2 \leq C \rho^{3-\delta}  + \frac{\rho_0}{2L^3} \int dx dy dz dz' V(x-y) f_\ell^2 (x-y) \nu (y-z) \nu (y-z') \langle \Psi, a^*_{z'} a_z \Psi \rangle. \]
To estimate the last term, we write $a_z = \sqrt{\rho_0} + b_z, a_{z'} = \sqrt{\rho_0} +  b_{z'}$ and we observe that, since $\int \nu = 0$, the contribution of $\sqrt{\rho_0}$ vanishes. Therefore 
\[ \text{A}_2 \leq C \rho^{3-\delta}  + \frac{\rho_0}{2L^3} \int dx dy dz dz'\, V(x-y) f_\ell^2 (x-y) \nu (y-z) \nu (y-z') \langle \Psi, b^*_{z'} b_z \Psi \rangle .\]
Writing 
\[ b^*_{z'} = c^* (\gamma_{z'}) - c (\sigma_{z'}) , \quad b_z = c (\gamma_z) - c^* (\sigma_z) \, . \]
 we obtain 
 \[ \begin{split} \text{A}_2 \leq \; &C \rho^{3-\delta}  + \frac{\rho_0}{2L^3} \int dx dy dz dz' \,V(x-y) f_\ell^2 (x-y) \nu (y-z) \nu (y-z') \langle \sigma_{z'} , \sigma_z \rangle \\ &+ 
 \frac{\rho_0}{2L^3} \int dx dy dz dz' \,V(x-y) f_\ell^2 (x-y) \nu (y-z) \nu (y-z') \\ & \hspace{1cm} \times \langle \Psi, \big( c^* (\gamma_{z'}) c (\gamma_z) + c^* (\sigma_{z'}) c (\sigma_z) - c^* (\gamma_{z'}) c^* (\sigma_z) - c (\sigma_{z'}) c (\gamma_z) \big) \Psi \rangle .\end{split} \]
 With Cauchy-Schwarz and by Lemma \ref{lm:c's}, we find 
 \[ \rho_0 \int dx dy dz dz' \, V(x-y) f_\ell^2 (x-y) |\nu (y-z)| |\nu (y-z')| \| c (\tau_z) \Psi \|^2 \lesssim \rho^{4-25\eps-\delta} L^3 \]
 for both $\tau = \gamma, \sigma$ and 
 \[ \rho_0 \int dx dy dz dz' \, V(x-y) f_\ell^2 (x-y) |\nu (y-z)|^2 \| c (\gamma_{z'}) c (\sigma_z) \Psi \|^2 \lesssim \rho^{9/2-42\eps-\delta} L^3 .\]
Thus,
\[ \text{A}_2 \leq   \frac{\rho_0}{2} (\nu* \nu * \sigma * \sigma) (0) \int V(x) f_\ell^2 (x) dx + C \rho^{5/2+\eps}  \]
if $\eps , \delta> 0$ is small enough. To combine this term with (\ref{eq:A1-fin}), we observe that $\| \nu \|^2_2 = (\nu * \nu ) (0)$ and we recall that $\nu = \gamma^{-1} * \sigma$ to rewrite 
\[ (\nu * \nu) (0) + (\nu * \nu * \sigma * \sigma) (0) = (\nu * \nu * \gamma * \gamma) (0) = (\sigma * \sigma) (0) = \| \sigma \|_2^2 .\]
This concludes the proof of the lemma. 
 \end{proof} 

\begin{lemma} \label{lm:phi5} 
We have 
\[ \frac{1}{2L^3} \int dx dy \, V(x-y) \| \phi_5 (x;y) \|^2 \leq C \rho^{5/2+\eps} \]
if $\eps, \delta > 0$ are small enough.
\end{lemma}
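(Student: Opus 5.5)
Since the claimed bound carries no main-order term, I will estimate $\phi_5$ crudely. The plan is to drop all the $J$'s and most of the $f_\ell$'s, and to reduce everything to the a priori bounds on two-particle densities in Lemma \ref{lm:a-bds} and Lemma \ref{lm:aaaaC}.

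\textbf{Step 1: reduce to a quadratic form in $a_z a_w\Psi$.} Write $\phi_5(x;y) = f_\ell(x-y)\, F(x,y)$ with
\[
F(x,y)=\int dz\,dw\, \nu(x-z)\nu(y-w)\, g(x,y,z,w)\, a_z^* a_w^*\, J(x)J(y)J(z)J(w)\Psi ,
\]
where $0\le g\le 1$ collects the remaining $f_\ell$ factors. I will avoid expanding $\|F\|^2$ via commutators, since that generates the same kind of terms as in Lemma \ref{lm:phi3}. Instead I pair $F$ with an arbitrary normalized vector $\xi$:
\[
|\langle \xi, F\rangle| \le \int dz\,dw\, |\nu(x-z)||\nu(y-w)|\, \|a_w a_z J(z)J(w)\xi\|\, \|J(x)J(y)\Psi\| .
\]
This moves the creation operators onto $\xi$, which does not work directly. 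So the better route is to apply \eqref{eq:id} in reverse, or equivalently to bound
\[
\|\phi_5\|\le \|a_ya_x\Psi\|+\sum_{j\le4}\|\phi_j\|.
\]
That is too lossy as well.

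I therefore expect to proceed as in the proof of Lemma \ref{lm:phi3}. I will compute $\|F(x,y)\|^2$ by normal ordering $a_{w'}a_{z'}a_z^*a_w^*$. This produces three groups of terms:
\begin{itemize}
\item a fully contracted term $\int |\nu(x-z)|^2|\nu(y-w)|^2 \le \|\nu\|_2^4 \lesssim \rho^3$, plus the exchange term $\int \nu(x-z)\nu(y-w)\nu(x-w)\nu(y-z)$;
\item terms with one contraction, of the form $\int \nu\nu\nu\, \langle a_{z'}\Psi,\dots a_z\Psi\rangle$;
\item the fully normal-ordered term with $\|a_w a_z\Psi\|$-type two-particle expectations.
\end{itemize}
After multiplying by $V(x-y)f_\ell^2(x-y)$ and integrating, $\|Vf_\ell^2\|_1\lesssim \frak{a}$ gives an effective restriction $|x-y|\le r_0$.

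\textbf{Step 2: bound each group.}
\begin{itemize}
\item Fully contracted terms: these are $\lesssim \|\nu\|_2^4\lesssim \rho^3$.
\item One-contraction terms: bound $|\nu(y-w)|\le\|\nu\|_\infty\lesssim\rho/\ell$ and $\int|\nu|\lesssim 1$. Using $J\le1$ and $\int dz\,\|a_z\Psi\|^2=\langle\mathcal N\rangle\lesssim \rho L^3$, this gives $\rho_0\cdot$ hmm — more precisely, with no $\rho_0$ prefactor here, $\lesssim \|\nu\|_2^2\,\rho\lesssim\rho^{5/2}$. This is borderline, so I need the $\sqrt{\rho_0}$ subtraction as before. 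Writing $a_z=\sqrt{\rho_0}+b_z$ and using $\int\nu=0$ kills the condensate part, up to the error from the $f_\ell$ and $J$ factors, which is handled exactly as in Lemma \ref{lm:phi3} via $1-J\le d\Gamma(\omega_\ell)$. What remains is governed by $\int\|b_z\Psi\|^2\lesssim\rho^{3/2-\eps}L^3$ from \eqref{eq:kbs}, giving $\rho^{3/2}\cdot\rho^{3/2-\eps}$.
\item Fully normal-ordered term: after the same reduction it involves $\langle b^*b^*bb\rangle$. I restrict to $|z-w|\le R$ using the decay \eqref{eq:decay} of $\nu$ together with $|x-y|\le r_0$. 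Then \eqref{eq:kbs} with $k=2$ and $\|\nu\|_1^2$ give $\lesssim\rho^{3/2-11\eps}$ times $\|\nu\|_\infty$-factors, which again yields $\rho^{5/2+\eps}$ or better.
\end{itemize}

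\textbf{Main obstacle.} The difficult step is removing the $f_\ell$ and $J$ factors so that $\int\nu=0$ can be used to eliminate the condensate contributions. I would treat this exactly as in \eqref{eq:phi31-f}, bounding $1-\prod f_\ell^2$ by sums of $u_\ell$ and estimating the resulting terms with Lemma \ref{lm:aaaaC}.
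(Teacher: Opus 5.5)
Your final plan is the paper's. Normal ordering $a_w a_z a^*_{z'} a^*_{w'}$ produces exactly the terms $\text{B}_1,\dots,\text{B}_5$ of the paper's proof, and the two fully contracted ones are $O(\|\nu\|_2^4)=O(\rho^3)$. For the others you strip the $J$'s and $f_\ell$'s via $u_\ell$ and Lemmas \ref{lm:a-bds}, \ref{lm:aaaaC}, use $\int\nu=0$ to replace $a$ by $b$, and conclude with \eqref{eq:kbs}. Your treatment of the one-contraction terms (subtract the condensate, then $\|\nu\|_2^2$ times $\int dz\,\|b_z\Psi\|^2$) is what the paper does for $\text{B}_3,\text{B}_4$. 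The false starts in Step 1 can simply be deleted.

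The step that does not close as you describe it is the fully normal-ordered term
$L^{-3}\int V f_\ell^2(x-y)\,\nu(x-z)\nu(x-z')\nu(y-w)\nu(y-w')\,\langle b_{z'}b_{w'}\Psi, b_z b_w\Psi\rangle$.
You propose $\|\nu\|_1^2$ together with ``$\|\nu\|_\infty$-factors''. But $\|\nu\|_\infty$ is only $\lesssim \rho/\ell=\rho^{1+\delta}$ (in units where the scattering length is $1$), which is far too weak:
\begin{itemize}
\item If you integrate $z',w'$ against $\|\nu\|_1^2$ and bound $\int dx\,|\nu(x-z)||\nu(x-u-w)|\le \|\nu\|_\infty\|\nu\|_1$, you get $\rho^{5/2+\delta-20\eps}$, using $\|\nu\|_1\lesssim\rho^{-3\eps}$ from \eqref{eq:zeta-combi1}.
\item If you bound two factors in sup norm, you must pay $R^3$ for the free center and get $\rho^{2+2\delta-26\eps}$.
\end{itemize}
Since $\delta<\eps$, neither is $O(\rho^{5/2+\eps})$. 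Neither is even $o(\rho^{5/2})$, so this would spoil the Lee--Huang--Yang term.

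The missing ingredient is the $L^2$ bound $\|\nu\|_2^2\lesssim\rho^{3/2}$ from \eqref{eq:zeta-combi1}. The paper restricts to $|x-z|,|y-w|\le R$ using \eqref{eq:decay}. It then applies Cauchy--Schwarz so that $\nu(x-z')$ and $\nu(y-w')$ appear squared, integrates them to $\|\nu\|_2^4$, and pays $R^3$ for the remaining center. This gives $R^3\|\nu\|_2^4\int_{|z-w|\le 2R}\|b_zb_w\Psi\|^2\lesssim\rho^{3-20\eps}L^3$. Alternatively, you can keep your $\|\nu\|_1^2$ for $z',w'$ and use $\int dx\,|\nu(x-z)||\nu(x-u-w)|\le\|\nu\|_2^2$ after integrating $u=x-y$ against $Vf_\ell^2$; this yields $\rho^{3-17\eps}L^3$. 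With this correction your argument goes through.
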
 
\begin{proof} 
From the definition of $\phi_5$, we obtain 
\[ \begin{split}  \int dx &dy \, V(x-y) \| \phi_5 (x;y) \|^2 \\ = \; & \int dx dy dz dw dz' dw' \, V(x-y) f_\ell^2 (x-y) \nu (x-z) \nu (y-w) \nu (x-z') \nu (y-w') \\ &\hspace{.3cm} \times f_\ell(x-z) f_\ell(x-z') f_\ell(y-z)  f_\ell(y-z') f_\ell(x-w)  f_\ell(x-w') f_\ell(y-w) f_\ell(y-w')  \\ &\hspace{.3cm} \times   f_\ell(z-w)f_\ell(z'-w') \langle J(x) J(y) J(z) J(w) \Psi, a_w a_z  a^*_{z'} a_{w'}^*  J(x) J(y) J(z') J(w') \Psi \rangle . \end{split} \] 
With the canonical commutation relations, we find 
\[ \begin{split} a_w a_z a_{z'}^* a^*_{w'} = \; &\delta (z-z') \delta (w-w') + \delta (w-z') \delta (z-w') + \delta (z-z') a_{w'}^* a_w + \delta (z'-w) a_{w'}^* a_z \\ &+ \delta (z-w') a_{z'}^* a_w + \delta (w-w') a_{z'}^* a_z + a_{z'}^* a_{w'}^* a_w a_z \,. \end{split} \]
Hence, we have 
\[ \frac{1}{2L^3} \int dx dy \, V(x-y) \| \phi_5 (x;y) \|^2 = \sum_{j=1}^5 \text{B}_j  \]
with 
\[ \begin{split} \text{B}_1 = \; &\frac{1}{2L^3} \int dx dy dz dw \, V(x-y) f_\ell^2 (x-y) \nu^2 (x-z)  \nu^2 (y-w)\\ &\hspace{.2cm} \times    f_\ell^2 (x-z) f_\ell^2 (x-w)  f_\ell^2 (y-z) f_\ell^2 (y-w)  f_\ell^2 (w-z) \| J(x) J(y) J(z) J(w) \Psi \|^2 \end{split} \]
\[ \begin{split} \text{B}_2 = \; &\frac{1}{2L^3} \int dx dy dz dw \,  V(x-y) f_\ell^2 (x-y) \nu (x-z) \nu (x-w) \nu (y-w) \nu (y-z) \\ &\hspace{.2cm} \times f_\ell^2 (x-z) f_\ell^2 (x-w)  f_\ell^2 (y-z) f_\ell^2 (y-w)  f_\ell^2 (w-z) \| J(x) J(y) J(z) J(w) \Psi \|^2 \end{split} \]
\[ \begin{split} \text{B}_3 =  \; &\frac{1}{L^3} \int dx dy dz dw dw' \, V(x-y) f_\ell^2 (x-y) \nu^2 (x-z) \nu (y-w) \nu (y-w') \\ &\hspace{.2cm} \times 
f_\ell^2(x-z) f_\ell^2 (y-z) f_\ell^2 (x-w) f_\ell^2 (x-w') f_\ell^2 (y-w) f_\ell^2 (y-w') f_\ell^2 (z-w) \\ &\hspace{.2cm} \times f_\ell^2 (z-w') f_\ell^2 (w-w') \langle a_{w'} \Psi, J(x)^2 J(y)^2 J(z)^2 J(w) J(w') a_w \Psi \rangle \end{split} \]
\[ \begin{split} \text{B}_4 = \; &\frac{1}{L^3}  \int dx dy dz dw dw' \, V(x-y) f_\ell^2 (x-y) \nu (x-z) \nu (x-w) \nu (y-w) \nu (y-w') \\ &\hspace{.2cm} \times 
f_\ell^2 (x-z)   f_\ell^2 (y-z) f_\ell^2 (x-w) f_\ell^2 (x-w') f_\ell^2 (y-w) f_\ell^2 (y-w') f_\ell^2 (z-w)  \\ &\hspace{.2cm} \times f_\ell^2 (z-w') f_\ell^2 (w-w') \langle a_{w'} \Psi, J(x)^2 J(y)^2 J(w)^2 J(z) J(w') a_z \Psi \rangle  \end{split} \]
\[ \begin{split} \text{B}_5 = \; &\frac{1}{2L^3} \int dx dy dz dw dz' dw' \, V(x-y) f_\ell^2 (x-y) \nu (x-z) \nu (x-z') \nu (y-w) \nu (y-w')  \\ &\hspace{.2cm} \times 
f_\ell^2 (x-z) f_\ell^2 (x-z') f_\ell^2 (x-w)  f_\ell^2 (x-w')  f_\ell^2 (y-z) ^2 (y-w) f_\ell^2 (y-w') \\ &\hspace{.2cm} \times f_\ell^2 (y-z') f_\ell^2 (z-w') f_\ell^2 (z-z') f_\ell^2 (z'-w) f_\ell^2 (w-w') f_\ell(z-w) f_\ell(z'-w')  \\ &\hspace{.2cm} \times \langle a_{z'} a_{w'} \Psi , J (x)^2 J(y)^2 J(z) J(w) J(z') J(w') a_z a_w \Psi \rangle .
 \end{split} \]
Since $0 \leq f_\ell(s) \leq 1, 0 \leq J(s) \leq 1$ for all $s \in \bR^3$, we immediately obtain 
\[ \text{B}_1 \leq \frac{\| \nu \|_2^4}{2} \int V(x) f_\ell^2 (x) dx \lesssim \rho^3 \]
and 
\[ \text{B}_2 \leq \frac{1}{2} \int V(x) f_\ell^2 (x) (\nu * \nu)^2 (x) dx \lesssim \| \nu * \nu \|_\infty^2 \int V(x) f_\ell^2(x)dx \lesssim \| \nu \|_2^4 \lesssim \rho^3.  \]

To bound the term $\text{B}_3$, we first observe that 
\[\begin{split}  \int &dx dy dz dw dw' V(x-y) f_\ell^2(x-y)\nu^2 (x-z) |\nu (y-w)| |\nu (y-w')|\\ &\hspace{2cm} \times \langle a_{w'} \Psi, \big( 1 - J^2 (x) J^2 (y) J^2 (z) J(w) J(w') \big) a_{w'} \Psi \rangle \\ \leq \; &\int dx dy dz dw dw'ds \, V(x-y) f_\ell^2(x-y)\nu^2 (x-z) |\nu (y-w)| | \nu (y-w')| \\ &\hspace{2cm} \times \big( u_\ell (x-s) + u_\ell (y-s) + u_\ell (z-s) + u_\ell (w-s) + u_\ell (w' -s) \big) \| a_{w'} a_s \Psi \|^2. \end{split} \]
The contribution proportional to $u_\ell (x-s)$ can be estimated with (\ref{eq:zeta-combi1}) and Lemma \ref{lm:a-bds}. Using also the decay (\ref{eq:decay}) of $\nu$ to restrict the integral to $|s-w'| \leq R = \rho^{-1/2-3\eps}$,  we find  
\[ \begin{split}  \int dx &dy dz dw dw' ds \, V(x-y)f_\ell^2(x-y) \nu^2 (x-z) |\nu (y-w)| | \nu (y-w')| u_\ell (x-s) \| a_{w'} a_s \Psi \|^2 \\  \leq \; & \| \nu \|_\infty \| \nu \|_2^2 \| \nu \|_1 \| u_\ell \|_1 \|Vf_\ell^2\|_1\int_{|s-w'| \leq R} dw' ds \, \| a_{w'} a_s \Psi \|^2 + C \rho^3 L^3 \lesssim C \rho^{3-18\eps-\delta} L^3.  \end{split} \]
The contributions proportional to $u_\ell (y-s), u_\ell (z-s), u_\ell (w-s)$ can be controlled analogously. To bound the term proportional to $u_\ell (w'-s)$, we apply Lemma \ref{lm:aaaaC}. We find 
\[ \begin{split}  \int dx dy dz &dw dw' ds \, V(x-y)f_\ell^2(x-y) \nu^2 (x-z) |\nu (y-w)| | \nu (y-w')| u_\ell (w'-s) \| a_{w'} a_s \Psi \|^2 \\  \leq \; & \| \nu \|^2_1 \| \nu \|_2^2 \|Vf_\ell^2\|_1\int_{|w'-s| \leq \ell} dw'ds \, \| a_{w'} a_s \Psi \|^2 \lesssim \rho^{7/2-42\eps-\delta} L^3. \end{split} \]
By Cauchy-Schwarz, we conclude that 
\[ \begin{split} \text{B}_3 \leq \; &\frac{1}{L^3} \int  dx dy dz dw dw' \, V(x-y) f_\ell^2 (x-y) \nu^2 (x-z) \nu (y-w) \nu (y-w') \\ &\hspace{.2cm} \times 
f_\ell^2(x-z) f_\ell^2 (y-z) f_\ell^2 (x-w) f_\ell^2 (x-w') f_\ell^2 (y-w) f_\ell^2 (y-w') f_\ell^2 (z-w) \\ &\hspace{.2cm} \times f_\ell^2 (z-w') f_\ell^2 (w-w') \langle a_{w'} \Psi,  a_w \Psi \rangle  + C \rho^{5/2+\eps} \end{split} \]
if $\eps, \delta > 0$ are small enough. Next, we estimate 
\[ \begin{split}  1  - \prod_{r,s} f_\ell^2 (r-s) \leq \; &u_\ell (x-z) + u_\ell (y-z) + u_\ell (x-w) + u_\ell (x-w') +u_\ell (y-w) \\ &+ u_\ell (y-w') + u_\ell (z-w) + u_\ell (z-w') + u_\ell (w-w') \end{split} \]
where the product runs over all $r, s \in \{ x,y,z,w,w'\}$, with $r \not = s$ and $(r,s) \not = (x,y)$. Considering for example the contribution proportional to $u_\ell (x-z)$, we can bound it by 
\begin{equation} \label{eq:uxz} \begin{split}  \int dx &dy dz dw dw' \, V(x-y)f_\ell^2(x-y) \nu^2 (x-z) |\nu (y-w)| |\nu (y-w')| u_\ell (x-z) \| a_{w'} \Psi \|^2 \\  &\leq \| \nu \|_\infty^2 \int dx dy dz dw dw' \, V(x-y) f_\ell^2(x-y)|\nu (y-w)| |\nu (y-w')| u_\ell (x-z) \| a_{w'} \Psi \|^2 \\  &\leq \| \nu \|_\infty^2 \| \nu \|_1^2 \| u_\ell \|_1\|Vf_\ell^2\|_1 \int dw' \, \| a_{w'} \Psi \|^2 \leq  \rho^{3-9\eps}  L^3 \end{split} \end{equation}
with (\ref{eq:zeta-combi1}) and Lemma \ref{lm:a-bds}. As for the term proportional to $u_\ell (x-w)$, we estimate it by 
\begin{equation}\label{eq:uxw} \begin{split}  \int dx dy &dz dw dw' \, V(x-y) f_\ell^2(x-y) \nu^2 (x-z) |\nu (y-w)|\\
&\qquad\qquad\qquad\times|\nu (y-w')| u_\ell (x-w) \| a_w' \Psi \|^2 \\ 
\leq\;& \| \nu \|_\infty   \int dx dy dz dw dw' \, V(x-y)f_\ell^2(x-y) \nu^2 (x-z)\\
&\qquad\qquad\qquad\times|\nu (y-w')| u_\ell (x-w) \| a_w' \Psi \|^2 \\ \leq\;& \| \nu \|_\infty \| \nu \|_2^2 \| \nu \|_1 \| u_\ell \|_1\|Vf_\ell^2\|_1 \int dw' \, \| a_{w'} \Psi \|^2 \leq  \rho^{7/2-6\eps-\delta}  L^3. \end{split} \end{equation}
The other contributions can be bounded either similarly to (\ref{eq:uxz}) or to (\ref{eq:uxw}). Hence, 
\[\begin{split}
    \text{B}_3 \leq \;&\frac{1}{L^3} \int dx dy dz dw dw' \, V(x-y) f_\ell^2 (x-y)  \nu^2 (x-z) \nu (y-w) \nu (y-w')  \langle a_{w'} \Psi,  a_w \Psi \rangle  
\\
&+ C \rho^{5/2+\eps},
\end{split}  \]
if $\eps, \delta > 0$ are small enough. We write $a_w = \sqrt{\rho_0} + b_w, a_{w'} = \sqrt{\rho_0} + b_{w'}$ and we observe that the contribution of $\sqrt{\rho_0}$ vanishes, since $\int \nu = 0$. Therefore, with Lemma \ref{lm:a-bds} and (\ref{eq:zeta-combi1}), we obtain 
\[ \begin{split} \text{B}_3 \leq \; &C\rho^{5/2+\eps} + \frac{1}{L^3} \int  dx dy dz dw dw' \, V(x-y) f_\ell^2(x-y) \nu^2 (x-z) |\nu (y-w)| \\
&\qquad\qquad\qquad\qquad\qquad\qquad\qquad\times|\nu (y-w')| \| b_{w'} \Psi \|^2 \\
 \leq \; & C \rho^{5/2+\eps} + \frac{C \| \nu \|_2^2 \| \nu \|_1^2\|Vf_\ell^2\|_1}{L^3} \int dw' \| b_{w'} \Psi \|^2 \leq C  \rho^{5/2+\eps} \end{split}  \] 
if $\eps, \delta > 0$ are small enough. 

To estimate $\text{B}_4$, we proceed similarly to what we did in order to remove from $\text{B}_3$ the factor $J^2 (x) J^2 (y) J^2 (w) J(z) J(w')$ and all factors of $f_\ell^2$ except for $f_\ell^2(x-y)$. We arrive at 
\[  \begin{split} \text{B}_4 \leq \; &C \rho^{5/2+\eps} + \frac{1}{L^3} \int dx dy dz dw dw' \, V (x-y) f_\ell^2 (x-y)  \nu (x-z) \nu (x-w) \\ &\hspace{6cm} \times \nu (y-w) \nu (y-w') \langle a_{w'} \Psi , a_z \Psi \rangle \, . \end{split} \]
As above, in the integral we can replace $a_{w'}, a_z$ by $b_{w'}, b_z$. With (\ref{eq:zeta-combi1}) and Lemma \ref{lm:a-bds}, we conclude that 
\[ \begin{split} \text{B}_4 \leq \; &C \rho^{5/2+\eps}+\\
&+\frac{1}{L^3} \int dx dy dz dw dw' \, V(x-y)f_\ell^2(x-y) |\nu (x-w)|^2 |\nu (x-z)| |\nu (y-w')| \| b_{w'} \Psi \|^2 \\ &+ \frac{1}{L^3}  \int dx dy dz dw dw' \, V(x-y)f_\ell^2(x-y) |\nu (y-w)|^2 |\nu (x-z)| |\nu (y-w')| \| b_{z} \Psi \|^2  \\ \lesssim \; &\frac{\| \nu \|_2^2 \| \nu \|_1^2 \|Vf_\ell^2\|_1}{L^3}  \int dw' \, \| b_{w'} \Psi \|^2 + C \rho^{5/2 + \eps} \leq C \rho^{5/2+\eps} \end{split} \]
if $\eps, \delta > 0$ are small enough. 

Finally, we consider the term $\text{B}_5$. As above, we first remove the $J$-operators and the $f_\ell^2$-factors $\prod_{r,s} f_\ell^2 (r-s)$. We find 
\[ \begin{split} \text{B}_5 \leq \; &C \rho^{5/2+\eps} + \frac{1}{L^3} \int dx dy dz dw dz' dw' \, V(x-y)f_\ell^2(x-y)  \nu (x-z) \nu (x-z') \nu (y-w)  \\ &\hspace{8cm} \times \nu (y-w')\langle a_{z' }a_{w'} \Psi, a_z a_w \Psi \rangle \\ = \; &C \rho^{5/2+\eps}+ \frac{1}{L^3} \int dx dy dz dw dz' dw' \, V(x-y) f_\ell^2(x-y) \nu (x-z) \nu (x-z') \nu (y-w) \\ &\hspace{8cm} \times \nu (y-w') \langle b_{z' }b_{w'} \Psi ,b_z b_w \Psi \rangle \,. \end{split} \]
Using the fast decay (\ref{eq:decay}) of $\nu$ to restrict the integrals to $|x-z| \leq R$, $|y-w| \leq R$, Cauchy-Schwarz, (\ref{eq:zeta-combi1}) and Lemma \ref{lm:a-bds}, we obtain 
\[ \begin{split} \text{B}_5 \leq \; &C \rho^{5/2+\eps} +\frac{C}{L^3} \int_{|x-z| \leq R, |y-w| \leq R} dx dy dz dw dz' dw' \, V(x-y) f_\ell^2(x-y)\\ &\hspace{7cm} \times |\nu (x-z')|^2 |\nu (y-w')|^2  \| b_z b_w \Psi \|^2\\  \leq \; &C \rho^{5/2+\eps} + C  \frac{R^3 \| \nu \|^4_2 \|Vf_\ell^2\|_1}{L^3} \int_{|z-w| \leq 2R} dz dw \, | b_z b_w \Psi \|^2 \leq C \rho^{5/2+\eps} 
  \end{split} \]
if $\eps > 0$ is small enough. This concludes the proof of the lemma. 
\end{proof} 


\begin{lemma} We have 
\[ \frac{1}{L^3} \text{Re } \int dx dy \, V(x-y) \langle \phi_1 (x;y) , \phi_2 (x;y) \rangle \leq \rho_0 \int dx \, V(x) f_\ell^2 (x) \nu (x)  + C \rho^{5/2+\eps}  \]
if $\eps , \delta > 0$ are small enough. 
\end{lemma}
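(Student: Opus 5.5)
We expand the inner product directly. From (\ref{eq:axayPsi}),
\[ \langle \phi_1(x;y), \phi_2(x;y)\rangle = \rho_0 f_\ell^2(x-y)\,\nu(x-y)\,\| J(x)J(y)\Psi\|^2, \]
so that
\[ \frac{1}{L^3}\text{Re}\int dx\,dy\, V(x-y)\langle \phi_1,\phi_2\rangle = \frac{\rho_0}{L^3}\int dx\,dy\,V(x-y) f_\ell^2(x-y)\,\nu(x-y)\,\|J(x)J(y)\Psi\|^2 . \]
Since $\nu$ need not be positive on the support of $V$, the bound $J\le 1$ cannot be applied directly. Instead we write $\|J(x)J(y)\Psi\|^2 = 1 - \langle \Psi,(1-J^2(x)J^2(y))\Psi\rangle$. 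The term with $1$ gives exactly $\rho_0\int V f_\ell^2\nu$, because $\|\Psi\|=1$ and the $x$-integration yields a factor $L^3$. It remains to bound the error
\[ \frac{\rho_0}{L^3}\int dx\,dy\,V(x-y)f_\ell^2(x-y)\,|\nu(x-y)|\,\langle\Psi,(1-J^2(x)J^2(y))\Psi\rangle . \]

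For the error we argue as in (\ref{eq:1-J}). Since $1-\prod f_\ell^2 \le \sum u_\ell$ with $u_\ell=1-f_\ell^2\le 2\omega_\ell$, we get $1-J^2(x)J^2(y)\le \int dw\,(u_\ell(x-w)+u_\ell(y-w))\,a_w^*a_w$. The error is therefore bounded by
\[ \frac{\rho_0}{L^3}\int dx\,dy\,dw\, V(x-y)f_\ell^2(x-y)|\nu(x-y)|\big(u_\ell(x-w)+u_\ell(y-w)\big)\|a_w\Psi\|^2 . \]

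Next we use the size of $\nu$ on the support of $V$. In the proof of Lemma \ref{lm:phi2} we showed $|\nu(x)|\lesssim \rho^{3/2}$ there: the contribution $\|(\gamma^{-1}-1)*\sigma\|_\infty\lesssim\rho^{3/2}$, and on $\text{supp}\,V$ only the counterterm in $\sigma$ survives, which is $O(\rho^{3/2+3\eps/2})$. Integrating $u_\ell$ gives $\|u_\ell\|_1\lesssim \frak{a}\ell^2$. Using $\|Vf_\ell^2\|_1\lesssim 1$ and $\int dw\,\|a_w\Psi\|^2 = \langle\Psi,\mathcal N\Psi\rangle\lesssim\rho L^3$ (Prop.~\ref{prop:N_on_psi}, or Lemma \ref{lm:a-bds} with $k=1$), the error is
\[ \lesssim \rho\cdot\rho^{3/2}\cdot\ell^2\cdot\rho = \rho^{7/2-2\delta}, \]
with $\ell=\rho^{-\delta}$ in units where $\frak{a}=1$. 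This is much smaller than $\rho^{5/2+\eps}$.

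The only delicate point is the sign issue: $\|J(x)J(y)\Psi\|^2$ cannot simply be dropped, since $\nu$ can take either sign on $\text{supp}\,V$. This is resolved by combining the smallness $|\nu|\lesssim\rho^{3/2}$ on $\text{supp}\,V$ with the bound on $1-J^2$ above. Because the resulting error is so small, we do not need the finer local estimates of Lemma \ref{lm:aaaaC}.
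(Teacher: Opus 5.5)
Your proof is correct and follows the paper's argument. You write $\|J(x)J(y)\Psi\|^2 = 1-\langle\Psi,(1-J^2(x)J^2(y))\Psi\rangle$, bound $1-J^2(x)J^2(y)$ by $\int dw\,(u_\ell(x-w)+u_\ell(y-w))a_w^*a_w$, and integrate using $\|u_\ell\|_1\lesssim\ell^2$, $\|Vf_\ell^2\|_1\lesssim 1$ and the bound on the number of particles. The only difference is that the paper uses the cruder bound $\|\nu\|_\infty\lesssim\rho/\ell$ from (\ref{eq:zeta-combi1}), giving an error of order $\rho^{3-\delta-3\eps}$, whereas your pointwise bound $|\nu|\lesssim\rho^{3/2}$ on $\text{supp}\,V$ (from the proof of Lemma \ref{lm:phi2}) gives $\rho^{7/2-2\delta}$; either suffices.
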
 
\begin{proof} 
From the definition of $\phi_1, \phi_2$, we obtain 
\[ \int dx dy \, V(x-y) \langle \phi_1 (x;y) , \phi_2 (x;y) \rangle = \rho_0 \int dx dy \, V(x-y) f_\ell^2 (x-y) \nu (x-y) \| J(x) J(y) \Psi \|^2 . \]
With (\ref{eq:zeta-combi1}), Lemma \ref{lm:a-bds} and $\| u_\ell \|_1 \lesssim \rho^{-2\delta}$, we find 
\[ \begin{split} \Big| \rho_0 \int dx dy \,& V(x-y) f_\ell^2 (x-y) \nu (x-y) \langle \Psi, \big( 1 - J^2 (x) J^2 (y) \big) \Psi \rangle \Big| \\ &\leq \rho_0 \int dx dy ds \, V(x-y) f_\ell^2(x-y)|\nu (x-y)| \big(u_\ell (x-s) + u_\ell (y-s) \big) \| a_s \Psi \|^2 \\ &\leq C \rho_0 \| \nu \|_\infty \| u_\ell \|_1 \|Vf_\ell^2\|_1 \int ds \, \| a_s \Psi \|^2  \leq C \rho^{5/2+\eps} L^3 \end{split} \]
if $\eps , \delta > 0$ are small enough. Hence,
\[ \frac{1}{L^3} \text{Re} \int dx dy \, V(x-y) \langle \phi_1 (x;y) , \phi_2 (x;y) \rangle \leq \rho_0 \int dx \, V(x) f_\ell^2 (x) \nu (x) + C \rho^{5/2+\eps} .\]
\end{proof}

\begin{lemma}\label{lm:phi1phi3} 
For $j=3,4$, we have 
\[ \frac{1}{L^3} \text{Re } \int dx dy \, V(x-y) \langle \phi_1 (x;y) , \phi_j (x;y) \rangle \leq C \rho^{5/2+\eps} \]
if $\eps,\delta > 0$ are small enough.
\end{lemma}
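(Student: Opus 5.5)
By symmetry ($\phi_4(x;y)=\phi_3(y;x)$ and $V$ even) it suffices to treat $j=3$. Inserting the definitions from (\ref{eq:axayPsi}), I will write
\[ \int dx dy \, V(x-y) \langle \phi_1 , \phi_3 \rangle = \rho_0^{3/2} \int dx dy dz \, V(x-y) f_\ell^2(x-y) \nu(y-z) f_\ell(x-z) f_\ell(y-z) \langle J(x)J(y)\Psi , a_z^* J(x)J(y)J(z) \Psi\rangle . \]
The plan is to strip off all $J$ operators and $f_\ell$ factors (other than $f_\ell^2(x-y)$), reducing to $\langle \Psi, a_z^*\Psi\rangle$. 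Since $\int\nu = 0$, the condensate part $\sqrt{\rho_0}$ then drops out, and only $\langle \Psi, b_z^*\Psi\rangle$ remains. That remainder is small by Lemma \ref{lm:a-bds}.

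\textbf{Step 1: removing the $J$'s.} I move $J(x)J(y)$ across $a_z^*$ using $J(x)a_z^* = f_\ell(x-z)a_z^* J(x)$, obtaining $\langle a_z \Psi, J^2(x)J^2(y)J(z)\Psi\rangle$ up to harmless $f_\ell$ factors. I then write $J^2(x)J^2(y)J(z) = 1 - (1-J^2(x)J^2(y)J(z))$ and bound the difference with (\ref{eq:1-J}) (and its analogue with $u_\ell=1-f_\ell^2$). This gives, by Cauchy--Schwarz,
\[ \rho^{3/2}\int V f_\ell^2(x-y)|\nu(y-z)|\,(u_\ell(x-s)+u_\ell(y-s)+\omega_\ell(z-s))\,\|a_z\Psi\|\,\|a_z a_s\Psi\|\, ds . \]
I will split this with weights into $\rho^{1/2+\kappa}\|a_z\Psi\|^2 + \rho^{5/2-\kappa}\|a_za_s\Psi\|^2$, in each case against $|\nu|\,u_\ell$.

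For the $u_\ell(x-s), u_\ell(y-s)$ terms I will use the decay (\ref{eq:decay}) of $\nu$ to restrict to $|z-s|\le CR$, then apply (\ref{eq:kas}) with $k=2$ together with $\|\nu\|_\infty\lesssim \rho/\ell$ and $\|u_\ell\|_1\lesssim \ell^2$. For the $\omega_\ell(z-s)$ term, where $|z-s|\le 4\ell$, I will use Lemma \ref{lm:aaaaC} with $\|\nu\|_1\lesssim 1$ and $\|\omega_\ell\|_\infty\le 1$. The result should be bounds of the form $\rho^{3-C\eps-C\delta}$, hence $\le\rho^{5/2+\eps}$ for $\eps,\delta$ small.

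\textbf{Step 2: removing the $f_\ell$'s.} Similarly, I use $1-f_\ell(x-z)f_\ell(y-z)\le \omega_\ell(x-z)+\omega_\ell(y-z)$. The error is then
\[ \rho^{3/2}\int V f_\ell^2 |\nu(y-z)|\,\omega_\ell(x-z)\,|\langle\Psi,a_z^*\Psi\rangle| , \]
which I bound by $|\langle\Psi,a_z^*\Psi\rangle|\le \|a_z\Psi\|\le \frac12(\rho^{1/2}+\rho^{-1/2}\|a_z\Psi\|^2)$, $\|\nu\|_\infty\lesssim\rho/\ell$ and $\|\omega_\ell\|_1\lesssim \frak{a}\ell^2$. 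This gives $\rho^{3}\ell L^3$, which is negligible. Hence the expression reduces to
\[ \rho_0^{3/2}\int dx dy dz \, V(x-y)f_\ell^2(x-y)\nu(y-z)\langle\Psi,a_z^*\Psi\rangle . \]

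\textbf{Step 3: using $\int\nu=0$.} Substituting $a_z^*=\sqrt{\rho_0}+b_z^*$, the constant term vanishes because $\int\nu=0$. The main obstacle is the remainder $\rho^{3/2}\int V f_\ell^2(x-y)\nu(y-z)\langle\Psi,b_z^*\Psi\rangle$: the crude bound $\|\nu\|_1\int\|b_z\Psi\|\lesssim \rho^{3/2}\cdot\rho^{3/4}L^3$ is not enough. Instead I will expand $b_z^* = c^*(\gamma_z)+c(\sigma_z)$ and use Lemma \ref{lm:c's} (\ref{eq:cc-gh}) with Cauchy--Schwarz, giving $\rho^{3/2}\|\nu\|_1\,\rho^{1-11\eps-\delta}L^3 = \rho^{5/2-11\eps-\delta}L^3$. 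This is still short of $\rho^{5/2+\eps}$, so I expect to need a further gain.

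To get it I will exploit translation invariance in $z$. Because $\Psi$ is translation invariant, $\langle\Psi,b_z^*\Psi\rangle$ is independent of $z$, so the $z$-integral against $\nu(y-z)$ vanishes exactly since $\int\nu=0$. More directly, $\int dz\,\nu(y-z)a_z^*$ has zero-momentum component $\widehat\nu_0=0$, and on a translation-invariant state $\langle\Psi,\hat a_p^*\Psi\rangle=0$ for $p\ne0$. Therefore the main term is identically zero, and only the errors from Steps 1 and 2 survive. The delicate point is then justifying that the $J$- and $f_\ell$-removal errors in Step 1 are genuinely of order $\rho^{3-C\eps}$, which I expect to be the real work.
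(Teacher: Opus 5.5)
Your proposal is correct and is essentially the paper's proof. Both arguments strip $J^2(x)J^2(y)J(z)$ using \eqref{eq:1-J}, Cauchy--Schwarz, Lemma~\ref{lm:a-bds} and Lemma~\ref{lm:aaaaC}, and remove the $f_\ell$-factors using $u_\ell$ and $\|\nu\|_\infty$. The remaining term then vanishes identically because $\int\nu=0$: integrating in $y$ at fixed $x-y$ and $z$ already gives zero, so the detour through $b_z^*$ and the $c$-operators in your Step 3 is not needed.

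One small correction concerns Step 1. The operator Cauchy--Schwarz for $0\le 1-J^2(x)J^2(y)J(z)\le d\Gamma(u_{\ell,x})+d\Gamma(u_{\ell,y})+d\Gamma(\omega_{\ell,z})$ pairs $\|a_s a_z\Psi\|$ with $\|a_s\Psi\|$, not with $\|a_z\Psi\|$ as in your intermediate display. After your weighted splitting and integration over all variables, this yields exactly the same bounds.
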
 
\begin{proof} 
We consider only $j=3$, the case $j=4$ can be handled analogously. From the definition of $\phi_1, \phi_3$, we have
\[ \begin{split}  &\int dx dy \, V(x-y) \langle \phi_1 (x) , \phi_3 (x) \rangle \\ &= \rho_0^{3/2} \int dx dy dz \, V(x-y) f_\ell^2 (x-y) \nu (y-z) f_\ell^2 (y-z) f_\ell^2 (x-z) \langle a_z \Psi , J^2 (x) J^2 (y) J(z) \Psi \rangle .\end{split} \]
To estimate this contribution, we observe that, applying (\ref{eq:decay}) to restrict the integral to $|y-z| \leq R$ and then using Cauchy-Schwarz, 
\[ \begin{split} 
\Big|  \rho_0^{3/2} &\int dx dy dz \, V(x-y) f_\ell^2 (x-y) \nu (y-z) f_\ell^2 (y-z) f_\ell^2 (x-z) \\ &\hspace{7cm} \times  \langle a_z \Psi , \big(1 - J^2 (x) J^2 (y) J(z)\big)  \Psi \rangle \Big| \\  \leq \; &C \rho^{5/2+\eps} + \rho^{3/2} \Big( \int_{|y-z| \leq R} dx dy dz \, V(x-y)f_\ell^2(x-y) \\
&\hspace{5cm}\times\big( u_\ell (x-s) + u_\ell (y-s) + u_\ell (z-s) \big) \| a_z a_s \Psi \|^2 \Big)^{1/2} \\ &\hspace{.2cm} \times \Big( \int dx dy dz \, V(x-y)f_\ell^2(x-y) |\nu (y-z)|^2 \\
&\hspace{4cm}\times\big( u_\ell (x-s) + u_\ell (y-s) + u_\ell (z-s) \big) \| a_s \Psi \|^2 \Big)^{1/2} \\ \leq \; &C \rho^{5/2+\eps} + C \rho^{3/2} \|Vf_\ell^2\|_1 \bigg( \| u_\ell \|_1 \int_{|z-s| \leq R} dz ds \, \| a_z a_s \Psi \|^2\\
&\hspace{3.5cm}+ R^3 \int_{|z-s| \leq \ell} dz ds \, \| a_z a_s \Psi \|^2 \bigg)^{1/2} \bigg( \| \nu \|_2^2 \| u_\ell \|_1 \int ds \| a_s \Psi \|^2 \bigg)^{1/2} \\ \leq \; &C \rho^{5/2+\eps} 
\end{split} \]
if $\eps,\delta > 0$ are small enough, where, in the last step, we used (\ref{eq:zeta-combi1}), Lemma \ref{lm:a-bds} and Lemma~\ref{lm:aaaaC}. Furthermore, we notice that
\[ \begin{split} 
\Big|  \rho_0^{3/2} &\int dx dy dz \, V(x-y) f_\ell^2 (x-y) \nu (y-z) \big( 1 - f_\ell^2 (y-z) f_\ell^2 (x-z) \big) \langle a_z \Psi ,  \Psi \rangle \Big| \\ &\leq  \rho^{3/2} \int dx dy dz \, V(x-y)f_\ell^2(x-y)  |\nu (y-z)| \big( u_\ell (y-z) + u_\ell (x-z) \big) \| a_z \Psi \|  \\ &\leq \rho^{3/2} \| \nu \|_\infty \| u_\ell \|_1 \|Vf_\ell^2\|_1\int dz \| a_z \Psi \| \leq \rho^{5/2-\delta} L^{3/2} \Big( \int dz \, \|a_z \Psi \|^2 \Big)^{1/2} \leq \rho^{5/2+\eps} L^3 \end{split} \]
if $\eps, \delta > 0$ are small enough. The claim now follows from the observation that 
\[ \int dx dy dz \, V(x-y) f_\ell^2 (x-y) \nu (y-z) \langle a_z \Psi ,  \Psi \rangle = 0 \]
since $\int \nu = 0$. 
\end{proof}

\begin{lemma} \label{lm:phi1phi5} 
We have 
\[ \frac{1}{L^3} \text{Re } \int dx dy \, V(x-y) \langle \phi_1 (x;y) , \phi_5 (x;y) \rangle \leq \rho_0 \int dx \, V(x) f_\ell^2 (x) (\nu * \sigma * \sigma) (x) +C \rho^{5/2+\eps} \]
if $\eps, \delta > 0$ are small enough. 
\end{lemma}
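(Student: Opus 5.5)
Expanding the definitions of $\phi_1$ and $\phi_5$, I would write
\[ \int dx dy \, V(x-y) \langle \phi_1 (x;y) , \phi_5 (x;y) \rangle = \rho_0 \int dx dy dz dw \, V(x-y) f_\ell^2 (x-y) \nu (x-z) \nu (y-w) F(x,y,z,w) \langle a_w a_z \Psi , J(x) J(y) J(z) J(w) \Psi \rangle \]
(up to commuting $J(x)J(y)$ past $a_z^* a_w^*$, which only produces $f_\ell$ factors already included). Here $F$ is the product of $f_\ell(x-z) f_\ell(y-z) f_\ell(x-w) f_\ell(y-w) f_\ell(z-w)$, together with the $f_\ell$'s arising from $J(x)J(y)$ acting through $a^*_z a^*_w$. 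The plan then follows the pattern of Lemma \ref{lm:phi1phi3}.

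\textbf{Step 1: remove the $J$ operators and the $f_\ell$ factors.} I would first use (\ref{eq:decay}) to restrict the integration to $|x-z|, |y-w| \leq R = \rho^{-1/2-3\eps}$; outside this region the contribution is smaller than any power of $\rho$. Next I would replace $J(x)J(y)J(z)J(w)$ by $1$, bounding $1 - \prod J \leq \sum d\Gamma(u_\ell)$ as in (\ref{eq:1-J}). By Cauchy--Schwarz, the error is controlled by
\[ \rho\, \Big(\int_{|z-w|\leq 2R} \|a_z a_w\Psi\|^2\Big)^{1/2} \Big(\int \|u_\ell\|_1 \ldots \|a_s\Psi\|^2\Big)^{1/2}, \]
or by three-point terms, which Lemma \ref{lm:a-bds} and (\ref{eq:aaaCR}) handle; the case $|z-s| \leq C\ell$ is covered by Lemma \ref{lm:aaaaC}. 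Similarly I would replace $F$ by $1$, bounding $1-F$ by a sum of $u_\ell$'s. Each such term gains a factor $\|u_\ell\|_1 \lesssim \ell^2$ at the cost of one $\|\nu\|_\infty \lesssim \rho/\ell$, or, for $u_\ell(z-w)$, it is estimated with Lemma \ref{lm:aaaaC}. All of these errors should be $O(\rho^{5/2+\eps})L^3$ when $\eps, \delta$ are small.

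\textbf{Step 2: extract the main term.} What remains is
\[ \rho_0 \int V(x-y) f_\ell^2(x-y) \nu(x-z)\nu(y-w) \langle \Psi, a_z^* a_w^* \Psi\rangle . \]
I would write $a_z^* = \sqrt{\rho_0} + b_z^*$ and $a_w^* = \sqrt{\rho_0} + b_w^*$; because $\int\nu = 0$, every term containing a $\sqrt{\rho_0}$ vanishes, leaving $\langle \Psi, b_z^* b_w^* \Psi \rangle$. Then I would substitute $b^*_z = c^*(\gamma_z) + c(\sigma_z)$ and $b^*_w = c^*(\gamma_w) + c(\sigma_w)$. Normal ordering $c(\sigma_z) c^*(\gamma_w)$ produces the commutator $\langle \sigma_z, \gamma_w \rangle = (\sigma * \gamma)(z-w)$. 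After integrating over $z$ and $w$ and using $\nu = \gamma^{-1} * \sigma$ (with all kernels even), this gives the constant
\[ (\nu * \nu * \sigma * \gamma)(x-y) = (\nu * \sigma * \sigma)(x-y), \]
which is exactly the claimed main term. The remaining normally ordered expectations would be bounded with Cauchy--Schwarz and Lemma \ref{lm:c's}, namely (\ref{eq:cc-gh}) and (\ref{eq:cccc-gh}) on the region $|z-w| \leq 2R$, together with $\|\nu\|_1 \lesssim 1$ and $\|\nu\|_2^2 \lesssim \rho^{3/2}$. The gain comes from the prefactor $\rho$ together with the $\rho^{2-42\eps}$ of (\ref{eq:cccc-gh}).

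\textbf{Main obstacle.} I expect the hardest part to be Step 1, specifically removing $J$ and $F$ while only having access to the two-point quantity $\langle a_w a_z\Psi, \cdot\rangle$. This requires the three-particle local bound (\ref{eq:aaaCR}) and careful bookkeeping of the powers of $R^3$ against $\|\nu\|_2$ and $\|\nu\|_\infty$. In particular, the term $\rho\, \|a_z a_w\Psi\|\, \|\Psi\|$ must be paired through Cauchy--Schwarz so that one side yields $\rho^{1/2}\|\nu\|_2^2 L^3$ and the other side the small $u_\ell$-weighted local density. Since $V$ may be infinite, all bounds must go through $\|Vf_\ell^2\|_1 \lesssim 1$ rather than $\|V\|_1$.
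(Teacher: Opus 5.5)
Your overall plan follows the paper's. You restrict to $|x-z|,|y-w|\le R$ via (\ref{eq:decay}), remove the $J$'s and the $f_\ell$-factors, pass to $b$'s using $\int\nu=0$, expand in $c$'s, and read off the constant $(\nu*\nu*\sigma*\gamma)(x-y)=(\nu*\sigma*\sigma)(x-y)$. Step 2 and the removal of the $f_\ell$-factors are fine. Your sign $b^*_z=c^*(\gamma_z)+c(\sigma_z)$ is the correct inverse of (\ref{eq:def-cc}). Moving $a_wa_z$ through $J(x)J(y)$ actually produces $\mathcal{J}:=J(x)^2J(y)^2J(z)J(w)$, but this is harmless.

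The gap is in the one estimate you write down for removing $\mathcal{J}$. You pair $\|a_za_w\Psi\|$ with $\|(1-\mathcal{J})\Psi\|\le\langle\Psi,(1-\mathcal{J})\Psi\rangle^{1/2}$. This gains only the square root of the small factor $\rho\|u_\ell\|_1\sim\rho^{1-2\delta}$, whereas heuristically the term itself is of order $\rho_0\langle\Psi,(1-\mathcal{J})\Psi\rangle\sim\rho\cdot\rho\|u_\ell\|_1$. Your pairing therefore gives an error of order $\rho\cdot\rho\cdot(\rho\ell^2)^{1/2}=\rho^{5/2-\delta}$ per unit volume, up to $\rho^{-C\varepsilon}$, which is not $O(\rho^{5/2+\varepsilon})$. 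Concretely, put $|\nu|^2|\nu|^2$ on the one-particle side and keep the factor $R^3$ from the $x$-integration (it is missing from your display). You then get
\[ \rho\Big(R^3\int_{|z-w|\le CR}\|a_za_w\Psi\|^2\Big)^{1/2}\Big(\|\nu\|_2^4\|u_\ell\|_1\int ds\,\|a_s\Psi\|^2\Big)^{1/2}\lesssim\rho^{5/2-12\varepsilon-\delta}L^3 , \]
and, consistent with the heuristic, other ways of distributing the weights do no better.

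What is needed, and what the paper does, is to let the nonnegative operator $1-\mathcal{J}$ act on both vectors:
\[ |\langle a_za_w\Psi,(1-\mathcal{J})\Psi\rangle|\le\langle a_za_w\Psi,(1-\mathcal{J})a_za_w\Psi\rangle^{1/2}\langle\Psi,(1-\mathcal{J})\Psi\rangle^{1/2} . \]
Then use $1-\mathcal{J}\le\int ds\,U(s)a_s^*a_s$ in both factors, with $U(s)=u_\ell(x-s)+u_\ell(y-s)+u_\ell(z-s)+u_\ell(w-s)$. After Cauchy--Schwarz in the integration variables, the first factor is a $u_\ell$-weighted three-particle density:
\[ \int_{|x-z|,|y-w|\le R}Vf_\ell^2(x-y)\,U(s)\,\|a_sa_wa_z\Psi\|^2\lesssim\|u_\ell\|_1\int_{|s-z|,|s-w|\le CR}\|a_sa_wa_z\Psi\|^2+R^3\int_{|s-z|\le C\ell,\,|s-w|\le CR}\|a_sa_wa_z\Psi\|^2 . \]
This is bounded by (\ref{eq:kas}) with $k=3$ and by (\ref{eq:aaaCR}). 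The second factor is $\lesssim\|\nu\|_2^4\|u_\ell\|_1\rho L^3$, and the product is $O(\rho^{3-C\varepsilon-2\delta})L^3$.

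Your phrase ``or by three-point terms'' and your reference to (\ref{eq:aaaCR}) point in this direction. What is missing is the mechanism that produces the three-point quantity, namely applying $1-\mathcal{J}$ to $a_za_w\Psi$ as well. Without it, the $J$-removal misses the required $\rho^{5/2+\varepsilon}$ by a power of $\rho$.
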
 
\begin{proof} 
From the definition of $\phi_1, \phi_5$, we have 
\[ \begin{split} 
 \int &dx dy \, V(x-y) \langle \phi_1 (x;y) , \phi_5 (x;y) \rangle \\ &= \rho_0 \int dx dy dz dw \, V(x-y) f_\ell^2 (x-y) \nu (x-z) \nu (y-w) \\ &\hspace{.5cm} \times  f_\ell^2 (x-z) f_\ell^2 (x-w) f_\ell^2 (y-z) f_\ell^2 (y-w) f_\ell(z-w) \langle a_z a_w \Psi, J^2 (x) J^2 (y) J(z) J(w) \Psi \rangle .\end{split} \]
 To bound this contribution, we first observe that 
 \[ \begin{split} 
 \Big| \rho_0 \int &dx dy dz dw \, V(x-y) f_\ell^2 (x-y) \nu (x-z) \nu (y-w)  f_\ell^2 (x-z) f_\ell^2 (x-w)  \\ &\hspace{.5cm} \times f_\ell^2 (y-z) f_\ell^2 (y-w) f_\ell(z-w) \langle a_z a_w \Psi, \big(1 - J^2 (x) J^2 (y) J(z) J(w) \big) \Psi \rangle \Big| \\
 \leq \; &C \rho^{5/2+\eps} L^3 + \rho_0 \Big( \int_{|y-w| , |x-z| \leq R} dx dy dz dw ds \, V(x-y)f_\ell^2(x-y) \, \\ &\hspace{3cm} \times \big(u_\ell (x-s) + u_\ell (y-s) + u_\ell (z-s) + u_\ell (w-s) \big) \| a_s a_w a_z \Psi \|^2 \Big)^{1/2} \\ &\times \Big(   \int_{|y-w| , |x-z| \leq R} dx dy dz dw ds \, V(x-y) f_\ell^2(x-y)\, |\nu (x-z)|^2 |\nu (y-w)|^2 \\ &\hspace{3cm} \times \big(u_\ell (x-s) + u_\ell (y-s) + u_\ell (z-s) + u_\ell (w-s) \big) \| a_s \Psi \|^2 \Big)^{1/2}  \\ \leq \; &C \rho^{5/2+\eps} L^3 + C \rho_0 \|Vf_\ell^2\|_1 \Big( \| u_\ell \|_1 \int_{|s-w|, |s-z| \leq C R}  ds dw dz \, \| a_s a_w a_z \Psi \|^2 \\ &\hspace{1cm} + R^3 \int_{|s-z| \leq \ell, |s-w| \leq CR} ds dw dz \| a_s a_w a_z \Psi \|^2 \Big)^{1/2} \Big( \| \nu \|_2^4 \| u_\ell \|_1 \int ds \, \| a_s \Psi \|^2 \Big)^{1/2} \\ \leq \; &C \rho^{5/2+\eps} L^3  
 \end{split} \]
if $\eps , \delta > 0$ are small enough. Here we used (\ref{eq:zeta-combi1}), Lemma \ref{lm:a-bds} and Lemma \ref{lm:aaaaC}. We can also estimate 
\[ \begin{split} 
 \Big| \rho_0 \int &dx dy dz dw \, V(x-y) f_\ell^2 (x-y) \nu (x-z) \nu (y-w) \\ &\hspace{.5cm}\times \big( 1-  f_\ell^2 (x-z) f_\ell^2 (x-w) f_\ell^2 (y-z) f_\ell^2 (y-w) f_\ell(z-w)\big)  \langle a_z a_w \Psi, \Psi \rangle \Big| \\ \leq \; &C \rho^{5/2+\eps} L^3+ \rho_0 \Big( \int_{|y-w| , |x-z| \leq R} dx dy dz dw \, V(x-y)f_\ell^2(x-y) \, \\ &\hspace{1cm} \times \big(u_\ell (x-z) + u_\ell (x-w) + u_\ell (y-z) + u_\ell (y-w) +u_\ell (z-w) \big) \| a_w a_z \Psi \|^2 \Big)^{1/2} \\ &\times \Big(   \int_{|y-w| , |x-z| \leq R} dx dy dz dw \, V(x-y)f_\ell^2(x-y) \, |\nu (x-z)|^2 |\nu (y-w)|^2 \\ &\hspace{1cm} \times  \big(u_\ell (x-z) + u_\ell (x-w) + u_\ell (y-z) + u_\ell (y-w) +u_\ell (z-w) \big) \Big)^{1/2}  \\ \leq \; &C \rho^{5/2+\eps}L^3 + C \rho_0 \|Vf_\ell^2\|_1\Big( \| u_\ell \|_1 \int_{|w-z| \leq C R}  dw dz \, \| a_w a_z \Psi \|^2 \\
 &\hspace{4cm}+ R^3 \int_{|w-z| \leq \ell} dw dz \| a_w a_z \Psi \|^2 \Big)^{1/2}  \Big( \| \nu \|_\infty^2 \| \nu \|_2^2 \| u_\ell \|_1 L^3 \Big)^{1/2} \\ \leq \; &C \rho^{5/2+\eps} L^3
 \end{split} \]
 if $\eps, \delta > 0$ are small enough. It follows that 
 \[ \begin{split}  \text{Re } &\int dx dy \, V(x-y) \langle \phi_1 (x;y) , \phi_5 (x;y) \rangle \\ &\leq \text{Re } \rho_0  \int dx dy dz dw \, V(x-y) f_\ell^2 (x-y) \nu (x-z) \nu (y-w) \langle a_z a_w \Psi, \Psi \rangle + C \rho^{5/2+\eps} L^3  \\ &= \text{Re } \rho_0 \int dx dy dz dw \, V(x-y) f_\ell^2 (x-y) \nu (x-z) \nu (y-w) \langle b_z b_w \Psi, \Psi \rangle + C \rho^{5/2+\eps} L^3 \end{split} \]
 because $\int \nu = 0$. Next, we write $b_s = c(\gamma_s) - c^* (\sigma_s)$, for $s = z,w$. We conclude that 
 \begin{equation}\label{eq:IVa} \begin{split}  \text{Re } &\int dx dy \, V(x-y) \langle \phi_1 (x;y) , \phi_5 (x;y) \rangle \\ &\leq  C \rho^{5/2+\eps} L^3 + \text{Re } \rho_0  \int dx dy dz dw \, V(x-y) f_\ell^2 (x-y) \nu (x-z) \nu (y-w)  \langle \gamma_z ; \sigma_w \rangle \\ &+\text{Re } \rho_0  \int dx dy dz dw \, V(x-y) f_\ell^2 (x-y) \nu (x-z) \nu (y-w) \\ &\hspace{2cm} \times \langle \big( c^* (\sigma_z) c (\gamma_w) + c^* (\sigma_w) c (\gamma_z) + c^* (\sigma_z) c^* (\sigma_w) + c(\gamma_z) c (\gamma_w)\big) \Psi, \Psi \rangle . \end{split} \end{equation} 
With Lemma \ref{lm:c's}, we remark that 
\[ \begin{split} 
\Big| &\rho_0 \int dx dy dz dw \, V(x-y) f_\ell^2 (x-y) \nu (x-z) \nu (y-w) \langle c (\gamma_w) \Psi , c (\sigma_z) \Psi \rangle \Big| \\ &\leq \rho_0 \int dx dy dz dw \, V(x-y)f_\ell^2(x-y) |\nu (x-z)| |\nu (y-w)| \left[ \| c (\gamma_w) \Psi \|^2 + \| c (\sigma_z) \Psi \|^2 \right] \\ &\leq C \rho \| \nu \|_1^2 \|Vf_\ell^2\|_1 \sum_{\tau = \gamma, \sigma} \int dw \, \| c (\tau_w) \Psi \|^2 \leq C \rho^{5/2+\eps} L^3 \end{split} \]
if $\eps , \delta > 0$ are small enough. Furthermore,
\[ \begin{split} 
&\Big| \rho_0 \int dx dy dz dw \, V(x-y) f_\ell^2 (x-y) \nu (x-z) \nu (y-w) \langle c (\gamma_w) c (\gamma_z) \Psi , \Psi \rangle \Big| \\ &\leq C \rho^{5/2+\eps} L^3 + \rho_0 \Big( \int_{|w-y| \leq R}  dx dy dz dw \, V(x-y) f_\ell^2(x-y) |\nu (x-z)|  \| c (\gamma_w) c (\gamma_z) \Psi \|^2 \Big)^{1/2}\\ &\hspace{3cm} \times  \Big(  \int dx dy dz dw \, V(x-y) f_\ell^2(x-y) |\nu (x-z)| |\nu (y-w)|^2 \Big)^{1/2}  \\ &\leq C \rho^{5/2+\eps} L^3 + C \rho_0 \|Vf_\ell^2\|_1 \Big(\| \nu \|_1 \int_{|w-z| \leq R} dw dz \, \| c (\gamma_w) c (\gamma_z) \Psi \|^2 \Big)^{1/2} (\| \nu \|_1 \| \nu \|_2^2 L^3)^{1/2} \\ &\leq C \rho^{5/2+\eps} L^3 . \end{split} \]
From \eqref{eq:IVa}, we find that 
\[ \begin{split} 
 \text{Re } \int dx dy \, V(x-y) \langle &\phi_1 (x;y) , \phi_5 (x;y) \rangle \\ &\leq  C \rho^{5/2+\eps} L^3 + \rho_0 L^3  \int dx \, V(x) f_\ell^2 (x) (\nu * \gamma * \sigma * \nu) (x)  
  \end{split} \] 
  which concludes the proof of the lemma, since $\nu * \nu * \gamma * \sigma = \nu * \sigma * \sigma$. 
\end{proof}

\begin{lemma} For $j=3,4$, we have 
\[ \frac{1}{L^3} \text{Re } \int dx dy \, V(x-y) \langle \phi_2 (x;y) , \phi_j (x;y) \rangle \leq C \rho^{5/2+\eps} \]
if $\eps ,\delta > 0$ are small enough. 
\end{lemma}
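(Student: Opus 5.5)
We treat only $j=3$; the case $j=4$ follows by exchanging $x$ and $y$ (the symmetry $\phi_4(x;y)=\phi_3(y;x)$ and $V(x-y)=V(y-x)$). By the definitions in \eqref{eq:axayPsi},
\[ \int dx dy \, V(x-y) \langle \phi_2 (x;y), \phi_3 (x;y) \rangle = \sqrt{\rho_0} \int dx dy dz \, V(x-y) f_\ell^2(x-y) \nu(x-y) \nu(y-z) f_\ell^2(x-z) f_\ell^2(y-z) \langle a_z \Psi, J^2(x) J^2(y) J(z) \Psi \rangle . \]
This differs from the term in Lemma \ref{lm:phi1phi3} only by replacing one factor $\sqrt{\rho_0}$ with $\nu(x-y)$. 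On the support of $V$, the proof of Lemma \ref{lm:phi2} gives $|\nu(x-y)| \lesssim \rho^{3/2}$. This gains a factor of roughly $\rho$ over the prefactor $\rho_0^{3/2}$ of Lemma \ref{lm:phi1phi3}. So I plan to estimate crudely, without using the cancellation from $\int \nu = 0$.

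\textbf{Step 1 (direct bound).} Bound $0 \le J, f_\ell \le 1$ and use Cauchy--Schwarz to get $|\langle a_z \Psi, J^2(x) J^2(y) J(z) \Psi\rangle| \le \|a_z \Psi\|$. The term is then bounded by
\[ \rho^{1/2} \sup_{|x|\le r_0} |\nu(x)| \, \|V f_\ell^2\|_1 \int dz\, |(|\nu| * \mathbbm{1})| \ldots \lesssim \rho^{2} \|V f_\ell^2\|_1 \int dy dz \, |\nu(y-z)| \, \|a_z\Psi\| . \]
More precisely, after integrating $x$ against $V f_\ell^2$ (bounded in $L^1$), we are left with $\rho^2 \int dz\, (|\nu| * 1)\|a_z\Psi\| \le \rho^2 \|\nu\|_1 L^{3/2} (\int dz \|a_z \Psi\|^2)^{1/2}$. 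By Lemma \ref{lm:a-bds} with $k=1$, $\int dz\,\|a_z\Psi\|^2 \lesssim \rho^{1-3\eps}L^3$. The norm $\|\nu\|_1$ is controlled by \eqref{eq:zeta-combi1} as $(\rho^{1/2}\ell_0)^3 = \rho^{-3\eps}$. Together these give a bound of order $\rho^{5/2 - 5\eps} L^3$.

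\textbf{Main obstacle.} Step 1 just misses $\rho^{5/2+\eps}$ because of the $\rho^{-C\eps}$ losses. The fix is to reinstate the cancellation exactly as in Lemma \ref{lm:phi1phi3}. First, use the decay \eqref{eq:decay} to restrict to $|y-z| \le R$. Then remove $J^2(x)J^2(y)J(z)$ using \eqref{eq:1-J} and $1-f_\ell^2 \le u_\ell$, together with Cauchy--Schwarz, Lemma \ref{lm:a-bds} and Lemma \ref{lm:aaaaC}; each such error carries the extra factor $\|u_\ell\|_1 \lesssim \ell^2$ or the restriction $|z-s|\le \ell$, and these give positive powers of $\rho$. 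Next, remove $f_\ell^2(x-z)f_\ell^2(y-z)$ using $\|\nu\|_\infty \|u_\ell\|_1 \lesssim \rho\ell$. Now the extra $\rho$ from $|\nu(x-y)|\lesssim\rho^{3/2}$ on $\text{supp}\,V$ makes every error term at most $\rho^{7/2 - C\eps-C\delta}$, comfortably below $\rho^{5/2+\eps}$.

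\textbf{Conclusion.} The remaining main term,
\[ \sqrt{\rho_0}\int dx dy dz \, V(x-y) f_\ell^2(x-y) \nu(x-y)\nu(y-z)\langle a_z\Psi,\Psi\rangle , \]
vanishes identically because $\int \nu(y-z)\,dy = 0$. This gives the claimed bound $C\rho^{5/2+\eps}$ once $\eps, \delta$ are small.
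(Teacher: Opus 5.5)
Your proposal is correct and is essentially the paper's argument: you redo the proof of Lemma~\ref{lm:phi1phi3} (cut off $|y-z|>R$, remove the $J$-operators and $f_\ell$-factors, and let $\int \nu = 0$ kill the main term), controlling the extra factor $\nu(x-y)$ on $\text{supp}\, V$. The paper uses the cruder $\|\nu\|_\infty \lesssim \rho$, while you use the sharper $|\nu| \lesssim \rho^{3/2}$ on $\text{supp}\, V$ from Lemma~\ref{lm:phi2}. One harmless slip: $\nu(x-y)$ replaces a factor $\rho_0$ (not $\sqrt{\rho_0}$) of the prefactor $\rho_0^{3/2}$, so the gain is $\rho^{1/2}$ rather than $\rho$; your error size $\rho^{7/2-C\eps-C\delta}$ is still right, because the corresponding errors in Lemma~\ref{lm:phi1phi3} are already $O(\rho^{3-C\eps-C\delta})$.
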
 
\begin{proof}
The proof is very similar to the proof of Lemma \ref{lm:phi1phi3} because in all error terms we can estimate the factor $|\nu (x-y)| \lesssim \| \nu \|_\infty \lesssim \rho$ appearing in $\phi_2$, since $Vf_\ell^2$ can be used to integrate in the $x-y$ variable.  
\end{proof} 

\begin{lemma} We have 
\[ \frac{1}{L^3} \text{Re } \int dx dy \, V(x-y) \langle \phi_2 (x;y) , \phi_5 (x;y) \rangle \leq C \rho^{5/2+\eps}  \]
if $\eps, \delta > 0$ are small enough. 
\end{lemma}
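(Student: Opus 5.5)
The plan is to follow the proof of Lemma \ref{lm:phi1phi5} almost verbatim. The only difference is that $\phi_1$ is replaced by $\phi_2$, which amounts to trading the factor $\sqrt{\rho_0}$ for $\nu(x-y)/\sqrt{\rho_0}$. From the definitions we have
\[ \int dx dy \, V(x-y) \langle \phi_2 (x;y), \phi_5 (x;y) \rangle = \int dx dy dz dw \, V(x-y) f_\ell^2 (x-y) \nu(x-y) \nu (x-z) \nu (y-w) \, F \, \langle a_z a_w \Psi , J^2(x) J^2(y) J(z) J(w) \Psi \rangle, \]
where $F$ is the same product of $f_\ell$-factors as in Lemma \ref{lm:phi1phi5}. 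On the support of $V$ the additional factor satisfies $|\nu (x-y)| \lesssim \rho^{3/2}$, as shown in the proof of Lemma \ref{lm:phi2}. This bound comes from $\|(\gamma^{-1}-1)*\sigma\|_\infty \lesssim \rho^{3/2}$ together with the explicit form of $\sigma$ near the origin. A cruder bound also suffices: $|\nu(x-y)| \leq \|\nu\|_\infty \lesssim \rho/\ell$. Either way, $\nu(x-y)$ can be pulled out of every estimate, because $V f_\ell^2 \in L^1$ handles the integration in $x-y$.

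The key steps are as follows. First, I replace $J^2(x)J^2(y)J(z)J(w)$ by $1$. The error is controlled through $1 - \prod J \leq \sum \int u_\ell(\cdot - s) a_s^* a_s$, Cauchy--Schwarz, the decay (\ref{eq:decay}) restricting to $|x-z|,|y-w|\le R$, and then Lemma \ref{lm:a-bds} and Lemma \ref{lm:aaaaC}. This is the same chain of bounds as in Lemma \ref{lm:phi1phi5}, multiplied by $\|\nu\|_\infty/\rho_0 \lesssim \ell^{-1}$ (or by $\rho^{1/2}$), so it is $\leq C\rho^{5/2+\eps}L^3$. Second, I remove the $f_\ell$-factors in $F$ in the same way. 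Third, since $\int \nu = 0$, I replace $a_z a_w$ by $b_z b_w$. This leaves a term bounded by
\[ \frac{C\rho}{\ell}\int dx dy dz dw \, V(x-y)f_\ell^2(x-y) |\nu(x-z)||\nu(y-w)| \, |\langle b_z b_w \Psi, \Psi\rangle| . \]

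Here there is a simplification compared with Lemma \ref{lm:phi1phi5}: the leading term need not be extracted, since its contribution is itself small. Applying Cauchy--Schwarz with the restriction $|z-w| \leq 2R + r_0$ and using (\ref{eq:kbs}) with $k=2$ gives
\[ \frac{\rho}{\ell} \|\nu\|_1 \big( \|\nu\|_1 \|\nu\|_\infty \big)^{1/2} \Big(\int_{|z-w|\le CR} \|b_zb_w\Psi\|^2\Big)^{1/2} L^{3/2} \lesssim \rho^{1/2 + 1 + (3/2-11\eps)/2}\cdots L^3, \]
which is of order $\rho^{5/2+1/4-O(\eps)}L^3$. Alternatively, I write $b = c(\gamma)-c^*(\sigma)$. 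The constant term is then $\rho\ell^{-1}\|\nu\|_1^2\|\sigma\|_2\|\gamma\|_2 \sim \rho^{3/2+3/4}$ times powers of $\ell_0$, which is borderline. For that reason I prefer a cleaner estimate: use $|\nu(x-y)|\lesssim\rho^{3/2}$ on $\operatorname{supp}V$. The constant term is then at most $\rho^{3/2}\,|(\nu*\gamma*\sigma*\nu)(x)| \lesssim \rho^{3/2}\|\nu\|_2\|\sigma\|_2\|\nu\|_1^{2}$, and the $c$-terms are handled as in Lemma \ref{lm:phi1phi5} via Lemma \ref{lm:c's}.

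The main obstacle is bookkeeping the powers of $\rho$ so that the loss from $\|\nu\|_1\lesssim(\rho^{1/2}\ell_0)^3 = \rho^{-9\eps}$ is beaten. Using the pointwise bound $|\nu|\lesssim\rho^{3/2}$ on $\operatorname{supp} V$ gains a full $\rho^{1/2}$ over $\phi_1$. That gain absorbs all $\rho^{-O(\eps)}$ losses for $\eps,\delta$ small, which yields $\leq C\rho^{5/2+\eps}$.
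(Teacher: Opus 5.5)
Your final argument is correct and is essentially the paper's proof. You pull out $\nu(x-y)$ and remove the $J$'s and $f_\ell$-factors exactly as in Lemma \ref{lm:phi1phi5}. You then use $\int \nu = 0$ to pass to $\langle b_z b_w \Psi, \Psi\rangle$, expand $b = c(\gamma) - c^*(\sigma)$, and control the normally ordered terms by Lemma \ref{lm:c's}. Finally you bound the constant term with the pointwise estimate $|\nu| \lesssim \rho^{3/2}$ on $\text{supp}\, V$. The paper rewrites the constant term as $\int V f_\ell^2\, \nu\, (\nu * \sigma * \sigma)$ and bounds both factors by $\rho^{3/2}$ on $\text{supp}\,V$. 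Your $L^\infty$ bound, of order $\rho^{3/2}\|\nu\|_1(1+\|\gamma-1\|_1)\|\nu\|_2\|\sigma\|_2 \lesssim \rho^{3-6\eps}$, works equally well.

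However, the ``simplification'' paragraph and the claim that the crude bound $|\nu(x-y)| \le \rho/\ell$ ``also suffices'' are wrong as written.
\begin{itemize}
\item With prefactor $\rho/\ell = \rho^{1+\delta}$, your Cauchy--Schwarz estimate per unit volume is $\rho^{1+\delta}\, \|\nu\|_1 (\|\nu\|_1\|\nu\|_\infty)^{1/2} \rho^{3/4 - 11\eps/2} \lesssim \rho^{9/4 + 3\delta/2 - 10\eps}$. That is $\rho^{1+1/2+3/4}$, not $\rho^{11/4}$, and it is insufficient.
\item Even after extracting the constant term, the crude prefactor only gives $O(\rho^{5/2+\delta})$ for it. This does not beat $\rho^{5/2+\eps}$, since $\delta < \eps$.
\end{itemize}
So the crude bound is adequate only for the error terms: removal of $J$ and $f_\ell$, and the $c$-terms. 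The bound $|\nu| \lesssim \rho^{3/2}$ on $\text{supp}\,V$ is indispensable for the main term; it comes from $\tilde\sigma = 0$ for $|x| \le \ell$.

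Once you use that prefactor, your shortcut does go through: $\rho^{3/2}\cdot\rho^{-3\eps}\cdot(\rho^{1+\delta-3\eps})^{1/2}\cdot \rho^{3/4-11\eps/2} = \rho^{11/4 - 10\eps + \delta/2}$. This is a genuine simplification of the paper's route. It bypasses the $c$-expansion, Lemma \ref{lm:c's} and the identity $\nu*\gamma*\sigma*\nu = \nu*\sigma*\sigma$ altogether.

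A minor point: $\|\nu\|_1 \lesssim (\rho^{1/2}\ell_0)^3 = \rho^{-3\eps}$, not $\rho^{-9\eps}$, though the overestimate is harmless.
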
 
\begin{proof} 
From the definition of $\phi_2, \phi_5$, we find 
\[ \begin{split} 
 \int dx dy \, &V(x-y) \langle \phi_2 (x;y) , \phi_5 (x;y) \rangle \\ = \; & \int dx dy dz dw \, V(x-y) f_\ell^2 (x-y) \nu (x-y) \nu (x-z) \nu (y-w) f_\ell^2 (x-z) f_\ell^2 (y-z) 
 \\  &\hspace{2cm} \times f_\ell^2 (x-w) f_\ell^2 (y-w) f_\ell(z-w) \langle a_z a_w \Psi, J^2 (x) J^2 (y) J(z) J(w) \Psi \rangle. \end{split} \]
We proceed as in the proof of Lemma \ref{lm:phi1phi5} to get rid of the $J$-operators and of the $f_\ell$-factors. After estimating $|\nu (x-y)| \leq \| \nu \|_\infty \lesssim \rho$, all these error terms can be handled exactly as those arising in Lemma \ref{lm:phi1phi5}. Writing $a_z = \sqrt{\rho_0} + b_z$, $a_w = \sqrt{\rho_0} + b_w$ and observing that the contribution of the factors $\sqrt{\rho_0}$ vanishes, because $\int \nu = 0$, we obtain 
\[ \begin{split}  \frac{1}{L^3} &\text{Re} \int dx dy \, V(x-y) \langle \phi_2 (x;y) , \phi_5 (x;y) \rangle \\  \leq\;& C \rho^{5/2+\eps} + \frac{1}{L^3} \int dx dy dz dw \, V(x-y) f_\ell^2 (x-y) \nu (x-y) \nu (x-z) \nu (y-w) \langle b_z b_w \Psi , \Psi \rangle \end{split} \]
if $\eps , \delta > 0$ are small enough. Next we write $b_z = c (\gamma_z) - c^* (\sigma_z)$, $b_w = c(\gamma_w) - c^* (\sigma_w)$. The contribution of all normal ordered terms can bounded analogously as we did with the last term on the r.h.s. of (\ref{eq:IVa}) (again, after estimating $|\nu (x-y)| \lesssim \rho$). We conclude that 
\[  \begin{split}  \frac{1}{L^3} \text{Re } &\int dx dy \, V(x-y) \langle \phi_2 (x;y) , \phi_5 (x;y) \rangle \\ \leq\;& C \rho^{5/2+\eps} + \frac{1}{L^3} \int dx dy dz dw \, V(x-y) f_\ell^2 (x-y) \nu (x-y) \nu (x-z) \nu (y-w) \langle \gamma_z ; \sigma_w \rangle \\  \leq\;& C \rho^{5/2+\eps} + \int dx \, V(x) f_\ell^2 (x) \nu (x) (\nu * \gamma *\sigma *\nu) (x) \\ \leq\;& C \rho^{5/2+\eps} + \int dx \, V(x) f_\ell^2 (x) \nu (x) (\nu * \sigma * \sigma) (x) \end{split} \]
if $\eps , \delta > 0$ are small enough. As argued in the proof of Lemma \ref{lm:phi2}, we have 
\[ |\nu (x)| = |(\gamma^{-1} * \sigma ) (x) | \leq \| (\gamma^{-1} - 1)*\sigma \|_\infty + |\sigma (x)| \lesssim \rho^{3/2} \] 
on the support of $V$. Since $\nu * \sigma * \sigma = \gamma * \sigma - \nu$, we similarly find 
\[ | (\nu * \sigma * \sigma) (x)| \lesssim \rho^{3/2} \]
on the support of $V$. We conclude that 
\[ \frac{1}{L^3} \text{Re} \int dx dy \, V(x-y) \langle \phi_2 (x;y) , \phi_5 (x;y) \rangle \leq C \rho^{5/2+\eps} \]
if $\eps , \delta > 0$ are small enough. 
\end{proof} 

\begin{lemma} We have 
\[ \frac{1}{L^3} \text{Re } \int dx dy \, V(x-y) \langle \phi_3 (x;y) , \phi_4 (x;y) \rangle \leq \rho_0 \int dx \, V(x) f_\ell^2 (x) (\sigma* \sigma )( x) + C \rho^{5/2+\eps} \] if $\eps , \delta > 0$ are small enough. 
\end{lemma}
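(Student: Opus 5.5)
Writing out the cross term from the definitions of $\phi_3,\phi_4$ in (\ref{eq:axayPsi}), I get
\[ \begin{split} \int dx dy \, V(x-y) \langle \phi_3 (x;y), \phi_4 (x;y) \rangle = \; &\rho_0 \int dx dy dz dz' \, V(x-y) f_\ell^2 (x-y) \nu (y-z) \nu (x-z') \\ &\times F(x,y,z,z') \, \langle J(x)J(y)J(z) \Psi, a_z a_{z'}^* J(x)J(y)J(z') \Psi \rangle \end{split} \]
where $F$ collects the factors $f_\ell (x-z) f_\ell(y-z) f_\ell(x-z') f_\ell(y-z')$. I then commute $a_z a_{z'}^* = \delta (z-z') + a_{z'}^* a_z$ and treat the two resulting terms exactly as $\text{A}_1$ and $\text{A}_2$ in the proof of Lemma \ref{lm:phi3}. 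The only difference is that the two $\nu$-factors are now centered at $y$ and at $x$ rather than both at $y$.

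\textbf{Step 1 (the $\delta$-term).} This gives
\[ \rho_0 \int dx dy dz \, V(x-y) f_\ell^2(x-y) \nu (y-z) \nu (x-z) f_\ell^2(x-z) f_\ell^2 (y-z) \| J(x) J(y) J(z) \Psi \|^2 . \]
I replace $\| J(x)J(y)J(z)\Psi\|^2$ by $1$ and drop the $f_\ell^2$ factors. The errors are bounded by $1-J^2 \le \sum u_\ell$ as in (\ref{eq:1-J}), together with $\|\nu\|_\infty \lesssim \rho/\ell$, $\|u_\ell\|_1 \lesssim \ell^2$, $\|Vf_\ell^2\|_1 \lesssim 1$ and Lemma \ref{lm:a-bds} with $k=1$. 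This yields errors of order $\rho^{3-C\eps-C\delta}$, plus a term $\rho\|\nu\|_\infty^2\|u_\ell\|_1 \lesssim \rho^3$ from removing the $f_\ell$'s. What remains is $\rho_0 \int V f_\ell^2 (\nu*\nu)(x)\, dx$, since $\nu$ is even.

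\textbf{Step 2 (the normal-ordered term).} I remove the $J$-operators and the $f_\ell$-factors by the same Cauchy–Schwarz arguments as for $\text{A}_2$. First I restrict to $|y-z|,|x-z'|\le R$ using (\ref{eq:decay}), then bound $1-\prod J \le \sum u_\ell$. The terms with $u_\ell(z-w)$ are controlled by Lemma \ref{lm:aaaaC}, and the others by Lemma \ref{lm:a-bds} together with $\|u_\ell\|_1$. This leaves
\[ \rho_0 \int V(x-y) f_\ell^2(x-y) \nu(y-z) \nu(x-z') \langle a_{z'} \Psi , a_z \Psi \rangle . \]
Because $\int \nu = 0$, I may replace $a$ by $b$. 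Writing $b = c(\gamma)-c^*(\sigma)$, the contraction $\langle \sigma_{z'},\sigma_z\rangle$ produces $\rho_0 \int V f_\ell^2 (\nu*\nu*\sigma*\sigma)(x)$. The normal-ordered $c$-terms are bounded by Lemma \ref{lm:c's}, exactly as in the proof of Lemma \ref{lm:phi3}, giving $\rho^{4-C\eps}$ errors.

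\textbf{Step 3 (combining).} Adding the two main terms and using $\nu=\gamma^{-1}*\sigma$, I get
\[ \nu*\nu + \nu*\nu*\sigma*\sigma = \nu*\nu*\gamma*\gamma = \sigma*\sigma . \]
Hence the real part of the cross term is at most $\rho_0\int V f_\ell^2 (\sigma*\sigma)\,dx + C\rho^{5/2+\eps}$.

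\textbf{Main obstacle.} The delicate step is Step 2. The point $x$ sits at distance $\le r_0$ from $y$, and both $\nu$-tails must be localized at once to invoke Lemma \ref{lm:aaaaC} for the $u_\ell(z-z')$ term, which requires $|z-z'|\le 2R+r_0$. I expect this to go through, since $V$ has compact support and $R\gg r_0$, so the localizations on $|y-z|\le R$ and $|x-z'|\le R$ combine directly.
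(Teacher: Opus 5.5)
Your proposal is correct and follows the paper's proof essentially step for step. The shared steps are: the split $a_z a_{z'}^* = \delta(z-z') + a_{z'}^* a_z$ into the terms $\text{D}_1$ and $\text{D}_2$; removal of the $J$-operators and $f_\ell$-factors; replacement of $a$ by $b$ using $\int \nu = 0$; the contraction $\langle \sigma_{z'}, \sigma_z \rangle$; and the identity $\nu*\nu*\gamma*\gamma = \sigma*\sigma$. One small correction to your last paragraph: the double localization $|y-z|, |x-z'| \leq R$ is needed for the $J$-removal term where $1-J(z')$ produces $\|a_s a_z \Psi\|^2$ with $s$ close to $z'$, and for the pair term $c^*(\gamma_{z'})c^*(\sigma_z)$; these are handled by Lemma~\ref{lm:a-bds} with $k=2$ and by (\ref{eq:cccc-gh}) at scale $CR$, whereas Lemma~\ref{lm:aaaaC} enters only for $u_\ell(z-s)$, where localization at scale $\ell$ is automatic.
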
 
\begin{proof} 
From the definition of $\phi_3, \phi_4$, we find 
\[ \begin{split}  \int \; &dx dy \, V(x-y) \langle \phi_3 (x;y) , \phi_4 (x;y) \rangle \\ =\;& \rho_0 \int dx dy dz dw \, V(x-y) f_\ell^2 (x-y) \nu (y-z) \nu (x-w) \\ &\hspace{1cm} \times  f_\ell(x-z) f_\ell(y-z) f_\ell(x-w) f_\ell(y-w) \langle J(x) J(y) J(z) \Psi , a_z a_w^* J(x) J(y) J(w) \Psi \rangle \,. \end{split} \]
With $a_z a_w^* = a_w^* a_z + \delta (z-w)$, we obtain 
\[  \int dx dy \, V(x-y) \langle \phi_3 (x;y) , \phi_4 (x;y) \rangle = \text{D}_1 + \text{D}_2 \]
with 
\[ \begin{split} \text{D}_1 =\;& \rho_0 \int dx dy dz \, V(x-y) f_\ell^2 (x-y) \nu (x-z) \nu (y-z) f_\ell^2(x-z) f_\ell^2(y-z)\\
&\hspace{2cm}\times  \langle \Psi, J(x)^2 J(y)^2 J(z)^2 \Psi \rangle  \\ 
\text{D}_2 =\;& \rho_0 \int dx dy dz dw \, V(x-y) f_\ell^2 (x-y) \nu (x-w) \nu (y-z) f_\ell^2 (x-z) f_\ell^2 (x-w) f_\ell^2 (y-z) \\
&\hspace{2.6cm} \times f_\ell^2 (y-w) f_\ell^2 (z-w) \langle a_w \Psi, J(x)^2 J(y)^2 J(z) J(w) a_z \Psi \rangle . \end{split} \]
To estimate $\text{D}_1$, we observe, with (\ref{eq:zeta-combi1}) and Lemma \ref{lm:a-bds}, that 
\[ \begin{split} 
\Big| \rho_0 &\int dx dy dz \, V(x-y) f_\ell^2 (x-y) \nu (x-z) \nu (y-z) \\
&\hspace{1cm}\times f_\ell^2(x-z) f_\ell^2(y-z)  \langle \Psi, \big( 1 - J^2 (x) J^2 (y) J^2 (z) \big) \Psi \rangle \Big| \\ \leq \; &
\rho \int dx dy dz ds \, V(x-y)f_\ell^2(x-y) |\nu (x-z)|^2 \big (u_\ell (x-s) + u_\ell (y-s) + u_\ell (z-s) \big) \| a_s \Psi \|^2 \\ \lesssim \; &\rho \| \nu \|_2^2 \| u_\ell \|_1 \|Vf_\ell^2\|_1\int ds \, \| a_s \Psi \|^2 \lesssim \rho^{7/2-3\eps-2\delta} L^3 \lesssim \rho^{5/2+\eps} L^3 \end{split} \]
if $\eps , \delta > 0$ are small enough. Moreover,
\begin{equation*}
    \begin{split}
        \bigg|\rho_0\int dxdydz\,& V(x-y) f_\ell^2(x-y) \nu(x-z)\nu(y-z)\big[ 1-f_\ell^2(x-z) f_\ell^2(y-z) \big]\bigg|\\
        \lesssim\;& \rho_0 \|\nu\|_\infty^2 \int dxdydz\,V(x-y) f_\ell^2(x-y) \big[ u_\ell (x-z) + u_\ell (y-z)\big] \lesssim \rho^{5/2+\varepsilon} L^3.
    \end{split}
\end{equation*}
Hence
\begin{equation} \label{eq:D1-bd} \text{D}_1 \leq \rho_0 L^3 \int dx \, V(x) f_\ell^2 (x) (\nu * \nu) (x) + C \rho^{5/2+\eps} L^3 \, .\end{equation} 
As for $\text{D}_2$, we notice that 
 \[  \begin{split} \Big| \rho_0 \int &dx dy dz dw \, V(x-y) f_\ell^2 (x-y) \nu (x-w) \nu (y-z) f_\ell^2 (x-z) f_\ell^2 (x-w) f_\ell^2 (y-z) \\ &\hspace{.5cm} \times f_\ell^2 (y-w) f_\ell^2 (z-w) \langle a_w \Psi, \big( 1 - J(x)^2 J(y)^2 J(z) J(w) \big) a_z \Psi \rangle \Big| \\ \leq \; &C \rho^{5/2+\eps} L^3 + C \rho \int_{|y-z| \leq R}  dx dy dz dw ds \, V(x-y)f_\ell^2(x-y) |\nu (x-w)|^2  \\ &\hspace{3cm} \times \big( u_\ell (x-s) + u_\ell (y-s) + u_\ell (z-s) + u_\ell (w-s) \big) \| a_s a_z \Psi \|^2 
 \\ \leq \; &C \rho^{5/2+\eps} L^3 
  \end{split} \]
  and that 
  \[ \begin{split} \Big| \rho_0 \int &dx dy dz dw \, V(x-y) f_\ell^2 (x-y) \nu (x-w) \nu (y-z)  \\ &\hspace{.5cm} \times \big( 1 - f_\ell^2 (x-z) f_\ell^2 (x-w) f_\ell^2 (y-z) f_\ell^2 (y-w) f_\ell^2 (z-w) \big)  \langle a_w \Psi, a_z \Psi \rangle \Big| \\ \leq \; &C \rho^{5/2+\eps} L^3 + C \rho \int  dx dy dz dw \, V(x-y) f_\ell^2(x-y)|\nu (x-w)| |\nu (y-z)|   \\ &\hspace{.5cm} \times \big( u_\ell (x-z) + u_\ell (x-w) + u_\ell (y-z) + u_\ell (y-w) + u_\ell (z-w)  \big) \| a_z \Psi \|^2 
 \\ \leq \; &C \rho^{5/2+\eps} L^3 
  \end{split} \]
if $\eps , \delta > 0$ are small enough. Hence
\[ \begin{split}  \text{D}_2 \leq \; &\rho_0 \int dx dy dz dw \, V(x-y) f_\ell^2 (x-y) \nu (x-w) \nu (y-z) \langle b_w \Psi, b_z \Psi \rangle + C \rho^{5/2+\eps} \\ 
\leq \; &\rho_0 \int dx dy dz dw \, V(x-y) f_\ell^2 (x-y) \nu (x-w) \nu (y-z) \langle \sigma_w ; \sigma_z \rangle \\ &+ 
\rho_0 \int dx dy dz dw \, V(x-y) f_\ell^2 (x-y) \nu (x-w) \nu (y-z) \\ &\hspace{1cm} \times \langle \Psi, \big( c^* (\gamma_w) c (\gamma_z) + c^* (\sigma_z) c (\sigma_w) + c^* (\gamma_w) c^* (\sigma_z) + c (\sigma_w) c (\gamma_z) \big) \Psi \rangle \\ &+C \rho^{5/2+\eps} .
\end{split} \] 
With (\ref{eq:zeta-combi1}) and Lemma \ref{lm:c's}, we find 
\[ \rho \int dx dy dz dw \, V(x-y)f_\ell^2(x-y) |\nu (x-w)| |\nu (y-z)| \big[ \| c (\gamma_z) \Psi \|^2 + \| c (\sigma_z) \Psi \|^2 \big] \leq C \rho^{5/2+\eps} L^3 \]
and 
\[ \begin{split} \rho \int dx dy dz &dw \, V(x-y) f_\ell^2(x-y) |\nu (x-w)| |\nu (y-z)|  \| c (\gamma_w) c (\sigma_z) \Psi \| \\ \leq \; &\rho \Big( \int dx dy dz dw \, V(x-y)f_\ell^2(x-y) |\nu (x-w)| \|c (\gamma_w) c (\sigma_z) \Psi \|^2 \Big)^{1/2} \\ &\hspace{2cm} \times  \Big(  \int dx dy dz dw \, V(x-y) f_\ell^2(x-y)|\nu (x-w)|  |\nu (y-z)|^2 \Big)^{1/2} \\ \leq \; &C \rho^{5/2+\eps} L^3 \end{split} \]
if $\eps, \delta > 0$ are small enough. Therefore
\[ \begin{split}  \text{D}_2 \leq \; &\rho_0 \int dx\, V(x) f_\ell^2 (x)  (\nu * \nu * \sigma * \sigma) (x) + C \rho^{5/2+\eps} L^3. \end{split} \] 
With (\ref{eq:D1-bd}) and since $1 + \sigma*\sigma = \gamma * \gamma$, we conclude that 
\[\frac{1}{L^3}  \int dx dy \, V(x-y) \langle \phi_3 (x,y) \Psi ; \phi_4 (x,y) \Psi \leq \rho_0  \int dx \, V(x) f_\ell^2 (x) (\sigma * \sigma) (x) + C \rho^{5/2+\eps}.\] 
\end{proof} 

\begin{lemma}\label{lm:phi3phi5} For $j=3,4$, we have 
\[ \frac{1}{L^3} \text{Re } \int dx dy \, V(x-y) \langle \phi_j (x;y) , \phi_5 (x;y) \rangle \leq C \rho^{5/2+\eps/2} \]
if $\eps , \delta > 0$ are small enough. 
\end{lemma}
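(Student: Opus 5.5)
We treat only $j=3$; the case $j=4$ follows by exchanging the roles of $x$ and $y$. Inserting the definitions of $\phi_3$ and $\phi_5$ from (\ref{eq:axayPsi}), we write
\[ \int dx dy \, V(x-y) \langle \phi_3 (x;y) , \phi_5 (x;y) \rangle = \sqrt{\rho_0} \int dx dy dz dw dw' \, V(x-y) f_\ell^2 (x-y) \nu (y-z) \nu(x-w) \nu (y-w') \, F \, \langle a_z^* J_1 \Psi , a_w^* a_{w'}^* J_2 \Psi \rangle , \]
where $F$ is the product of the $f_\ell$-factors and $J_1, J_2$ are the corresponding products of $J$-operators. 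Using the canonical commutation relations, $a_z a_w^* a_{w'}^* = \delta(z-w) a_{w'}^* + \delta (z-w') a_w^* + a_w^* a_{w'}^* a_z$. This produces two terms of the form $\sqrt{\rho_0} \int \ldots \langle J\Psi, a_{w'}^* J \Psi\rangle$, containing a single creation operator, and one term of the form $\sqrt{\rho_0}\int \ldots \langle a_w a_{w'} J \Psi , J a_z \Psi\rangle$, containing three field operators.

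For each term we proceed as in Lemmas \ref{lm:phi1phi3} and \ref{lm:phi1phi5}. First we use (\ref{eq:decay}) to restrict all $\nu$-arguments to distances $\leq R$. Next we remove the $J$-operators by bounding $1-\prod J \leq \sum d\Gamma(u_\ell)$ and applying Cauchy--Schwarz, controlling the resulting local two- and three-particle densities with Lemma \ref{lm:a-bds} (\ref{eq:kas}) and Lemma \ref{lm:aaaaC} (\ref{eq:aaaCR}). We then remove the $f_\ell$-factors by bounding $1-\prod f_\ell^2 \leq \sum u_\ell$ and using $\|u_\ell\|_1 \lesssim \ell^2$ together with $\|\nu\|_\infty \lesssim \rho/\ell$. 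After these reductions, the linear terms become
\[ \sqrt{\rho_0}\int V f_\ell^2\, \nu(y-z)\nu(x-z)\nu(y-w') \langle \Psi, a_{w'}^*\Psi\rangle . \]
Since $\int \nu =0$, we may replace $a_{w'}^*$ by $b_{w'}^*$. Bounding $\int \|b_{w'}\Psi\| \leq L^{3/2}(\int \|b\Psi\|^2)^{1/2} \lesssim \rho^{3/4}L^3$ gives a total of order $\rho^{1/2}\|\nu\|_\infty \|\nu\|_1^2\rho^{3/4}$. This needs a little care, since $\|\nu\|_\infty \lesssim \rho^{1-\delta}$ is not obviously small enough. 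We would instead pair the two $\nu$'s through $(\nu*\nu)$, using $\|\nu*\nu\|_\infty\le\|\nu\|_2^2\lesssim\rho^{3/2}$, which gives $\rho^{1/2+3/2+3/4}$ and is fine.

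The cubic term is the main obstacle. After removing the $J$'s and $f_\ell$'s we get
\[ \sqrt{\rho_0}\int V f_\ell^2\, \nu\nu\nu\, \langle a_w a_{w'}\Psi, a_z\Psi\rangle . \]
We replace $a$ by $b = \sqrt{\rho_0}+\ldots$ where possible; terms in which a $\sqrt{\rho_0}$ multiplies an integrated $\nu$ vanish, because $\int\nu=0$. This leaves
\[ \sqrt{\rho_0}\int V f_\ell^2\,\nu\nu\nu\,\langle b_wb_{w'}\Psi,b_z\Psi\rangle . \]
We write $b=c(\gamma)-c^*(\sigma)$ and contract. The fully contracted part vanishes, since it has an odd number of operators. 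The single-contraction terms are of the form $\langle \sigma,\gamma\rangle \times \langle c\rangle$ and are bounded via Lemma \ref{lm:c's}, using one factor $\|c\Psi\|$. The remaining cubic normal-ordered terms are bounded by Cauchy--Schwarz, splitting as
\[ \big(\textstyle\int \|c c\Psi\|^2\big)^{1/2}\big(\int\|c\Psi\|^2\big)^{1/2}, \]
and using (\ref{eq:cccc-gh}) and (\ref{eq:cc-gh}). The powers come to roughly $\rho^{1/2}\cdot\rho^{(2-42\eps)/2}\cdot\rho^{(2-22\eps)/2}$ times $\nu$-norms of size $\|\nu\|_1^2\|\nu\|_2$, which is bounded by $\rho^{5/2+\eps/2}$ for small $\eps,\delta$. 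This bookkeeping of exponents is where the weaker $\eps/2$ in the statement comes from. We expect the delicate point to be obtaining enough decay without an $L^\infty$ bound on $\nabla Z$. If the direct Cauchy--Schwarz falls short, we would instead keep one $\nu$ in $L^2$ against the $V f_\ell^2$ integration and use (\ref{eq:kbs}) with $k=3$.
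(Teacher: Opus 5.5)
Your route is genuinely different from the paper's, and much heavier. The paper does not expand $\langle \phi_j,\phi_5\rangle$ at all. Since $V\geq 0$, Cauchy--Schwarz gives
\[ \frac{1}{L^3}\Big|\int dx dy\, V(x-y)\langle \phi_j(x;y),\phi_5(x;y)\rangle\Big| \leq \Big(\frac{1}{L^3}\int dx dy\, V(x-y)\|\phi_j(x;y)\|^2\Big)^{1/2}\Big(\frac{1}{L^3}\int dx dy\, V(x-y)\|\phi_5(x;y)\|^2\Big)^{1/2}. \]
Lemma \ref{lm:phi3} bounds the first factor by $C\rho^{5/2}$, since its main term $\rho_0\|\sigma\|_2^2\int Vf_\ell^2$ is of that order. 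Lemma \ref{lm:phi5} bounds the second factor by $C\rho^{5/2+\eps}$.

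The product $\rho^{5/4}\cdot\rho^{5/4+\eps/2}$ is exactly where the $\eps/2$ in the statement comes from. It is not produced by the exponent bookkeeping of your cubic term, which by your own count is of order $\rho^{13/4-C\eps}$.

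What your approach buys is the information that this cross term is genuinely of lower order, roughly $\rho^{11/4-C\eps}$ once the error terms are included. That would improve the error in Proposition \ref{prop:pot} to $\rho^{5/2+\eps}$. This gain is irrelevant, however, because Proposition \ref{prop:Psi-energy} only needs $\rho^{5/2+\delta}$ with $\delta<\eps/2$. What it costs is redoing the whole machinery of Lemmas \ref{lm:phi1phi3} and \ref{lm:phi1phi5} for a term with one more field operator, when the two quantities you need have just been estimated.

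If you nevertheless follow your route, the removal of the $J$-operators from the cubic term is not literally ``as in Lemma \ref{lm:phi1phi5}''. There the right-hand vector is $\Psi$; here it is $a_z\Psi$. So both sides of the Cauchy--Schwarz inequality involve at least two-particle densities, for which only the integrated local bounds (\ref{eq:kas}) are available. The $\nu$-weights can then only be used through $\|\nu\|_\infty\|u_\ell\|_1\lesssim\rho$ rather than $\|\nu\|_1$.

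For instance, with the symmetric weights $|\nu(y-z)||\nu(x-w)||\nu(y-w')|$ on both sides, the factor $J(x)^2$ is only bounded (per unit volume) by $\rho^{1/2}(\rho^{2-30\eps}\cdot\rho^{3/2-21\eps-\delta})^{1/2}\approx\rho^{9/4-(51\eps+\delta)/2}$, which is not enough. You have to telescope $1-\prod_t J_t=\sum_t\big(\prod_{t'<t}J_{t'}\big)(1-J_t)$ and choose the weights factor by factor. For example:
\begin{itemize}
\item for $J(x)$ and $J(y)$, put all of $|\nu(x-w)|^2|\nu(y-w')|^2$ on the $a_z\Psi$ side;
\item for $J(z)$, $J(w)$ and $J(w')$, use symmetric weights together with (\ref{eq:aaaaC}) and (\ref{eq:aaaCR}).
\end{itemize}

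Two minor points. First, $\|\nu\|_\infty\lesssim\rho/\ell$, i.e.\ of order $\rho^{1+\delta}$, not $\rho^{1-\delta}$. Second, your remark about $\nabla Z$ has nothing to do with this lemma: $Z$ only enters the kinetic-energy error $\mathcal{E}_1$.
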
 
\begin{proof} 
From Lemma \ref{lm:phi3}, we have 
\[ \frac{1}{L^3} \int dx dy \, V(x-y) \| \phi_j (x;y) \|^2 \leq C \rho^{5/2} \]
for $j=3,4$. From Lemma \ref{lm:phi5}, we also find 
\[ \frac{1}{L^3} \int dx dy \, V(x-y) \| \phi_5 (x;y) \|^2 \leq C \rho^{5/2+\eps} \]
for $\eps, \delta > 0$ small enough. By Cauchy-Schwarz, we conclude that 
\[ \frac{1}{L^3} \Big| \int dx dy \, V(x-y) \langle \phi_j (x,y) ; \phi_5 (x,y) \rangle \Big| \leq C \rho^{5/2+\eps/2}  \]
for $\eps , \delta > 0$ small enough. 
\end{proof} 

Combining Lemmas \ref{lm:phi1} - \ref{lm:phi3phi5}, we obtain the desired upper bound on the potential energy in the trial state $\Psi$.
\begin{proof}[Proof of Prop. \ref{prop:pot}] 
Collecting all estimates from Lemmas \ref{lm:phi1} - \ref{lm:phi3phi5}, we arrive at 
\[ \begin{split} 
\frac{1}{2L^3} \int &dx dy \, V(x-y) \| a_x a_y \Psi \|^2 \\ \leq \; &\frac{\rho_0^2}{2} \int dx \, V(x) f_\ell^2 (x) + \rho_0 \| \sigma \|_2^2 \int dx \, V(x) f_\ell^2 (x) \\ &+ \rho_0 \int dx \, V(x) f_\ell^2 (x) \nu (x) + \rho_0 \int dx V(x) f_\ell^2 (x) (\nu * \sigma * \sigma) (x) \\ &+ \rho_0 \int dx V(x) f_\ell^2 (x) (\sigma * \sigma) (x) + C \rho^{5/2+ \eps/2} .
\end{split} \]
With $\nu + \nu * \sigma * \sigma = \sigma * \gamma$, we obtain 
\[ \begin{split} 
\frac{1}{2L^3} \int &dx dy \, V(x-y) \| a_x a_y \Psi \|^2 \\ \leq \; &\frac{\rho_0^2}{2} \int dx \, V(x) f_\ell^2 (x) + \rho_0 \| \sigma \|_2^2 \int dx \, V(x) f_\ell^2 (x) \\ &+ \rho_0 \int dx \, V(x) f_\ell^2 (x)  \big( (\sigma * \sigma) (x) + (\sigma * \gamma) (x) \big) + C \rho^{5/2+ \eps/2} .
\end{split} \]
In the last term, we decompose $\sigma * \gamma = \sigma * (\gamma-1) + \sigma$ and we observe that, on the support of $V$, 
\[ |\sigma (x)| \leq \ell_0^{-3} |\ph (x/\ell_0)| \Big| \int \tilde{\sigma} (z) dz \Big| \lesssim \rho^{3/2+3\eps/2}. \]
Hence 
\[ \begin{split} 
\frac{1}{2L^3} \int &dx dy \, V(x-y) \| a_x a_y \Psi \|^2 \\ \leq \; &\frac{\rho_0^2}{2} \int dx \, V(x) f_\ell^2 (x) + \rho_0 \| \sigma \|_2^2 \int dx \, V(x) f_\ell^2 (x) \\ &+ \rho_0 \int dx \, V(x) f_\ell^2 (x)  \big( (\sigma * \sigma) (x) + (\sigma * (\gamma-1)) (x) \big) + C \rho^{5/2+ \eps/2} .
\end{split} \]

%
\end{proof}

\appendix

\section{Properties of the kernels: proof of Lemma \ref{lm:eta}}
\label{app:kernels}

We proceed similarly as in the proof of \cite[Lemma 2.2]{BBCOS}. The main difference is that now we can only rely on the estimates $\| \nabla^m \widehat{V}_\text{eff} \|_\infty \leq C_m$, for every $m \in \bN$, from Lemma \ref{Lem:general_potential_eff} (in \cite{BBCOS}, thanks to the explicit form of the effective potential, we could use the stronger bound $|\nabla^m \widehat{V}_\text{eff} (k)| \lesssim |k|^{-1}$, for $|k| \geq 1$). 

Following the strategy of \cite{BBCOS}, in the first part of the proof we bound the Fourier coefficients $\widehat{s}_k$ and their (discrete) derivatives $\delta_j^m \widehat{s}_k$. We find (proceeding as in the proof of \cite[Lemma 2.2]{BBCOS}, recalling that $\widehat{V}_\text{eff} (0) = 8 \pi \frak{a} > 0$ and therefore that $\widehat{V}_\text{eff} (k) \geq c > 0$, for $k$ small enough) 
\begin{align}
\label{eq:sfourier_I}
    |\delta_j^m \widehat{s}_k| \leq C_m \, |k|^{-m} \min \Big\{  \frac{\rho^{1/4}}{|k|^{1/2}} , \frac{\rho}{k^2} \Big\} 
\end{align}
for all $m\in \mathbb N$ and $k\in \Lambda^*_+$ with $|k|\leq 1$. For $|k| \geq 1$, we get $|\delta_j^m \widehat{s}_k| \leq C_m \rho / k^2$ for all $m \in \bN$ (this should be compared with \cite[(A.8)]{BBCOS}, where we could prove the stronger estimate $|\delta_j^m \widehat{s}_k| \leq C_m \rho / |k|^3$, using the decay of $\widehat{V}_\text{eff}$). For $|k| \geq 1$, it is also important to derive the improved estimate 
\begin{align}
\label{eq:sfourier_II}
   \left| \delta_j^m\left[ \widehat{s}_k+\frac{\rho_0 \widehat{V}_\mathrm{eff}(k)}{2|k|^2}-\frac{\rho_0^2 \widehat{V}_\mathrm{eff}(k)^2}{2|k|^4}+\frac{11 \rho_0^3 \widehat{V}_\mathrm{eff}(k)^3}{16|k|^6} \right]\right| & \leq C_m \frac{\rho^4}{|k|^8} 
\end{align}
for all $m\in \bN$. Compared with \cite[(A.9)]{BBCOS}, here it is important to expand $\widehat{s}_k$ to higher order, to compensate the lack of decay of $\widehat{V}_\text{eff}$. 

Next, we use the bounds for the Fourier coefficients $\widehat{s}_k$ to estimate $s$ in position space. In particular, we show that 
\begin{equation}\label{eq:s-point} | s (x)| \leq C \frac{\rho}{|x|} \min \Big\{ 1 , \frac{1}{(\rho^{1/2} |x|)^{3/2}} \Big\} \,, \qquad |\nabla s (x)| \leq C \frac{\rho |\hspace{-.05cm} \log \rho |}{x^2} \end{equation} 
for all $|x| \geq  15 r_0$, and that, for $r\geq 15r_0$
\begin{equation}\label{eq:int-s}\begin{split} 
    \int_{|x|\geq r}  |\nabla s(x)|^2dx  &\lesssim \min\left\{r^{-4}\rho^{\frac{1}{2}},r^{-1}\rho^2\right\},\\
       \int_{|x|\geq r}   \left|\Delta s (x) \right|^2dx& \lesssim   r^{-3}\rho^2,\\
     \int_{|x|\geq r}  \left|\nabla \Delta s (x) \right|^2dx& \lesssim   r^{-5}\rho^2.
     \end{split} 
\end{equation}

To show (\ref{eq:s-point}) and (\ref{eq:int-s}), we decompose $s$ into a low- and a high-momentum component. To this end, we choose $\chi \in C_c^\infty (\bR^3)$ with $\chi (z) = 1$ for $|z| \leq 2$ and $\chi (z) = 0$, for $|z| \geq 4$. Moreover, for $t > 0$, we set $\chi_t (z) = \chi (z/t)$ and $s_t = \widecheck{\chi}_t * s$. With this notation, we write $s = s_1 + (s-s_1)$. The contribution of the low-momentum part $s_1$ to (\ref{eq:s-point}), (\ref{eq:int-s}) can be controlled exactly as in \cite[Lemma 2.2]{BBCOS} (at low-momenta, the analysis relies on (\ref{eq:sfourier_I})); we skip the details. To estimate the high momentum component $s-s_1$, on the other hand, we have to modify the approach of \cite{BBCOS}. We introduce $h = - \rho_0 \chi_{r_0} / 8\pi |.| \in L^2 (\bR^3)$ and we define $D' \in L^2 (\bR^3)$ through its Fourier transform 
\begin{align*}
    \widehat{D}'(k):=\widehat{V}_\mathrm{eff}(k)\widehat{h}(k).
\end{align*}
Recalling that $V_\mathrm{eff}$ is supported in $B_{r_0}(0)$, 
we find, for (almost every) $|x|>r_0$, 
\begin{align}
\label{Eq:explicit_formula_D(x)}
    D'(x) = -\frac{\rho_0}{8\pi}\int  \frac{\chi_{r_0}(x-y)}{|x-y|}V_\mathrm{eff}(y)dy.
\end{align}
In Fourier space, at high momenta, $D'$ is an approximation for the factor $- \rho_0 \widehat{V}_\text{eff} (k) / 2k^2$ appearing in the expansion of $\widehat{s}_k$ (see (\ref{eq:sfourier_II})). More precisely, we find 
\begin{align*}
    \widehat{D}'_k +\frac{\rho_0 \widehat{V}_\mathrm{eff}(k)}{2|k|^2}=\lim_{R\rightarrow \infty} \frac{\rho_0 \widehat{V}_\mathrm{eff}(k)}{8\pi}\int \frac{\chi_R(x)-\chi_{r_0}(x)}{|x|} \, e^{i k \cdot x}dx.
\end{align*}
From the smoothness of $(\chi_R(x) - \chi_{r_0}(x))/|x|$, in particular from the estimate \[ \int \left|\nabla^\alpha \left(\frac{\chi_R(x)-\chi_{r_0}(x)}{|x|}\right)\right| dx\leq C \] holding for all $\alpha  \geq 3$, uniformly in $R>0$, we obtain, for any $\alpha,m\in \mathbb N$, 
\begin{equation}\label{eq:D'+}
\left|\delta^m_j \left[\widehat{D}'_k +\frac{\rho_0 \widehat{V}_\mathrm{eff}(k)}{2|k|^2}\right]\right|
\lesssim \rho |k|^{-\alpha}
\end{equation}
for every $|k| \geq 1$. From \eqref{eq:sfourier_II}, we conclude that 
\[ \Big| \delta_j^m \Big[ \widehat{s}_k - \widehat{D}'_k - 2 \widehat{D}^{'2}_k - \frac{11}{2} \widehat{D}^{'3}_k \Big] \Big| \lesssim \frac{\rho}{|k|^8} \,. \]
To approximate the high momentum part $s-s_1$, we set $D := D' - \widecheck{\chi}_1 * D'$ (in momentum space $\widehat{D}_k = (1- \chi_1 (k)) \widehat{D}'_k$). We define the error term $E$, requiring that 
\begin{align}
\label{Eq:decomposition_s_high_momenta}
   s-s_1 = D + 2 D*D + \frac{11}{2}D*D*D + E \,.
\end{align}
Observing that $\widehat{E}_k = \widehat{s}_k - \widehat{D}'_k - 2 \widehat{D}^{'2}_k - 11 \widehat{D}^{'3}_k /2$ for $|k| \geq 4$ (because $\widehat{D}_k = \widehat{D}'_k$, for $|k| \geq 4$) and using (\ref{eq:D'+}) and $|\delta_j^m \widehat{s}_k| \lesssim \rho / k^2$ for $1 \leq |k| \leq 4$, we conclude that  
\begin{equation}\label{eq:deltaE} |\delta_j^m \widehat{E}_k| \lesssim \frac{\rho}{|k|^8} \end{equation}
for all $|k| \geq 1$. Furthermore, $\widehat{E}_k = 0$, for $|k| \leq 1$. With the estimate 
 \begin{equation*}
        |x_j^m f(x)|\le \Big(\frac{\pi}{2}\Big)^m\Big| \Big(\frac{L}{2\pi}\Big)^m\big(e^{-i\frac{2\pi}{L}x_j}-1\big)^m f(x)\Big|=\Big(\frac{\pi}{2}\Big)^m  \big| (\widecheck{\delta^m_j \widehat{f}} ) (x)\big|\leq \Big(\frac{\pi}{2}\Big)^m \left\|\delta^m_j \widehat{f} \right\|_1 
    \end{equation*}
we can translate (\ref{eq:deltaE}) to position space. We find the pointwise bounds
\[ |E(x)|, |\nabla E(x)| , |\Delta E (x)| , |\nabla \Delta E (x)| \lesssim \frac{\rho}{|x|^m}\]
for any $m \geq 1$ (the bound for $\nabla \Delta E$ requires $k \to \delta_j^m |k|^3 \widehat{E} (k)$ to be summable; that's why we had to expand $s$ to third order). To estimate $s-s_1$, we still have to control the contributions $D, D*D, D*D*D$ appearing on the r.h.s. of  \eqref{Eq:decomposition_s_high_momenta}. Since $V_\mathrm{eff}$ is supported in $B_{r_0}(0)$ and since the function $\chi_{r_0}$ appearing in \eqref{Eq:explicit_formula_D(x)} is supported in $B_{4r_0}(0)$, we conclude that $D'$ is supported in $B_{5r_0}(0)$. Hence, for $|x| \geq 5r_0$ and $q\in \mathbb N$, we find that
    \begin{align*}
       | \nabla^{q} D(x) | = | -\left(\nabla^q \widecheck{\chi}_{1}\right)*D' | = \Big| \frac{\rho_0}{8\pi}\int  \frac{\chi_{r_0}(x-y)}{|x-y|}(\nabla^q \widecheck{\chi}_{1} * V_\mathrm{eff})(y) dy \Big| \lesssim \frac{\rho}{|x|^m}
    \end{align*}
for any $m \in \bN$. Similarly, we can estimate $\nabla^q (D*D)$ and $\nabla^q (D*D*D)$, for $|x|\geq 10r_0$ and, respectively, for $|x|\geq 15r_0$. By \eqref{Eq:decomposition_s_high_momenta}, in combination with the pointwise bounds for $E$ and its derivatives, we therefore obtain, for $|x|\geq 15r_0$, 
    \begin{align*}
        \left|(s-s_1)(x)\right|, \left|\nabla (s-s_1)(x)\right|, \left|\Delta(s-s_1)(x)\right|, \left|\nabla \Delta (s-s_1)(x)\right|\lesssim \frac{\rho}{|x|^{m}} 
    \end{align*}
for any $m > 0$. Combining these bounds with the estimates for the low-momentum component $s_1$ (established as in \cite{BBCOS}) we obtain (\ref{eq:s-point}), (\ref{eq:int-s}). The rest of the proof is identical to \cite[Lemma 2.2]{BBCOS} (the bounds (\ref{eq:int-s}) play an important role in the proof of (\ref{eq:decay})).

\end{document}